\newtheoremstyle{red}{}{}{\itshape}{}{\color{red!80!black}\bfseries}{.}{ }{}
\definecolor{darkred}{rgb}{0.57,0,0.12}
\newcommand{\nc}{\newcommand}
\renewcommand{\p@subsection}{}
\renewcommand{\p@subsubsection}{}
\def\l@subsubsection#1#2{}
 \def\@hangfrom@section#1#2#3{\@hangfrom{#1#2}#3}%
 \def\@hangfroms@section#1#2{#1{#2}}%
 \def\@alph#1{\ifcase#1\or a\or b\or c\or d\else\@arabic{#1}\fi}
\nc{\ketbra}[2]{\ket{#1}\!\bra{#2}}
\DeclareMathOperator{\Tr}{Tr}
\DeclareMathOperator{\NN}{\mathbb{N}}
\DeclareMathOperator{\supp}{supp}
\DeclareMathOperator{\conv}{conv}
\DeclareMathOperator{\cl}{cl}
\DeclareMathOperator{\rank}{rank}
\renewcommand{\H}{\mathcal{H}}
\newcommand{\norm}[2]{\left\lVert#1\right\rVert_{\,#2}}
\newcommand{\proj}[1]{\ket{#1}\!\bra{#1}}
\newcommand{\lnorm}[2]{\left\lVert#1\right\rVert_{\ell_{#2}}}
\nc{\TT}[1]{T^{({#1})}_\S}
\nc{\TTJ}[1]{T^{({#1})}_\J}
\renewcommand{\S}{\mathcal{S}}
\newcommand{\T}{\mathcal{T}}
\newcommand{\V}{\mathcal{V}}
\newcommand{\M}{\mathcal{M}}
\newcommand{\N}{\mathcal{N}}
\newcommand{\I}{\mathcal{I}}
\newcommand{\Q}{\mathcal{Q}}
\newcommand{\mleq}{\preceq}
\newcommand{\mgeq}{\succeq}
\renewcommand{\*}{\textup{*}}
\newcommand{\<}{\left\langle}
\renewcommand{\>}{\right\rangle}
\renewcommand{\bar}{\;\rule{0pt}{9.5pt}\right|\;}
\newcommand{\lset}{\left\{\left.}
\newcommand{\rset}{\right\}}
\DeclareMathOperator*{\argmin}{arg\,min}
\newcommand{\RR}{\mathbb{R}}
\newcommand{\CC}{\mathbb{C}}
\newcommand{\HH}{\mathbb{H}}
\newcommand{\DD}{\mathbb{D}}
\newcommand{\ve}{\varepsilon}
\newcommand{\cbraket}[1]{\left|\braket{#1}\right|}
\newcommand{\id}{\mathbbm{1}}
\newcommand{\idc}{\text{\rm id}}
\renewcommand{\O}{\mathcal{O}}
\newcommand{\J}{\mathcal{J}}
\newcommand{\GG}[1]{G^{({#1})}}
\newcommand{\TTT}[1]{T^{({#1})}}
\newcommand{\mc}{{\text{\rm mc}}}
\newcommand{\toLOCC}{\xrightarrow{\LOCC}}
\newcommand{\SEPmcdd}{{\text{\rm SEP}_{\mc}\hspace{-1.7ex}\raisebox{0.2ex}{\*\*}}}
\newcommand{\LOCC}{\text{\rm LOCC}}
\newcommand{\OLOCC}{{1\text{\rm{}-LOCC}}}
\newcommand{\LOCCO}{\text{\rm LOCC}}
\newcommand{\PPTO}{{\text{\rm PPT}}}
\newcommand{\SEPO}{\text{\rm SEP}}
\newcommand{\SEPP}{\text{\rm SEPP}}
\nc{\PPTPPR}{\text{\rm PPTP}_+}
\nc{\PPTPR}{\text{\rm PPTP}}
\nc{\ppt}{\text{\rm\sffamily PPT}}
\nc{\pptp}{\text{\rm\sffamily PPT}_{+}}
\newcommand{\PPT}{\ppt}
\newcommand{\PPTP}{{\pptp}}
\newcommand{\PPTR}{{\PPT^{\hspace{0.1em}\prime}}}
\nc{\PPTRP}{{{\PPT^{\hspace{0.1em}\prime}_+}}}
\nc{\PPTRPPR}{{\text{\rm PPTP}^{\hspace{0.1em}\prime}_+}}
\newcommand{\SEP}{\text{\rm\sffamily SEP}}
\newcommand{\PPTPd}{{\text{\rm\sffamily PPT}_{+}\hspace{-0.85ex}\raisebox{0.2ex}{\*}}}
\newcommand{\PPTRc}{{\PPTR^\circ}}
\newcommand{\mnorm}[1]{\norm{#1}{[m]}}
\newcommand{\Qc}{\Q^\circ}
\newcommand{\Qcc}{\Q^{\circ\circ}}
\newcommand{\cc}{{\circ\circ}}
\nc{\wt}{\widetilde}
\nc{\logfloor}[1]{\left\lfloor {#1} \right\rfloor_{\log}}
\nc{\logceil}[1]{\left\lceil {#1} \right\rceil_{\log}}
\renewenvironment{boxed}[1]%
  {\expandafter\ifstrequal\expandafter{#1}{orange}{\begin{tcolorbox}[colback=orange!3,colframe=orange!15]}{\begin{tcolorbox}[colback=white,colframe=gray!10,breakable,enhanced]}}%
  {\end{tcolorbox}}
\newcommand{\FF}[1]{F_{\text{#1}}}
\newtheorem{theorem}{Theorem}
\newtheorem{proposition}[theorem]{Proposition}
\newtheorem{corollary}[theorem]{Corollary}
\newtheorem{definition}[theorem]{Definition}
\newtheorem{lemma}[theorem]{Lemma}
\theoremstyle{red}
\theoremstyle{definition}
\newtheorem*{remark}{Remark}
\let\oldproofname\proofname
\renewcommand{\proofname}{\rm\bf{\oldproofname}}
\let\nc\newcommand
  \nc{\MIO}{{\text{\rm MIO}}}
\nc{\DIO}{{\text{\rm DIO}}}
\nc{\SIO}{{\text{\rm SIO}}}
\nc{\IO}{{\text{\rm IO}}}
\nc{\lsetr}{\left\{\,}
\nc{\rsetr}{\right.\right\}}
\nc{\barr}{\,\rule{0pt}{9.5pt}\left|\;}
\def\l@f@section{%
 \addpenalty{\@secpenalty}%
 \addvspace{0.4em plus\p@}%
}%
\begin{document}

\title{\Large One-shot entanglement distillation\\beyond local operations and classical communication}

\author{Bartosz Regula}
\email{bartosz.regula@gmail.com}
\affiliation{School of Physical and Mathematical Sciences, Nanyang Technological University, 637371, Singapore}
\affiliation{Complexity Institute, Nanyang Technological University, 637335, Singapore}

\author{Kun Fang}
\email{kf383@cam.ac.uk}
\affiliation{Department of Applied Mathematics and Theoretical Physics, University of Cambridge, Cambridge, CB3 0WA, UK}

\author{Xin Wang}
\email{wangxinfelix@gmail.com}
\affiliation{Joint Center for Quantum Information and Computer Science, University of Maryland, College Park, Maryland 20742, USA}

\author{Mile Gu}
\email{mgu@quantumcomplexity.org}
\affiliation{School of Physical and Mathematical Sciences, Nanyang Technological University, 637371, Singapore}
\affiliation{Complexity Institute, Nanyang Technological University, 637335, Singapore}
\affiliation{Centre for Quantum Technologies, National University of Singapore, 3 Science Drive 2, 117543, Singapore}

\begin{abstract}%
We study the task of entanglement distillation in the one-shot setting under different classes of quantum operations which extend the set of local operations and classical communication (LOCC). Establishing a general formalism which allows for a straightforward comparison of their exact achievable performance, we relate the fidelity of distillation under these classes of operations with a family of entanglement monotones, and the rates of distillation with a class of smoothed entropic quantities based on the hypothesis testing relative entropy. We then characterise exactly the one-shot distillable entanglement of several classes of quantum states and reveal many simplifications in their manipulation.\vspace{.5\baselineskip}\\
We show in particular that the $\varepsilon$-error one-shot distillable entanglement of any pure state is the same under all sets of operations ranging from one-way LOCC to separability-preserving operations or operations preserving the set of states with positive partial transpose, and can be computed exactly as a quadratically constrained linear program. We establish similar operational equivalences in the distillation of isotropic and maximally correlated states, reducing the computation of the relevant quantities to linear or semidefinite programs. We also show that all considered sets of operations achieve the same performance in environment-assisted entanglement distillation from any state.
\end{abstract}

\maketitle

\vspace{-\baselineskip}

\section{Introduction}
Quantum entanglement plays a fundamental role in quantum information processing by serving as a resource which underlies many important protocols such as quantum teleportation~\cite{bennett_1993} or superdense coding~\cite{bennett_1992} as well as quantum technological applications such as quantum repeaters and networks~\cite{briegel_1998,kimble_2008-1}. Many such schemes require the use of entanglement in the pure, maximal form of singlets --- the efficient conversion of entanglement into such form, dubbed \textit{entanglement distillation}~\cite{bennett_1996,bennett_1996-3}, is thus of vital importance, and the development of effective theoretical and practical methods to characterise entanglement distillation remains at the forefront of quantum information research~\cite{horodecki_2009}. First studied in the asymptotic regime under the assumption of being able to manipulate an unbounded number of
independent and identically distributed copies of a quantum system~\cite{bennett_1996,bennett_1996-3,rains_1999,horodecki_1998,vedral_1998,devetak_2005}, distillation later attracted a significant amount of research using the tools of non-asymptotic quantum information theory~\cite{lo_2001,morikoshi_2001,martin-delgado_2003,hayashi_2006-1,buscemi_2010-1,buscemi_2013,brandao_2011,datta_2015,fang_2019,rozpedek_2018}. The latter setting is of particular importance due to the physical limitations of near-term quantum technologies, preventing us from being able to manipulate large numbers of quantum systems effectively. In particular, to efficiently exploit entanglement in practical settings it is necessary to obtain a thorough understanding of \textit{one-shot} distillation of entanglement, which takes into account the realistic, non-asymptotic restrictions on state transformations and aims to understand how finite accuracy limits our ability to manipulate entanglement.

The characterisation of entanglement as a resource in practical settings is rooted in the so-called distant labs paradigm~\cite{horodecki_2009}, in which experimenters are free to perform any local operation within their own labs and communicate with each other classically, but any use of quantum communication has an associated resource cost since it requires the use of entanglement. This formalism led to the definition of local operations and classical communication (LOCC) as the set of allowed (``free'') operations, and the operational characterisation of entanglement distillation is concerned  precisely with delimiting the capabilities of LOCC in manipulating entanglement. However, the mathematical description of LOCC is known to have a highly complicated structure~\cite{chitambar_2014}, making many important questions in the resource theory of entanglement either very challenging or downright unanswerable. This motivated the investigation of several relaxations of the class LOCC~\cite{rains_1997,rains_1999-1,rains_2001,eggeling_2001,brandao_2010}, whose simplified description can provide accessible upper bounds on the capabilities of LOCC as well as establish the ultimate limitations on entanglement transformations. Understanding the properties of such relaxations and characterising their precise operational power can therefore shed light on the fundamental structure of entanglement as a resource. 

In this work, we develop a comprehensive framework for the study of one-shot entanglement distillation under several different classes of operations --- separable maps (SEP), separability-preserving maps (SEPP), positive partial transpose (PPT) maps, two types of positive partial transpose--preserving maps, as well as two types of maps based on the so-called Rains set --- many of which have been considered in the literature as a relaxation of LOCC in various contexts, but whose one-shot distillation capabilities in relation to other operations remained unknown. Such extensions are still bound by operationally motivated constraints (e.g., SEPP can never generate entanglement from an unentangled state, just as LOCC), but they can often be understood as allowing for additional resources to be used in entanglement manipulation (e.g, any PPT operation can be stochastically implemented by LOCC if one is additionally given access to a bound entangled state~\cite{cirac_2001}). We compare the performance of these sets of maps in distilling entanglement in the one-shot setting, establishing in particular a general formalism which allows us to describe the distillation under the different operations together in a unified framework. We make use of tools from convex analysis and convex optimisation to relate the rates of distillation with a family of entanglement monotones. By evaluating these monotones for all pure states, isotropic states, and maximally correlated states, we simplify the description of distillation in these cases and show that many of the relaxations coincide in their distillation power, facilitating an efficient quantification of fundamental entanglement properties and revealing many operational similarities in entanglement manipulation under different classes of channels.

Our work improves many earlier results in the characterisation of one-shot entanglement distillation~\cite{buscemi_2010-1,buscemi_2013,datta_2015,wilde_2017-2,fang_2019}, which relied on approximate bounds and were only exact asymptotically; crucially, our formalism allows for a precise description of distillation already at the one-shot level, providing an exact characterisation of the operational power of several classes of operations which extend LOCC and shedding light on the capabilities of LOCC themselves.

\subsection{Summary of results}

We begin our work in Sec.~\ref{sec:prelim} with a brief introduction to a family of entanglement monotones $\TTT{m}_\S$ which will play an important role in the later investigation of entanglement distillation. We characterise their properties and in particular show that the class of monotones generalises two known measures of entanglement --- the robustness of entanglement and a distance-based quantifier based on trace distance --- which will allow us to endow the measures with a direct operational meaning.

Our characterisation of entanglement distillation begins in Section~\ref{sec:dist} where we establish explicit general connections between the quantifiers $\TTT{m}_\S$, quantum hypothesis testing, and one-shot entanglement distillation through convex duality. The methods will form the foundations of the framework developed in this work.

We commence the explicit applications of our framework in Sec.~\ref{sec:dist_ppt} by quantifying the distillation capabilities of several classes of operations based on the set of PPT states, recovering previous results of~\cite{rains_2001,eggeling_2001} as well as describing new classes of operations in this context. The results additionally allow for an understanding of important asymptotic quantities, such as the regularised PPT relative entropy of entanglement or the Rains bound \cite{rains_1999-1,rains_2001}, not just as bounds for distillable entanglement but as quantities with a precise operational meaning of their own. This section serves also as an introduction to the formalism considered in the manuscript and showcases the generality of our methods.

In Section~\ref{sec:sepp}, we consider the class of separability-preserving operations~\cite{brandao_2010,brandao_2011}. By relating the achievable fidelity of distillation with the monotones $\TTT{m}_\S$ again, we establish an operational interpretation of the generalised robustness of entanglement in the context of distillation. Furthermore, we demonstrate a general operational equivalence in the distillation from pure states: all sets of operations, ranging from one-way LOCC to SEPP and PPT-preserving operations, achieve exactly the same performance in one-shot pure-state distillation. Although such an equivalence in the asymptotic regime was already known~\cite{bennett_1996-1,vedral_1998}, the correspondence already in the one-shot setting is remarkable, considering that the one-shot manipulation power of the larger sets of operations is generally much greater than that of LOCC. The results allow us to explicitly relate the fidelity of distillation of any pure state with an analytically computable norm of its Schmidt coefficients and express the computation of the $\ve$-error one-shot distillable entanglement of a pure state as a convex quadratically-constrained linear program.

We continue in Section~\ref{sec:iso} by establishing a similar operational equivalence in the distillation of isotropic states, showing that any class of operations ranging from separable operations to PPT- and separability-preserving operations achieve the same one-shot rates of distillation. Analogously, in Sec.~\ref{sec:maxcorr} we show that separability-preserving operations provide no advantage over PPT operations in the distillation from maximally correlated states, and furthermore, by relating the entanglement monotones with measures of quantum coherence, the achievable rates and fidelities of distillation can be computed efficiently as semidefinite programs.

In Sec.~\ref{sec:assist} we show how our results immediately imply that in the setting of environment-assisted entanglement distillation~\cite{divincenzo_1999,smolin_2005,buscemi_2013}, all considered operations --- from one-way LOCC to PPT- and separability-preserving --- achieve exactly the same performance. We furthermore recover the one-shot characterisation of~\cite{buscemi_2013} in a simplified fashion by employing the formalism introduced herein.

We conclude in Sec.~\ref{sec:zero} with a discussion of zero-error distillation under the different sets of operations, obtaining in particular a single-letter formula for the asymptotic zero-error distillable entanglement under Rains-preserving operations which recovers a bound of Ref.~\cite{wang_2017-3} and endows it with an operational interpretation as a zero-error Rains bound.

\begin{figure}[t]
  \begin{center}
    \includegraphics[width=6cm]{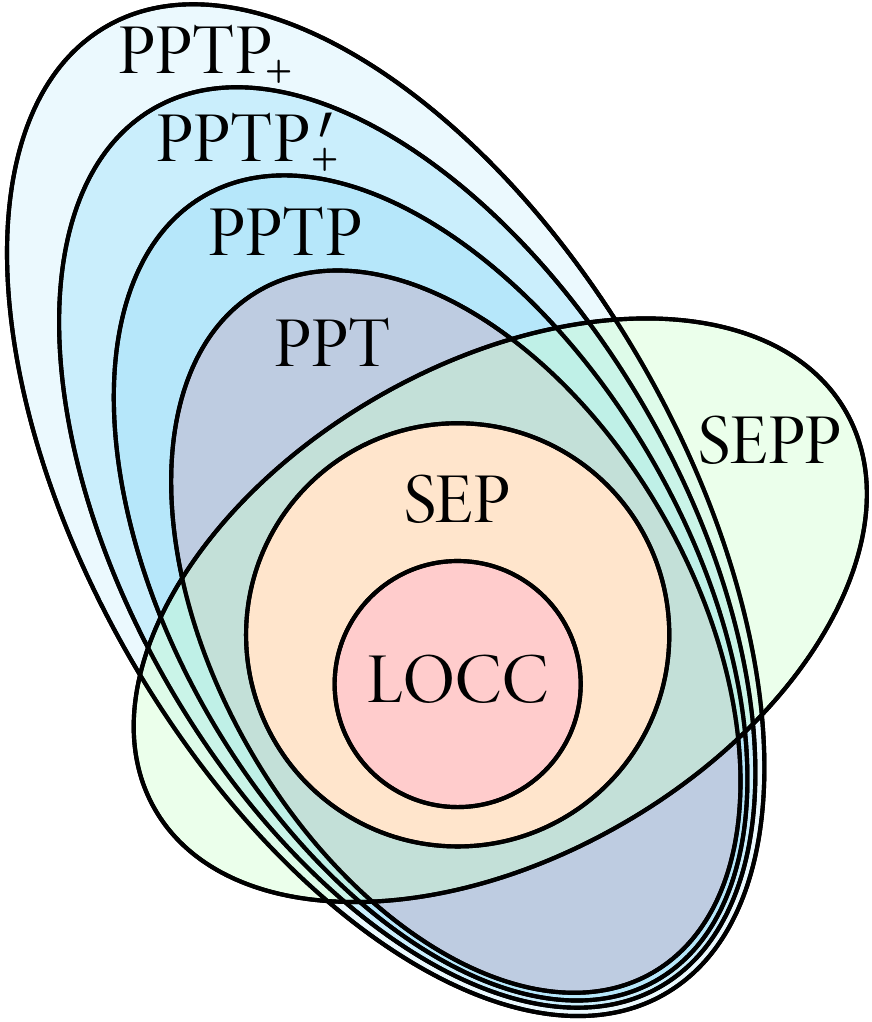}
    \begin{ruledtabular}
\begin{tabular}{ c c}
Class of operations &
Acronym \\
\colrule
Local operations and classical communication  &  LOCC\\
Separable operations & SEP\\
Separability-preserving operations  (see Sec.~\ref{sec:sepp}) & $\SEPP$ \\
 PPT operations (see Def.~\ref{def: PPT opt}) & $\PPTO$\\
 $\PPT$-preserving operations (see Def.~\ref{def:PPTPR}) & $\PPTPR$ \\
 Rains-preserving operations (see Sec.~\ref{sec:rains}) & $\PPTRPPR$ \\
$\PPTP$-preserving operations (see Def.~\ref{def: PPTPPR})& $\PPTPPR$\\
\end{tabular}
\end{ruledtabular}
    \caption{Schematic hierarchy of operations considered in this work. The pictured inclusions between $\PPTPPR$, $\PPTRPPR$, $\PPTPR$, $\PPTO$, $\SEPO$, and $\LOCC$ are all strict; there is no inclusion between $\SEPP$ and any of the sets $\PPTPPR$, $\PPTPR$, $\PPTO$ in general.}
    \label{fig:op-ent}
  \end{center}
  \vspace*{-2.6\baselineskip}
\end{figure}

\section{Preliminaries}\label{sec:prelim}  \vspace*{-.4\baselineskip}

We will work in the real vector space of Hermitian matrices $\HH$ with the Hilbert-Schmidt inner product $\<X,Y\> = \Tr(X^\dagger Y)$. We will denote by $\HH_+$ the cone of positive semidefinite matrices and by $\mgeq$ the inequality with respect to this cone, that is, $X \in \HH_+ \iff X \mgeq 0$. We will denote by $\HH_1$ the set of unit trace Hermitian matrices, and by $\DD = \HH_1 \cap \HH_+$ the set of density matrices. The notation $\ket{x}$ will be used to refer to general vectors in $\CC^d$, with Greek letters such as $\ket\psi$ reserved for normalised vectors corresponding to quantum states; in the latter case we will often refer to the projector $\proj{\psi}$ as $\psi$. We will use $\lnorm{\ket{x}}{p} = (\sum_i |x_i|^p)^{1/p}$ for the $p$-norms in $\CC^d$ and $\norm{X}{p} = \Tr\left( |X|^{p} \right)^{1/p}$ for the Schatten $p$-norms in $\HH$.

For any set $\Q$, we define the dual cone $\Q\* = \lset X \bar \<X, Q\> \geq 0 \; \forall Q \in \Q \rset$ and the polar set $\Q^\circ = \lset X \bar \<X, Q\> \leq 1 \;\forall\; Q \in \Q\rset$. We have in particular $\Q\*\* \coloneqq (\Q\*)\* = \cl \conv \lset \lambda Q \bar \lambda \geq 0,\; Q \in \Q \rset$ (the closure of the conic hull of $\Q$) and $\Q^{\circ\circ} \coloneqq (\Q^\circ)^\circ = \cl \conv ( \Q \cup \{0\} )$ where $\cl$ denotes closure and $\conv$ the convex hull of a set.

\begin{table}[t]
\caption{\label{tab:hierarchy}%
Comparison between the one-shot $\ve$-error entanglement distillation rates achievable under the different sets of operations considered in this work. We use $>$ when the given inequality can be strict for some state, and $\geq$ if --- to the best of our knowledge --- the strictness of the inequality remains an open question.
}
\begin{ruledtabular}
\begin{tabular}{ c c c c }
General states&
Pure states&
Isotropic states&
Max. corr. states\\
\colrule
$\SEPP$ & $\SEPP$ & $\SEPP$ & $\SEPP$ \\
\reflectbox{\rotatebox{90}{$<$}} & \rotatebox{90}{$=$} & \rotatebox{90}{$=$} & \rotatebox{90}{$=$}\\
$\PPTPPR$ & $\PPTPPR$ & $\PPTPPR$ & $\PPTPPR$\\
\reflectbox{\rotatebox{90}{$<$}}  & \rotatebox{90}{$=$} & \rotatebox{90}{$=$} & \rotatebox{90}{$=$}\\
$\PPTRPPR$ & $\PPTRPPR$ & $\PPTRPPR$ & $\PPTRPPR$\\
\reflectbox{\rotatebox{90}{$<$}}  & \rotatebox{90}{$=$} & \rotatebox{90}{$=$} & \rotatebox{90}{$=$}\\
$\PPTPR$ & $\PPTPR$ & $\PPTPR$ & $\PPTPR$\\
\rotatebox{90}{$=$} & \rotatebox{90}{$=$} & \rotatebox{90}{$=$} & \rotatebox{90}{$=$}\\
$\PPTO$ & $\PPTO$ & $\PPTO$ & $\PPTO$\\
\reflectbox{\rotatebox{90}{$<$}}  & \rotatebox{90}{$=$} & \rotatebox{90}{$=$} &~\reflectbox{\rotatebox{90}{$\leq$}} \\
$\SEPO$ & $\SEPO$ & $\SEPO$ & $\SEPO$ \\
\reflectbox{\rotatebox{90}{$\leq$}}  & \rotatebox{90}{$=$} &~\reflectbox{\rotatebox{90}{$\leq$}}  &~\reflectbox{\rotatebox{90}{$\leq$}} \\
(1-)$\LOCC$ & (1-)$\LOCC$ & (1-)$\LOCC$ & (1-)$\LOCC$\\
\end{tabular}
\end{ruledtabular}
\end{table}

All logarithms in this work are base $2$. We will use the shorthand
\begin{equation}\begin{aligned}
  \logfloor{x} \coloneqq \log \left\lfloor 2^x \right\rfloor,%\\
\end{aligned}\end{equation}
and analogously for $\logceil{x}$.

\subsection{A family of entanglement monotones}

The analysis of this work will focus on understanding the achievable fidelity of distillation under different sets of operations, and establishing methods allowing us to relate it with convex optimisation problems which admit an efficient characterisation. To this end, we will introduce a family of entanglement monotones, which we will later explicitly endow with an operational interpretation and show to play a fundamental role in characterizing entanglement distillation.

Consider a bipartite system shared between parties $A$ and $B$, with $d_A$ and $d_B$ denoting the dimensions of the corresponding spaces. Let $d = \min\{d_A,d_B\}$. 
We will consider the following sets of Hermitian matrices:
\begin{equation}\begin{aligned}
  &\ppt = \lset X \bar \Tr(X) = 1,\; X^{T_B} \mgeq 0 \rset\\
  &\pptp = \lset X \bar \Tr(X) = 1,\; X \mgeq 0,\; X^{T_B} \mgeq 0 \rset\\
  &\SEP = \conv \lset \proj{\psi} \bar \ket\psi = \ket{\phi}_A \otimes \ket{\eta}_B \rset\\
\end{aligned}\end{equation}
where $X^{T_B}$ is the partial transpose of $X$. Letting $\S$ denote one of the above sets, a quantifier which found use in measuring the entanglement of quantum states in several contexts is the \textit{generalised robustness}, defined as~\cite{vidal_1999}
\begin{equation}\begin{aligned}
  R^\DD_{\S}(\rho) &= \min \lset \lambda \geq 0 \bar \rho \mleq (\lambda+1) X,\;X \in \S \rset\\
  &= \max \lset \< \rho, W \> \bar -\id \mleq W,\; W \in -\S\* \rset.
\end{aligned}\end{equation}
We then extend this definition to a class of measures
\begin{equation}\begin{aligned}\label{eq:family_Tm}
\TTT{m}_\S (\rho) &\coloneqq \max \lset \<\rho, W \> \bar -\id \mleq W \mleq m\id,\; W \in -\S\* \rset
\end{aligned}\end{equation}
for some parameter $m \in \RR_+$. We note this class of measures has been considered in~\cite{brandao_2005}, but we have found that some of the results concerning the quantification and characterisation of $\TTT{m}_\S$ stated there are in fact incorrect, so we present a self-contained investigation of their basic properties below and in the Appendix.

A useful characterisation of the quantifiers is obtained by considering their dual form, which can be obtained as follows.
\begin{proposition}\label{prop:proj_pos_neg} The measures $\TTT{m}_{\S}$ can be equivalently expressed as
\begin{equation}\begin{aligned}
  \TTT{m}_{\S} (\rho) = \min \lset m \Tr\left(\rho-X\right)_+ + \Tr\left(\rho-X\right)_- \bar X \in \S\*\* \rset,
\end{aligned}\end{equation}
where $(\rho-X)_+$ (respectively, $(\rho-X)_-$) denotes the positive (negative) part of the Hermitian operator $\rho-X$.
\end{proposition}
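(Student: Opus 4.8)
The plan is to prove the identity by Lagrangian duality: I would realise $\TTT{m}_{\S}(\rho)$ as a saddle-point value and then interchange the order of optimisation using Sion's minimax theorem. First I would absorb the conic constraint $W \in -\S\*$ into the objective. Since $\S\*\*$ is the closed conic hull of $\S$ (so $\S \subseteq \S\*\*$ and $0 \in \S\*\*$), for any Hermitian $W$ one has
\begin{equation}\begin{aligned}
  \sup_{X \in \S\*\*} \<X,W\> = \begin{cases} 0 & W \in -\S\*,\\ +\infty & \text{otherwise,}\end{cases}
\end{aligned}\end{equation}
because $\<X,W\> \leq 0$ for all $X$ in the cone holds exactly when $\<S,W\> \leq 0$ for all $S \in \S$, i.e.\ when $-W \in \S\*$, whereas any violation can be scaled to $+\infty$. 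Subtracting this indicator from $\<\rho,W\>$ rewrites the defining maximisation as
\begin{equation}\begin{aligned}
  \TTT{m}_{\S}(\rho) = \max_{-\id \mleq W \mleq m\id}\; \inf_{X \in \S\*\*}\; \<\rho-X,\,W\>.
\end{aligned}\end{equation}

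Next I would exchange the maximisation and the infimum. The payoff $(W,X) \mapsto \<\rho-X,W\>$ is bilinear, hence concave and upper semicontinuous in $W$ and convex and lower semicontinuous in $X$; the feasible set $\{W : -\id \mleq W \mleq m\id\}$ is convex and compact, and $\S\*\*$ is convex. Sion's minimax theorem therefore applies and yields
\begin{equation}\begin{aligned}
  \TTT{m}_{\S}(\rho) = \inf_{X \in \S\*\*}\; \max_{-\id \mleq W \mleq m\id}\; \<\rho-X,\,W\>.
\end{aligned}\end{equation}
The key point is that only the $W$-set needs to be compact, which sidesteps the unboundedness of the cone $\S\*\*$ and avoids having to verify a Slater-type constraint qualification (as would be needed in a direct conic-duality derivation).

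It then remains to evaluate the inner maximisation for fixed $X$. Writing $A = \rho - X = A_+ - A_-$ for the decomposition into positive and negative parts, the constraints $W \mleq m\id$ and $-\id \mleq W$ give $\Tr(A_+ W) \leq m\Tr A_+$ and $-\Tr(A_- W) \leq \Tr A_-$, whence $\<A,W\> \leq m\Tr(A)_+ + \Tr(A)_-$; equality is attained at $W = m\Pi_+ - \Pi_-$, with $\Pi_\pm$ the projectors onto the supports of $A_\pm$, which has eigenvalues in $\{m,0,-1\}$ and so is feasible. Substituting $A = \rho - X$ gives the claimed objective $m\Tr(\rho-X)_+ + \Tr(\rho-X)_-$, and the outer infimum is attained because this objective is continuous and coercive on $\S\*\*$, using $m\Tr(\rho-X)_+ + \Tr(\rho-X)_- \geq \min\{m,1\}\,\norm{\rho-X}{1}$ together with $\norm{\rho-X}{1} \geq \norm{X}{1} - \norm{\rho}{1}$.

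The main obstacle is the justification of the minimax exchange: I must confirm the (here trivial) semicontinuity and convexity hypotheses of Sion's theorem and, crucially, arrange matters so that compactness is required only of the box $\{-\id \mleq W \mleq m\id\}$ rather than of the unbounded cone $\S\*\*$. Once the duality is set up this way, the two supporting computations — the indicator-function reformulation and the explicit inner maximisation — are routine.
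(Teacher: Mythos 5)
Your proof is correct, and it reaches the paper's formula by a genuinely different mechanism. The paper's proof of this proposition invokes strong Lagrange duality twice: first to pass from the defining maximisation of $\TTT{m}_{\S}$ over $W$ to $\min \lset m \Tr A + \Tr B \bar \rho - X = A - B,\; A,B \mgeq 0,\; X \in \S\*\* \rset$, and then again, for fixed $X$, to identify the optimal $(A,B)$ with the positive and negative parts of $\rho - X$ by exhibiting the feasible dual point $W = m\{\rho \mgeq X\} - \{\rho \mleq X\}$. That route is shorter but leans on a constraint qualification for the conic program (Slater's condition --- which does hold here, e.g.\ via $W = -\tfrac12 \id$ when $m>0$, but is left unverified) and takes attainment of the minimum for granted. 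You instead encode the constraint $W \in -\S\*$ as the support function of the cone $\S\*\*$, swap the resulting sup-inf by Sion's minimax theorem using only compactness of the box $\{-\id \mleq W \mleq m\id\}$ and convexity of $\S\*\*$, and then evaluate the inner maximisation directly; your optimiser $m\Pi_+ - \Pi_-$ is essentially the same operator the paper exhibits, so the two arguments coincide in their final computation. What your version buys is self-containedness: no constraint qualification is needed, and your coercivity argument explicitly delivers attainment of the minimum (note only that the bound $\min\{m,1\}\norm{\rho - X}{1}$ degenerates at $m=0$, where the claim is anyway trivial with $X=0$ optimal on both sides). It is also stylistically consonant with the paper, whose proof of Prop.~\ref{prop:hyp_gauge} performs exactly this kind of Sion-based exchange with compactness required only on the $W$ side.
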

The proof follows well-known methods in matrix analysis and we include it in the Appendix for completeness.

In particular, for $m=1$ the measures take the remarkably simple form $\TTT{1}_{\S} (\rho) = \min_{X \in \S\*\*} \norm{\rho - X}{1}$. This quantity, considered first in the resource theory of coherence as the modified trace distance~\cite{yu_2016}, generalises the commonly employed trace distance measure $\min_{\sigma \in \S} \norm{\rho-\sigma}{1}$. The reason why $\TTT{1}_\S$ is a more suitable measure of entanglement than the trace distance itself is the fact that, contrary to $\TTT{1}_\S$, the trace distance does not satisfy strong monotonicity under LOCC~\cite{yu_2016,qiao_2018} (i.e. the requirement that a measure $M$ obeys $M(\rho) \geq \sum_i p_i M\left(\Lambda_i(\rho)\right)$ for any probabilistic protocol which applies an LOCC transformation $\Lambda_i$ to $\rho$ with probability $p_i$), which is often considered as one of the basic requirements that a measure of entanglement should satisfy~\cite{vidal_2000}. This demonstrates a case where it becomes necessary to consider the distance with respect to the unnormalised cone $\S\*\*$ rather than the set $\S$ in order to ensure strong monotonicity.

Another interesting case is $m=d-1$, for which we obtain the following.

\begin{proposition}\label{prop:rob_for_d-1}
For any $\S \in \{\PPT, \PPTP, \SEP\}$ it holds that
\begin{align}
  \TTT{d-1}_\S(\rho) &= R^\DD_\S(\rho).
\end{align}
\end{proposition}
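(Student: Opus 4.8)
The plan is to prove the two inequalities $\TTT{d-1}_\S(\rho) \le \Rg{\S}(\rho)$ and $\TTT{d-1}_\S(\rho)\ge\Rg{\S}(\rho)$ separately, with all of the content concentrated in the second. The first is immediate: comparing the definition~\eqref{eq:family_Tm} of $\TTT{d-1}_\S$ with the max form of $\Rg{\S}$, the only difference is the additional constraint $W\mleq(d-1)\id$ imposed in $\TTT{d-1}_\S$, so its maximisation runs over a subset of the feasible set of $\Rg{\S}$ and can only be smaller. The whole task therefore reduces to showing that this upper bound is in fact redundant, i.e. that any $W$ satisfying $-\id\mleq W$ and $W\in-\S\*$ automatically obeys $W\mleq(d-1)\id$; once this is established the two feasible sets coincide and equality follows.

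First I would reduce all three cases to a single statement about product vectors. In every case $\SEP\subseteq\S$ (trivially for $\S=\SEP$, and because separable states are PPT for $\S\in\{\PPT,\PPTP\}$), so the order-reversing property of the dual cone gives $\S\*\subseteq\SEP\*$, and hence any $W\in-\S\*$ also lies in $-\SEP\*$. This means $\<W,\proj\psi\>=\bra\psi W\ket\psi\le0$ for every product vector $\ket\psi=\ket\phi_A\otimes\ket\eta_B$, i.e. $-W$ is block positive. Introducing $P\coloneqq\id+W$, the constraint $-\id\mleq W$ becomes $P\mgeq0$, while block positivity gives $\bra\psi P\ket\psi=1+\bra\psi W\ket\psi\le1$ for all product $\ket\psi$.

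The crux is then an eigenvalue bound: any $P\mgeq0$ whose overlap with every product vector is at most $1$ satisfies $\lambda_{\max}(P)\le d$. To prove it, let $\ket v$ be a top eigenvector of $P$ with Schmidt decomposition $\ket v=\sum_k\sqrt{p_k}\,\ket{a_k}\ket{b_k}$; since the Schmidt rank never exceeds $d=\min\{d_A,d_B\}$, the largest weight obeys $\max_k p_k\ge1/d$. Taking the product vector $\ket\psi=\ket{a_{k^*}}\ket{b_{k^*}}$ for $k^*=\argmax_k p_k$ and discarding the nonnegative contributions of the remaining eigenvalues of $P$ yields $\bra\psi P\ket\psi\ge\lambda_{\max}(P)\,\cbraket{\psi|v}^2=\lambda_{\max}(P)\,p_{k^*}\ge\lambda_{\max}(P)/d$, which together with $\bra\psi P\ket\psi\le1$ gives $\lambda_{\max}(P)\le d$. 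Hence $\lambda_{\max}(W)=\lambda_{\max}(P)-1\le d-1$, i.e. $W\mleq(d-1)\id$, completing the argument. The main obstacle is precisely this eigenvalue lemma together with the realisation that it is exactly tight --- the maximally entangled state admits a witness saturating $\lambda_{\max}(W)=d-1$ --- which is what pins the critical parameter to $m=d-1$ and lets the same computation cover $\SEP$, $\PPTP$, and $\PPT$ uniformly through the single inclusion $\SEP\subseteq\S$.
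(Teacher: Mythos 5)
Your proof is correct, and its skeleton matches the paper's: both arguments reduce all three cases to the single inclusion $\SEP \subseteq \S$ (hence $\S\* \subseteq \SEP\*$), and both then show that the cap $W \mleq (d-1)\id$ is automatically satisfied by every feasible point of $R^\DD_\S$, so that the feasible sets of the two optimisation problems coincide. The difference lies in how that cap is obtained. The paper imports it as a citation: for block-positive $-W$ one has $\lambda_{\min}(-W) \geq (1-d)\,\lambda_{\max}(-W)$ \cite[Cor.~5.5]{johnston_2010}, which combined with $W \mgeq -\id$ gives $\lambda_{\max}(W) \leq d-1$. You instead prove the needed bound from scratch: after the shift $P = \id + W$, block positivity of $-W$ becomes $\bra{\psi} P \ket{\psi} \leq 1$ on product vectors, and your Schmidt-decomposition argument (the top eigenvector $\ket{v}$ of $P \mgeq 0$ has some Schmidt weight $p_{k^*} \geq 1/d$, so the corresponding product vector satisfies $1 \geq \bra{\psi}P\ket{\psi} \geq \lambda_{\max}(P)\,\cbraket{\psi|v}^2 \geq \lambda_{\max}(P)/d$) yields $\lambda_{\max}(P) \leq d$, i.e.\ $\lambda_{\max}(W) \leq d-1$. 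Under the shift $P = \id + W$ this is exactly the statement the paper cites, specialised to the normalisation $W \mgeq -\id$; your argument is in effect a self-contained reproof of that corollary in the form needed here. What your route buys is that the proposition no longer rests on an external spectral lemma, plus an explicit explanation of why $m = d-1$ is the tight threshold (your witness saturating $\lambda_{\max}(W) = d-1$ at the maximally entangled state); what the paper's citation buys is brevity.
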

\begin{proof}
Let $-W \in \SEP\*$, and notice that $\SEP \subseteq \PPTP \subseteq \PPT \Rightarrow \PPT\* \subseteq \PPTPd \subseteq \SEP\*$. Letting $\lambda_{\min}$ denote the smallest and $\lambda_{\max}$ the largest eigenvalue of a given matrix, we now use the property that $-W \in \SEP\* \Rightarrow \lambda_{\min}(-W) \geq (1 - d)\, \lambda_{\max}(-W)$~\cite[Cor. 5.5]{johnston_2010} together with the constraint $W \mgeq -\id$ to obtain
\begin{equation}\begin{aligned}
  &\lambda_{\max}(W) \leq d-1.
\end{aligned}\end{equation}
It follows that the feasible sets for $R^\DD_\S$ and $\TTT{d-1}_\S$ are equal and so the problems are equivalent.\end{proof}
From the above two Propositions, we have that the family of measures $\TTT{m}_\S$ can be understood as interpolating between the robustness of $\S$ for $m=d-1$, the modified trace distance for $m=1$, and the trivial value of $0$ for $m=0$. It is furthermore easy to see that $\TTT{m}_\S (\rho) = \TTT{d-1}_\S (\rho) $ for any $m > d-1$.

\subsection{Generalising $\TT{m}$ to arbitrary sets}

\begin{sloppypar}In the operational characterisation of entanglement distillation, it will be necessary to consider also generalisations of the above measures beyond sets of normalised (unit trace) Hermitian operators. To allow for this, we will now consider arbitrary compact sets of Hermitian operators $\Q$ and define the quantity
\begin{equation}\begin{aligned}
  \GG{m}_\Q(\rho) \coloneqq \sup \lsetr \< \rho, W \> \barr 0 \mleq W \mleq \id,\; W \in \frac{1}{m} \Q^\circ \rsetr.
\end{aligned}\end{equation}
It is straightforward to see that if $\Q$ is a set of unit trace operators, then $\GG{m}_\Q(\rho)$ is equal to $\frac{1}{m}\left(\TTT{m-1}_\Q(\rho)+1\right)$, although this relation does not hold in general.
\end{sloppypar}

To obtain a general dual formulation of $\GG{m}_\Q$, we will employ the formalism of gauge functions~\cite{rockafellar_1970,regula_2018}. The convex gauge function of a set $\Q$ is defined as~\cite{rockafellar_1970}
\begin{equation}\begin{aligned}
  \Gamma_\Q (\rho) &= \inf \lset \lambda \geq 0 \bar \rho \in \lambda \conv(\Q) \rset \\ &= \sup \lset \< \rho, W \> \bar W \in \Q^\circ \rset,
\end{aligned}\end{equation}
and can be thought of as an extension of the concept of a norm associated with a set --- indeed, all norms are gauge functions, but the latter are more general. One can further notice that
\begin{equation}\begin{aligned}\label{eq:gauge_dual}
  \Gamma_{\Q^\circ} (\rho) &= \inf \lset \lambda \geq 0 \bar \rho \in \lambda \Q^\circ \rset \\ &= \sup \lset \< \rho, W \> \bar W \in \Q^{\circ\circ} \rset
\end{aligned}\end{equation}
where the first equality follows because the set $\Q^\circ$ is always convex. We will take $\inf \varnothing = - \sup \varnothing = \infty$ and note that when the set $\Q$ is compact, the infima and suprema in the definitions of the gauge functions are attained as long as they are finite. We then have the following.

\begin{proposition}\label{prop:G_primal}
  For any compact set $\Q \subseteq \HH$, we have that
  \begin{equation}\begin{aligned}
    \GG{m}_\Q(\rho) = \inf_{Z \in \HH}\; \Tr(\rho - Z)_+ + \frac{1}{m} \Gamma_\Q(Z).
  \end{aligned}\end{equation}
\end{proposition}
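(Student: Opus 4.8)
The plan is to establish the identity by convex (minimax) duality, starting from the right-hand side and reducing it back to the defining problem of $\GG{m}_\Q$. The two ingredients I would rely on are the variational formula for the positive part, $\Tr(A)_+ = \max\lset \<A,W\> \bar 0 \mleq W \mleq \id\rset$, and the standard gauge-duality fact that the Fenchel conjugate of a support function is the indicator of the associated closed convex set \cite{rockafellar_1970}.

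First I would use the variational formula to rewrite the positive-part term. Since the gauge contribution does not depend on $W$, for every $Z$ the bracket becomes a maximum over $W$, so the right-hand side reads
\[ \inf_{Z \in \HH}\ \max_{0 \mleq W \mleq \id}\ \left[ \<\rho - Z, W\> + \tfrac1m \Gamma_\Q(Z)\right]. \]
Call the bracketed function $L(W,Z)$. For fixed $Z$ it is affine (hence concave and upper semicontinuous) in $W$, and for fixed $W$ it is convex and lower semicontinuous in $Z$, because $\Gamma_\Q = \sup\lset \<\cdot, U\>\bar U \in \Q^\circ\rset$ is a support function and thus closed convex. As the maximization runs over the compact convex set $\{0 \mleq W \mleq \id\}$, Sion's minimax theorem allows me to exchange the infimum and the maximum, yielding
\[ \max_{0 \mleq W \mleq \id}\ \inf_{Z \in \HH}\ \left[\<\rho, W\> - \<Z, W\> + \tfrac1m \Gamma_\Q(Z)\right]. \]

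Next I evaluate the inner infimum. Substituting $Z = mY$ and using positive homogeneity, $\tfrac1m\Gamma_\Q(mY) = \Gamma_\Q(Y)$, gives
\[ \inf_{Z\in\HH}\left[\tfrac1m\Gamma_\Q(Z) - \<Z,W\>\right] = -\sup_{Y\in\HH}\left[\<Y, mW\> - \Gamma_\Q(Y)\right] = -\Gamma_\Q^{\ast}(mW). \]
Since $\Gamma_\Q$ is the support function of the closed convex set $\Q^\circ$, its conjugate $\Gamma_\Q^{\ast}$ is the indicator of $\Q^\circ$: it equals $0$ when $mW\in\Q^\circ$ and $+\infty$ otherwise. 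Thus the inner infimum restricts the feasible set to $\{W : 0 \mleq W \mleq \id,\ mW \in \Q^\circ\}$, and the whole expression collapses to
\[ \max \lset \<\rho, W\> \bar 0 \mleq W \mleq \id,\ W \in \tfrac1m\Q^\circ \rset = \GG{m}_\Q(\rho), \]
which is the claim (the supremum in the definition is attained because this feasible set is compact).

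The hard part will be the rigorous justification of the minimax exchange and the exclusion of a duality gap coming from $\Gamma_\Q$ taking the value $+\infty$. This is exactly where the compactness of $\Q$ enters: it guarantees that $\Gamma_\Q$ is a proper, closed, convex, positively homogeneous function, finite and vanishing at the origin (so never identically $+\infty$), and that $\Q^\circ$ is closed and contains $0$, so that $W=0$ is always feasible and the primal value is finite and nonnegative. With these regularity properties, the affine-in-$W$, convex-lsc-in-$Z$ structure over a compact $W$-domain meets the hypotheses of Sion's theorem, and the conjugate-of-a-gauge identity is the standard closed-convex-set case of \cite{rockafellar_1970}. A minor point to record separately is the variational formula for $\Tr(\cdot)_+$, which follows by diagonalising $\rho-Z$ and taking $W$ to be the spectral projector onto its nonnegative eigenspace.
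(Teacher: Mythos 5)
Your proof is correct, but it takes a genuinely different route from the paper's. The paper works from the definition (left-hand side) outward: it observes that the feasibility conditions $0 \mleq W \mleq \id$, $W \in \frac1m \Q^\circ$ mean $W$ lies in the polar of the union $(-\HH_+) \cup \DD \cup m\conv(\Q)$, so that $\GG{m}_\Q$ is the gauge associated with this union, and then invokes Rockafellar's Theorem 16.4 (the gauge of a union is the infimal convolution of the individual gauges, exact under the stated compactness/closedness hypotheses); evaluating $\Gamma_{-\HH_+}$, $\Gamma_\DD$, and $\Gamma_{m\Q} = \frac1m\Gamma_\Q$ then yields the claim in one structural step. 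You instead start from the right-hand side, linearise $\Tr(\cdot)_+$ over the operator interval $\{W : 0 \mleq W \mleq \id\}$, exchange $\inf$ and $\max$ by Sion's theorem, and finish with the identity $\Gamma_\Q^{\ast} = \delta_{\Q^\circ}$ (conjugate of a support function). Both are convex-duality arguments, but yours is more self-contained — it needs only Sion plus standard Fenchel conjugacy rather than the inf-convolution theorem — and it makes transparent that compactness of $\Q$ is barely used: the compact set in your minimax is the operator interval, and $\Q^\circ$ is closed, convex, and contains $0$ for \emph{any} $\Q$, so your argument extends essentially verbatim beyond compact sets (compactness mattering only for attainment and for the two expressions defining $\Gamma_\Q$ to coincide). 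It also has the aesthetic advantage of using the same minimax technique the paper itself employs for Prop.~\ref{prop:hyp_gauge}, so the two results become instances of one method. The one point you should nail down (and correctly flag) is that Sion's theorem is stated for real-valued functions: restrict the minimisation to the effective domain $\dom\Gamma_\Q$, a nonempty convex cone containing $0$ on which $L(W,\cdot)$ is real-valued, convex, and lower semicontinuous relative to that domain; since $L = +\infty$ off this domain, the unrestricted infimum is unchanged, and the exchange goes through without a duality gap.
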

\begin{proof}
The definition of $\GG{m}_\Q$ imposes that $W \in (-\HH_+)^\circ \cap \DD^\circ \cap (m \Q)^\circ = \left[(-\HH_+) \cup \DD \cup m \conv(\Q)\right]^\circ$, from which we get $\GG{m}_\Q = \Gamma_{(-\HH_+) \cup \DD \cup m \conv(\Q)}$. Since $\DD$ and $\Q$ are compact and $\HH_+$ is closed, we get~\cite[Thm. 16.4]{rockafellar_1970}
\begin{align}
  &\GG{m}_\Q(\rho) \nonumber\\
  &\quad = \inf \lset \Gamma_{-\HH_+}(X) + \Gamma_{\DD} (Y) + \Gamma_{m\Q} (Z) \bar \rho = X + Y + Z\rset\nonumber\\
  &\quad= \inf \lset \Gamma_{\DD} (Y) + \Gamma_{m\Q} (Z) \bar \rho = X + Y + Z,\; X \in -\HH_+ \rset\nonumber\\
  &\quad= \inf \lset \Tr Y + \frac{1}{m} \Gamma_{\Q} (Z) \bar \rho - Z \mleq Y,\; Y \in \HH_+ \rset\\
  &\quad= \inf \Tr (\rho - Z)_+ + \frac{1}{m} \Gamma_{\Q} (Z)\nonumber%\tag*{\qedhere}
\end{align}
where we have used that $\Gamma_{-\HH_+}(X) = 0$ if $X \in -\HH_+$ and $\infty$ otherwise, and $\Gamma_\DD(Y) = \Tr Y$ for any positive semidefinite $Y$ and $\infty$ if $Y$ is not positive  semidefinite.
\end{proof}

\begin{remark}The above formula effectively constraints the optimisation to be over $Z \in \Q\*\*$, since for any $Z \notin \Q\*\*$, we have $\Gamma_\Q(Z) = \infty$. In particular, if $\Q$ consists only of trace-one matrices, we can equivalently write
\begin{equation}\begin{aligned}
  \GG{m}_\Q(\rho) = \inf_{X \in \Q\*\*}\; \Tr(\rho - X)_+ + \frac{1}{m} \Tr(X)
\end{aligned}\end{equation}
which reduces to the form in Prop. \ref{prop:proj_pos_neg}.
\end{remark}

%%%%%%%%%%%%%%%%%%%%%%%%%%%%%%%%%%%%%%%%%%%%%%%%%%%%%%%%%%%%%%%%%%%%%%%%%%%%%%%%%%%%%%%%%%%%%%%%%%

\section{One-shot entanglement distillation}\label{sec:dist}

Denoting by $\Psi_m$ the maximally entangled state $\ket{\Psi_m} = \sum_{i=1}^{m} \frac{1}{\sqrt{m}} \ket{ii}$, we consider the task of distilling the state $\Psi_m$ under a given class of completely positive trace-preserving (CPTP) maps $\O$. The fidelity of distillation under $\O$ is defined by
  \begin{align}
       F_{\O}(\rho,m) \coloneq \sup_{\Lambda \in \O} \< \Lambda(\rho), \Psi_m \>.
     \end{align}
Here, without loss of generality we constrain ourselves to operations $\Lambda \in \O$ whose output dimension matches the dimension of $\Psi_m$ in order to make the inner product well-defined; more general cases can be considered by suitably embedding $\Psi_m$ or $\Lambda(\rho)$ in a larger space. 
The one-shot $\ve$-error distillable entanglement is then defined as the maximum size of $\Psi_m$ which can be obtained with the given class of operations within an error tolerance of $\ve$, that is,
\begin{equation}
E_{d,\O}^{(1),\ve}(\rho) \coloneqq \log \max \lset m \in \mathbb N \bar F_\O(\rho,m) \geq 1- \ve \rset.
\label{one shot distillable ent}
\end{equation}
In the asymptotic i.i.d. limit, distillable entanglement can then be expressed as
\begin{align}
  E^\infty_{d,\O}(\rho) = \limsup_{\ve \to 0} \limsup_{n \to \infty} \frac{1}{n} E_{d,\O}^{(1),\ve}(\rho^{\otimes n}).
\end{align}

To begin the general description of one-shot distillation, we will make explicit the connection between the quantifiers discussed earlier and distillation rates. The precise link will be established through the \textit{hypothesis testing relative entropy}~\cite{buscemi_2010,wang_2012,tomamichel_2013}, defined as
 \begin{equation}\begin{aligned}
\label{hypothesis testing definition}
D_H^\ve(\rho||X) \coloneqq -\log\min \{  \< M, X\> \;&|\; 0\le M\le \id,\\
&1- \< M, \rho \> \le\ve \}.
\end{aligned}\end{equation}
where we have extended the standard definition (limited to positive semidefinite $X$) by taking $\log(x) = -\infty \; \forall x \leq 0$. This quantity characterises the fundamental task of quantum hypothesis testing \cite{hayashi_2016,hayashi_2017}, where one is interested in distinguishing between two quantum states --- $\rho$ and $\sigma$ --- by performing a test measurement $\{M, \id - M\}$ where $0 \mleq M \mleq \id$. The probability of incorrectly accepting state $\sigma$ as true (type-I error) is given by $\<\id - M, \rho\>$, and the probability of incorrectly accepting state $\rho$ as true (type-II error) is given by $\< M, \sigma \>$. The entropy $D_H^\ve(\rho||\sigma)$ then quantifies the minimum type-II error while constraining the type-I error to be no greater than $\ve$. We note that it is not clear if such an operational understanding of $D^\ve_H$ can be obtained when $X$ is not a positive semidefinite operator, but we will find it useful to consider the quantity $D^\ve_H$ regardless. Furthermore, we remark that for any operator $X$, $D_H^\ve(\rho||X)$ is efficiently computable as a semidefinite program.

Let us first note a general correspondence between the hypothesis testing relative entropy and gauge functions, showing that $D_H^\ve$ minimised over a set of operators gives a suitably ``smoothed'' gauge function.
\begin{proposition}\label{prop:hyp_gauge}
Let $\Q$ be a closed set of Hermitian operators. Then
\begin{equation}\begin{aligned}
\inf_{X \in \conv(\Q)} D^{\ve}_H(\rho\|X) = - \log \inf_{\substack{\< \rho, W \> \geq 1 - \ve\\0 \mleq W \mleq \id}} \Gamma_{\Q^\circ}(W).
\end{aligned}\end{equation}
\end{proposition}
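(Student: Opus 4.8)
The plan is to dualise $D_H^\ve$ pointwise and then exchange an order of optimisation by a minimax argument. Write $\M \coloneqq \lset M \bar 0 \mleq M \mleq \id,\; \<\rho, M\> \geq 1-\ve \rset$ for the feasible set of tests, which is convex, compact, and nonempty (it contains $M=\id$). For fixed $X$, the definition~\eqref{hypothesis testing definition} reads $D_H^\ve(\rho\|X) = -\log \min_{M\in\M}\<M, X\>$, the minimum being attained by compactness of $\M$. With the stated convention, $x\mapsto -\log x$ is an extended-real-valued, continuous, decreasing map, so pushing the infimum over $X\in\conv(\Q)$ through it gives
\begin{equation}\begin{aligned}
\inf_{X\in\conv(\Q)} D_H^\ve(\rho\|X) = -\log \sup_{X\in\conv(\Q)}\ \min_{M\in\M} \<M, X\>.
\end{aligned}\end{equation}

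Next I would apply a minimax theorem to the bilinear functional $(X,M)\mapsto\<M,X\>$ on $\conv(\Q)\times\M$. Since $\M$ is convex and compact and $\conv(\Q)$ is convex, and the functional is linear (hence quasiconcave–upper semicontinuous in $X$ and quasiconvex–lower semicontinuous in $M$), Sion's minimax theorem applies and permits the exchange
\begin{equation}\begin{aligned}
\sup_{X\in\conv(\Q)}\ \min_{M\in\M} \<M, X\> = \min_{M\in\M}\ \sup_{X\in\conv(\Q)} \<M, X\>,
\end{aligned}\end{equation}
the outer minimum being attained because $M\mapsto \sup_{X}\<M,X\>$ is lower semicontinuous on the compact set $\M$. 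This swap is the crux of the argument, and it is exactly the compactness of the test set that makes it available.

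It remains to identify the inner supremum with the gauge $\Gamma_{\Q^\circ}$ and to reconcile the sign conventions. For any $M\in\M$ we have $M\mgeq 0$, while $\Q^{\circ\circ} = \cl\conv(\Q\cup\{0\})$ contains the origin; hence by~\eqref{eq:gauge_dual},
\begin{equation}\begin{aligned}
\Gamma_{\Q^\circ}(M) = \sup_{Q\in\Q^{\circ\circ}} \<M, Q\> = \max\left\{0,\ \sup_{X\in\conv(\Q)} \<M, X\>\right\},
\end{aligned}\end{equation}
since passing to the convex hull with $\{0\}$ and to the closure leaves a linear functional's supremum unchanged apart from enforcing non-negativity through the point $Q=0$. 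Writing $B \coloneqq \min_{M\in\M}\sup_{X}\<M,X\>$ and $A \coloneqq \min_{M\in\M}\Gamma_{\Q^\circ}(M)$, a short case check gives $A = \max\{0,B\}$: if $B\geq 0$ then every $\sup_X\<M,X\>\geq 0$, so $\Gamma_{\Q^\circ}(M)$ coincides with it and $A=B$, whereas if $B<0$ some $M$ has $\sup_X\<M,X\><0$, forcing $A=0$. Under the convention $\log x = -\infty$ for $x\leq 0$ this yields $-\log A = -\log B$ in every case (they agree when $B>0$, and both equal $+\infty$ when $B\leq 0$). Chaining the three displayed equalities then gives $\inf_{X\in\conv(\Q)} D_H^\ve(\rho\|X) = -\log A = -\log\inf_{W\in\M}\Gamma_{\Q^\circ}(W)$, which is the claim.

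The main obstacle is the minimax exchange: one must verify the hypotheses of the minimax theorem despite $\conv(\Q)$ being merely convex (and possibly unbounded, so that $\sup_X\<M,X\>$ may take the value $+\infty$), which is precisely where compactness of the measurement set $\M$ is indispensable. The remaining delicacy is purely bookkeeping — tracking the $-\log$ convention and the role of $0\in\Q^{\circ\circ}$, which is what converts a possibly negative inner optimum into the gauge $\Gamma_{\Q^\circ}$.
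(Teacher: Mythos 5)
Your proof is correct and follows essentially the same route as the paper's: both rest on the dual form of $D^\ve_H$ as a minimisation of a linear functional over the compact test set, Sion's minimax theorem to exchange the two optimisations, and the polar-duality expression $\Gamma_{\Q^\circ}(W) = \sup_{X \in \Q^{\circ\circ}} \< X, W \>$. The only difference is bookkeeping: the paper performs the minimax over $\Q^{\circ\circ}$ and afterwards replaces it by $\conv(\Q)$ at the level of $D^\ve_H$, whereas you run the minimax over $\conv(\Q)$ directly and absorb the point $0 \in \Q^{\circ\circ}$ through the $\max\{0,\cdot\}$ case analysis --- an equally valid (and arguably more explicit) treatment of the same issue.
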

\begin{proof}\begin{sloppypar}
We have
\begin{equation}\begin{aligned}
- \log \inf_{\substack{\< \rho, W \> \geq 1 - \ve\\0 \mleq W \mleq \id}} \Gamma_{\Q^\circ}(W)  &= - \log \inf_{\substack{\< \rho, W \> \geq 1 - \ve\\0 \mleq W \mleq \id}} \;\sup_{X \in \Qcc} \< X, W \>\\
   &= - \log \sup_{X \in \Qcc} \inf_{\substack{\< \rho, W \> \geq 1 - \ve\\0 \mleq W \mleq \id}} \< X, W \>\\
   &= - \log \sup_{X \in \conv(\Q)} \inf_{\substack{\< \rho, W \> \geq 1 - \ve\\0 \mleq W \mleq \id}} \< X, W \> \\
   &= \inf_{X\in\conv(\Q)} - \log \inf_{\substack{\< \rho, W \> \geq 1 - \ve\\0 \mleq W \mleq \id}} \< X, W \> \\
   &= \inf_{X \in \conv(\Q)} D^{\ve}_H(\rho\|X)
\end{aligned}\end{equation}
where the second equality follows by Sion's minimax theorem, since the sets $\big\{ W \;\big|\; \< \rho, W \> \geq 1 - \ve,\; 0 \mleq W \mleq \id \big\}$ and $\Qcc$ are both convex and the former is compact. We have replaced the optimisation over $\Q^\cc$ with an optimisation over $\conv(\Q)$ without loss of generality, since the problem has the same optimal value in both cases --- either there exists an $X \in \conv(\Q)$ such that $D^{\ve}_H(\rho\|X) < \infty$, or we have $D^{\ve}_H(\rho\|X) = D^{\ve}_H(\rho\|0) = \infty$ for all $X \in \Q^\cc$. \qedhere
\end{sloppypar}\end{proof}

The above can be directly applied in the context of entanglement distillation. Specifically, if one can show that the fidelity of distillation under a given class of operations is given by $\GG{m}_\Q$ for some set $\Q$, then the optimal rate of distillation can be computed exactly as the hypothesis testing entropy minimised over $\conv(\Q)$. Although we leave open the question of when exactly a given class of operations leads to a fidelity of distillation of the form given by $\GG{m}_\Q$, we will see below that this is a very common phenomenon among different classes of operations relevant to the resource theory of entanglement.

Formally, we have the following.
\begin{theorem}\label{thm:hyp_entropy}
Let $\O$ be a class of CPTP operations, and $\Q$ a compact set of Hermitian operators. If a given state $\rho$ satisfies 
\begin{equation}\begin{aligned}
  F_\O(\rho, m) = \GG{m}_\Q(\rho) \quad \forall \, m \in \NN,
\end{aligned}\end{equation}
then
\begin{equation}\begin{aligned}
  E^{(1),\ve}_{d,\O} (\rho) &= \logfloor{ \min_{X \in \conv(\Q)} D^{\ve}_H(\rho\|X) }.
\end{aligned}\end{equation}
\end{theorem}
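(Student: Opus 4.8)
The plan is to turn the threshold condition defining $E^{(1),\ve}_{d,\O}$ into a statement about the gauge function $\Gamma_{\Q^\circ}$, and then read off the formula directly from Proposition~\ref{prop:hyp_gauge}. Writing $D \coloneqq \min_{X \in \conv(\Q)} D^\ve_H(\rho\|X)$, that proposition already gives
\[
2^{-D} = \inf\lsetr \Gamma_{\Q^\circ}(W) \barr \<\rho, W\> \geq 1-\ve,\; 0 \mleq W \mleq \id \rsetr .
\]
On the other side, I would rewrite the constraint $W \in \tfrac1m\Q^\circ$ appearing in the definition of $\GG{m}_\Q$ in gauge form: since $\Q^\circ$ is closed, convex and contains the origin, its sublevel sets satisfy $\{W \mid \Gamma_{\Q^\circ}(W) \le 1/m\} = \tfrac1m\Q^\circ$, so that under the hypothesis $F_\O(\rho,m) = \GG{m}_\Q(\rho)$ one has
\[
F_\O(\rho,m) = \sup\lsetr \<\rho,W\> \barr 0 \mleq W \mleq \id,\; \Gamma_{\Q^\circ}(W) \le 1/m \rsetr .
\]

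The heart of the argument is the equivalence $F_\O(\rho,m) \ge 1-\ve \iff m \le 2^{D}$, which I would establish by viewing the two displayed optimisations as a pair of mutually inverse problems (maximise $\<\rho,W\>$ subject to a gauge bound, versus minimise the gauge subject to a bound on $\<\rho,W\>$). For the forward direction, compactness of $\Q$ makes the feasible set $\{W : 0 \mleq W \mleq \id,\ \Gamma_{\Q^\circ}(W) \le 1/m\}$ compact and the linear objective continuous, so the supremum is attained at some $W^\star$ with $\<\rho,W^\star\> = F_\O(\rho,m) \ge 1-\ve$; this $W^\star$ is feasible for the infimum above, whence $2^{-D} \le \Gamma_{\Q^\circ}(W^\star) \le 1/m$, i.e. $m \le 2^{D}$. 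For the converse, $m \le 2^{D}$ reads $2^{-D}\le 1/m$, and the infimum defining $2^{-D}$ is likewise attained at some $W^\star$ with $\<\rho,W^\star\>\ge 1-\ve$ and $\Gamma_{\Q^\circ}(W^\star) \le 1/m$; feeding $W^\star$ into the supremum gives $F_\O(\rho,m) \ge 1-\ve$. Since $m$ ranges over $\NN$, the largest admissible value is $\lfloor 2^{D}\rfloor$, and taking logarithms yields $E^{(1),\ve}_{d,\O}(\rho) = \log\lfloor 2^{D}\rfloor = \logfloor{D}$, as claimed.

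The step I expect to require the most care is the attainment of the two extrema: justifying that the inequality $F_\O(\rho,m)\ge 1-\ve$ can be witnessed by a \emph{single} optimal $W^\star$ rather than only by an approximating sequence. This is exactly where compactness of $\Q$ enters — it guarantees lower semicontinuity and finiteness of $\Gamma_{\Q^\circ}$ on the relevant compact feasible sets — and it is the reason the hypothesis restricts to compact $\Q$. A secondary subtlety is the degenerate case $D=\infty$ (equivalently $2^{-D}=0$): here I would simply note that $\GG{m}_\Q(\rho)\le 1$ for every $m$, being a fidelity, so both the equivalence and the convention $\logfloor{\infty}=\infty$ remain consistent.
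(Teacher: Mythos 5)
Your proof is correct and takes essentially the same route as the paper's: both hinge on Proposition~\ref{prop:hyp_gauge} together with the identification of the constraint $W \in \frac{1}{m}\Q^\circ$ with the gauge bound $\Gamma_{\Q^\circ}(W) \leq 1/m$, and the paper merely compresses your two-directional equivalence $F_\O(\rho,m) \geq 1-\ve \iff m \leq 2^{D}$ (attainment included) into a single chain of equalities converting the integer maximisation into $\logfloor{\cdot}$ of a real minimisation over the gauge. The only slight inaccuracy is your attribution of the attainment of the two extrema to compactness of $\Q$: attainment actually follows from closedness of $\Q^\circ$ (automatic for any polar set), compactness of the operator interval $\lset W \bar 0 \mleq W \mleq \id \rset$, and lower semicontinuity of the gauge, whereas compactness of $\Q$ is what guarantees that $\Gamma_{\Q^\circ}$ is finite everywhere (since $\Q^\circ$ then contains a ball around the origin), ruling out a vacuous infimum.
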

 \begin{remark}The theorem includes in particular the case when $F_{\O}(\rho, m) = \frac{1}{m}\left(\TTT{m-1}_\S(\rho)+1\right)$ for $\S \in \{\SEP,\PPT,\PPTP\}$. However, it is more general than that --- for example, $\Q$ can be the set $\PPTR =\lset X \bar \norm{X^{T_B}}{1} \leq 1 \rset$, in which case we recover a result of~\cite{fang_2019}.\end{remark}
\begin{proof}
By assumption, we have
\begin{equation}\begin{aligned}
 & E^{(1),\ve}_{d,\O} (\rho) \\
 &= \log \max \Big\{ m \in \NN \;\Big|\; \< \rho, W \> \geq 1 - \ve,\\ 
 & \hphantom{= \log \max \Big\{ m \in \NN \;\Big| } \; 0 \mleq W \mleq \id,\; W \in \frac{1}{m} \Qc \Big\} \\ 
 &= \Big\lfloor - \log \min \big\{ k \in \RR \;\big|\; \< \rho, W \> \geq 1 - \ve,\\
 & \hphantom{= \big\lfloor - \log \min \big\{ k \in \RR \;\big|} 0 \mleq W \mleq \id,\; W \in k \Qc \big\}\Big\rfloor_{\log}\\
 &= \logfloor{ - \log \min_{\substack{\< \rho, W \> \geq 1 - \ve\\0 \mleq W \mleq \id}} \Gamma_{\Qc} (W) }\\
 &= \logfloor{  \min_{X \in \conv(\Q)} D^{\ve}_H(\rho\|X) }
\end{aligned}\end{equation}
where the last equality follows from Prop.~\ref{prop:hyp_gauge}.\end{proof}

The application of the above result will allow us to employ the powerful framework of convex optimisation in the description of entanglement distillation.

%%%%%%%%%%%%%%%%%%%%%%%%%%%%%%%%%%%%%%%%%%%%%%%%%%%%%%%%%%%%%%%%%%%%%%%%%%%%%%%%%%%%%%%%%%%%%%%%%

\subsection{PPT and PPT-preserving operations}\label{sec:dist_ppt}

One of the first relaxations of LOCC in the literature was the class of separable operations (SEP)~\cite{vedral_1997,rains_1997}, corresponding to all quantum channels $\Lambda: AB \to A'B'$ whose Choi matrix is separable across the bipartition $AA'|BB'$. This set of maps has been shown to be strictly larger than LOCC~\cite{bennett_1999}, thus providing an upper bound on the capabilities of LOCC in distillation. However, the fact that the definition of $\SEPO$ relies on the separability of the Choi matrix means that the set is not amenable to an efficient analytical characterisation, which then motivated the definitions of larger sets of operations. We begin with the investigation of several classes of such operations based on the set $\PPT$.

The class of PPT operations, due to Rains~\cite{rains_1999-1,rains_2001}, is defined to consist of all CPTP maps $\Lambda: AB \to A'B'$ whose Choi matrix $J_\Lambda$ satisfies $J^{T_{BB'}}_\Lambda \mgeq 0$. In some works, a closely related class of ``PPT-preserving operations'' has been considered~\cite{eggeling_2001,audenaert_2003}, motivated by the fact that $X \in \PPT \Rightarrow \Lambda(X) \in \PPT$ for any PPT operation $\Lambda$. Although the two classes have sometimes been claimed to be equal, it is not difficult to see that only imposing the PPT-preserving constraint leads to a strictly larger class of quantum channels --- consider, for instance, the channel which swaps subsystems $A$ and $B$ --- so it is in fact incorrect to use the names ``PPT'' and ``PPT-preserving'' interchangeably when referring to operations.
Interestingly, however, the two sets of channels lead to exactly the same rates of one-shot entanglement distillation (as well as dilution), as we will shortly see explicitly.

More recently, the name ``PPT-preserving operations'' was also used to denote operations which map any PPT state to a PPT state, in the sense that $\sigma \in \PPTP \Rightarrow \Lambda(\sigma) \in \PPTP$~\cite{chitambar_2017}. It is well-known that this leads to a strictly larger class of operations than Rains' PPT operations~\cite{horodecki_2001}, although an accurate way of referring to the class of PPT operations could be \textit{completely} PPT-preserving~\cite{matthews_2008}, since the condition $J^{T_{BB'}}_\Lambda \mgeq 0$ ensures the preservation of positivity when the map acts on a part of a larger system, akin to completely positive maps. 

%Let us also remark that PPT-inducing channels, i.e. CPTP maps $\Lambda$ such that $J^{T_{A'B'}}_\Lambda \mgeq 0$ (see e.g.~\cite{bengtsson_2007}), have also been referred to as ``PPT operations'' in the literature (most notably in~\cite{smith_2011,christandl_2012}). The distinction between the class $\PPTO$ and PPT-inducing channels is the same as between separable and entanglement-breaking channels~\cite{horodecki_2003-1}. We will not consider these maps here.

For clarity, let us begin with the precise definitions.%
%%%%%%%%%%%%%%%%%%%%%%%%%%%%%%%%%%%%%%%%%%%%%%%%%%%%%%%%%%%%%%%%%%%%%%%%%%%%%%%%%%%%
\begin{definition}\label{def: PPT opt}
A CPTP map $\Lambda: AB \to A'B'$ is called \textbf{PPT} if any one of the following equivalent conditions is satisfied~\cite{rains_1999-1}:
\begin{enumerate}[label=(\roman*)]
\item The Choi matrix $J_\Lambda$ is PPT with respect to the partition $AA'|BB'$, i.e. $J^{T_{BB'}}_\Lambda \mgeq 0$.
\item The map $T_{B'} \circ \Lambda \circ T_{B}$ is completely positive.
\item For any spaces $C,D$ such that $d_C =d_A$ and $d_D = d_B$ it holds that
\begin{equation*}\begin{aligned}
  \sigma \in \PPTP(AC|BD) \Rightarrow \Lambda \otimes \idc (\sigma) \in \PPTP(A'C|B'D)
\end{aligned}\end{equation*}
where $\idc$ is the identity channel.
\end{enumerate}
We will use $\PPTO$ to denote the set of all such maps.
\end{definition}%
\begin{definition}\label{def: PPTPPR}
A CPTP map $\Lambda: AB \to A'B'$ is called \textbf{$\PPTP$-preserving} if
\begin{equation}\begin{aligned}
  \sigma \in \PPT_+ \Rightarrow \Lambda(\sigma) \in \PPT_+.
\end{aligned}\end{equation}
We will use $\PPTPPR$ to denote the set of all such maps.
\end{definition}%
\begin{definition}\label{def:PPTPR}
A CPTP map $\Lambda: AB \to A'B'$ is called \textbf{$\PPT$-preserving} if either of the following equivalent conditions is satisfied:
\begin{enumerate}[label=(\roman*)]
\item $\begin{displaystyle}X \in \PPT \Rightarrow \Lambda(X) \in \PPT\end{displaystyle}$.
\item The map $T_{B'} \circ \Lambda \circ T_{B}$ is positive, i.e. $X \in \HH_+ \Rightarrow T_{B'} \circ \Lambda \circ T_{B} (X) \in \HH_+$.
\end{enumerate}
We will use $\PPTPR$ to denote the set of all such maps.
\end{definition}%
%%%%%%%%%%%%%%%%%%%%%%%%%%%%%%%%%%%%%%%%%%%%%%%%%%%%%%%%%%%%%%%%%%%%%%%%%%%%%%%%%%%%

We now characterise the operational capabilities of the different sets of operations. Note that the fidelity of distillation under the class $\PPTO$ has previously been obtained by Rains~\cite{rains_2001}, and an explicit expression for the rate of distillation in terms of $D^{\ve}_H$ appeared more recently in~\cite{fang_2019}.~\cite{eggeling_2001} considered the class $\PPTPR$ in this context, but the capabilities of $\PPTPPR$ have not been explicitly investigated before.
\begin{theorem}\label{thm:ppt_dist}The fidelity of distillation under the classes of operations $\PPTO$, $\PPTPR$, and $\PPTPPR$ is given by
\begin{equation}\begin{aligned}
F_{\PPTPPR} (\rho, m) &= \GG{m}_{\PPTP} (\rho)\\
  F_\PPTO (\rho,m) = F_{\PPTPR} (\rho,m) &= \GG{m}_{\PPTR} (\rho)
\end{aligned}\end{equation}
where $\PPTR = \conv( \PPT \cup - \PPT) = \lset X \bar \norm{X^{T_B}}{1} \leq 1 \rset$, and hence the one-shot distillable entanglement can be expressed as
\begin{align}
E^{(1),\ve}_{d,\PPTPPR} (\rho) &= \logfloor{ \min_{\sigma \in \PPTP} D^{\ve}_H(\rho\|\sigma) } \label{distillation PPTPplus}\\
  E^{(1),\ve}_{d,\PPTO} (\rho) = E^{(1),\ve}_{d,\PPTPR} (\rho) &= \logfloor{ \min_{X \in \PPTR} D^{\ve}_H(\rho\|X) }. \label{distillation PPT PPTP}
\end{align}
\end{theorem}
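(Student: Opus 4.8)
The plan is to establish each equality $F_\O(\rho,m)=\GG{m}_\Q(\rho)$ by proving two matching inequalities, working throughout in the Heisenberg picture. I would rewrite
\begin{equation}\begin{aligned}
F_\O(\rho,m)=\sup_{\Lambda\in\O}\<\Lambda(\rho),\Psi_m\>=\sup_{\Lambda\in\O}\<\rho,\Lambda^\dagger(\Psi_m)\>
\end{aligned}\end{equation}
and abbreviate $W\coloneqq\Lambda^\dagger(\Psi_m)$. Since every $\Lambda\in\O$ is CPTP, its adjoint $\Lambda^\dagger$ is completely positive and unital, so $0\mleq\Psi_m\mleq\id$ forces $0\mleq W\mleq\id$ for all three classes. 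This is precisely the first constraint in the definition of $\GG{m}_\Q$, so the entire content of the theorem lies in extracting the second, $\Q$-dependent constraint from the defining property of each class.

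For the upper bound $F_\O\le\GG{m}_\Q$ I would read that remaining constraint off the class. For $\PPTPPR$ I would use the elementary fact that $\<\Psi_m,\tau\>\le\tfrac1m$ for every PPT state $\tau$ (because $\Psi_m^{T_B}=\tfrac1m\mathbb F$ with $\mathbb F=\sum_{ij}\ketbra{ij}{ji}$, $\norm{\mathbb F}{\infty}=1$, and $\tau^{T_B}\mgeq0$); as $\Lambda$ maps $\PPTP$ into $\PPTP$, every $\sigma\in\PPTP$ yields $\<W,\sigma\>=\<\Psi_m,\Lambda(\sigma)\>\le\tfrac1m$, i.e.\ $W\in\tfrac1m\PPTP^\circ$. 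For $\PPTO$ and $\PPTPR$ I would instead use that $\Phi\coloneqq T_{B'}\circ\Lambda\circ T_{B}$ is positive (completely positive in the former case), hence so is $\Phi^\dagger=T_{B}\circ\Lambda^\dagger\circ T_{B'}$; from $\mathbb F^{T_{B'}}=m\Psi_m$ one gets the identity $\Phi^\dagger(\mathbb F)=mW^{T_B}$, and decomposing $\mathbb F=P_+-P_-$ into the symmetric/antisymmetric projectors, together with $\Phi^\dagger(P_\pm)\mgeq0$ and $\Phi^\dagger(\id)=\id$, gives $-\id\mleq mW^{T_B}\mleq\id$, i.e.\ $\norm{W^{T_B}}{\infty}\le\tfrac1m$. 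Identifying $\PPTR^\circ=\lset W\bar\norm{W^{T_B}}{\infty}\le1\rset$ via the trace-norm/operator-norm duality of the partial transpose, this is exactly $W\in\tfrac1m\PPTR^\circ$. In each case $\sup_\Lambda\<\rho,W\>\le\GG{m}_\Q(\rho)$ follows.

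For the reverse inequalities I would fix any $W$ feasible for $\GG{m}_\Q$ and test it on the explicit measure-and-prepare channel
\begin{equation}\begin{aligned}
\Lambda_W(X)=\<X,W\>\,\Psi_m+\<X,\id-W\>\,\frac{\id-\Psi_m}{m^2-1},
\end{aligned}\end{equation}
which is CPTP because $0\mleq W\mleq\id$ and satisfies $\<\Lambda_W(\rho),\Psi_m\>=\<\rho,W\>$, thereby realising the value of $W$. Its output on any state is the isotropic state of fidelity $\<X,W\>$, and an isotropic state is PPT iff its fidelity does not exceed $\tfrac1m$. For $\PPTPPR$ the constraint $\<\sigma,W\>\le\tfrac1m$ on $\sigma\in\PPTP$ therefore guarantees $\Lambda_W(\sigma)\in\PPTP$, so $\Lambda_W\in\PPTPPR$. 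For $\PPTO$ I would instead verify the Choi matrix directly: $J_{\Lambda_W}=W\otimes\Psi_m+(\id-W)\otimes\tfrac{\id-\Psi_m}{m^2-1}$, and its partial transpose $J_{\Lambda_W}^{T_{BB'}}$ is block-diagonal in the output symmetric/antisymmetric decomposition, the two blocks being positive iff $-\tfrac1m\id\mleq W^{T_B}\mleq\tfrac1m\id$, i.e.\ iff $\norm{W^{T_B}}{\infty}\le\tfrac1m$; hence $\Lambda_W\in\PPTO$.

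Combining the two directions gives $F_{\PPTPPR}=\GG{m}_{\PPTP}$ at once, while for the other two classes the inclusion $\PPTO\subseteq\PPTPR$ (complete positivity of $T_{B'}\circ\Lambda\circ T_B$ implies its positivity) closes the loop $\GG{m}_{\PPTR}\le F_\PPTO\le F_\PPTPR\le\GG{m}_{\PPTR}$, yielding all three fidelity identities. The distillable-entanglement formulas then follow immediately from Theorem~\ref{thm:hyp_entropy}, once I note that $\PPTP$ and $\PPTR$ are compact and convex so that the convex hull in its statement may be dropped. The main obstacle I anticipate is the achievability step for $\PPTO$: unlike $\PPTPPR$, whose definition refers to output states and meshes immediately with the isotropic-output construction, the PPT condition lives on the Choi matrix, so I must carry out the one genuine computation of $J_{\Lambda_W}^{T_{BB'}}$ and show that its positivity is governed solely by $\norm{W^{T_B}}{\infty}\le\tfrac1m$ --- this is also the step in which the precise normalisation $\tfrac1m$ gets pinned down. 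I would additionally handle the degenerate case $m=1$ separately, where $m^2-1=0$ and distillation is trivial.
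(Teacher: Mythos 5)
Your proposal is correct, but it reaches the result by a partially different route than the paper, and the difference is worth spelling out. The paper's proof is a single symmetrisation argument: since $\Psi_m$ is invariant under $U\otimes U^*$ twirling, one may take $\Lambda = \T\circ\Lambda$ without loss of generality, so every relevant channel has the isotropic output form $\Lambda(Z)=\<Z,X\>\Psi_m+\<Z,\id-X\>\tfrac{\id-\Psi_m}{m^2-1}$; the feasible set of $X$'s is then identified with the feasible set of $\GG{m}_\Q$ by reading the class constraints off this form, which settles the converse and achievability in one stroke (at the implicit cost of checking that each class is closed under post-composition with $\T$ and that the fidelity is unchanged). Your achievability half coincides with the paper's --- your $\Lambda_W$ \emph{is} the twirled-form channel, and your block-diagonal computation of $J_{\Lambda_W}^{T_{BB'}}$ reproducing the condition $-\tfrac1m\id\mleq W^{T_B}\mleq\tfrac1m\id$ is exactly the paper's Choi computation --- but your converse is genuinely different: instead of twirling, you pass to the adjoint $W=\Lambda^\dagger(\Psi_m)$ and derive the membership $W\in\tfrac1m\Q^\circ$ by direct operator inequalities, namely Rains' bound $\<\Psi_m,\tau\>\le\tfrac1m$ for PPT $\tau$ in the $\PPTPPR$ case, and the identity $\Phi^\dagger(\mathbb{F})=mW^{T_B}$ with $\mathbb{F}=P_+-P_-$, $\Phi^\dagger(P_\pm)\mgeq 0$, $\Phi^\dagger(\id)=\id$ in the $\PPTO$/$\PPTPR$ case. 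This buys you two things: the converse needs no symmetry hypothesis on the class (it would survive for classes not closed under twirling), and it makes transparent the operational origin of the $\tfrac1m$ normalisation and of the polar set $\PPTR^\circ=\lset W\bar \norm{W^{T_B}}{\infty}\le 1\rset$. What the paper's route buys is brevity and the stronger structural statement that the achievable fidelity set is \emph{exactly} the feasible set of $\GG{m}_\Q$, not merely sandwiched. Two minor points in your favour: your chain $\GG{m}_{\PPTR}\le F_\PPTO\le F_{\PPTPR}\le\GG{m}_{\PPTR}$ via $\PPTO\subseteq\PPTPR$ is a clean way to get the equality of the two middle quantities, and your explicit treatment of the degenerate case $m=1$ (where $\GG{1}_\Q(\rho)=1$ because $\id\in\Q^\circ$ for both $\Q=\PPTP$ and $\Q=\PPTR$) patches a gap the paper silently glosses over.
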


\begin{proof}Since $\Psi_m$ is invariant under any unitary of the form $U \otimes U^*$, it is in particular invariant under the twirling $\T(\cdot) \coloneqq \int (U \otimes U^*) \cdot (U \otimes U^*)^\dagger \,\mathrm{d}U$, where the integration is performed with respect to the Haar measure of the unitary group. We can then without loss of generality consider only trace-preserving operations of the form $\Lambda = \T \circ \Lambda$, giving
\begin{equation}\begin{aligned}\label{eq:proof:lambda_twirl}
  \Lambda(Z) =  \< Z, X\> \Psi_m + \frac{\< Z, \id - X \>}{m^2-1} \left(\id - \Psi_m\right)
\end{aligned}\end{equation}
as this is the most general form of an operator invariant under twirling~\cite{horodecki_1999}. Since $\Psi^{T_B}_m = \frac1m ( P^+_m - P^-_m )$, where $P^+_m$ (respectively, $P^-_m$) denote the projector onto the symmetric (antisymmetric) subspace, we have
\begin{equation}\begin{aligned}
    \Lambda(Z)^{T_B} &= \frac{P^+_m}{m+1} \left( \frac{\Tr Z}{m} + \< Z, X \>\right) \\&+ \frac{P^-_m}{m-1} \left( \frac{\Tr Z}{m} - \< Z, X \>\right).
\end{aligned}\end{equation}
Using the mutual orthogonality of $P^\pm_m$, we obtain the general conditions
\begin{equation}\begin{aligned}
  \Lambda(Z) \mgeq 0 &\iff \<Z, X \> \geq 0\\
  \Lambda(Z)^{T_B} \mgeq 0 &\iff -\frac{\Tr Z}{m} \leq \< Z, X \> \leq \frac{\Tr Z}{m}.
\end{aligned}\end{equation}
Noting in addition that the complete positivity of $\Lambda$ imposes $0 \mleq X \mleq \id$, we can constrain the map $\Lambda$ such that $\Lambda(\sigma) \in \PPTP$ for any $\sigma \in \PPTP$ to get
\begin{equation}\begin{aligned}
  &F_{\PPTPPR}(\rho,m) \\
  &= \max \lsetr \< \rho, X \> \barr 0 \mleq X \mleq \id,\; \< \sigma, X \> \leq \frac{1}{m} \; \forall \sigma \in \PPT_+ \rsetr,
\end{aligned}\end{equation}
and similarly, by imposing that $S \in \PPT \Rightarrow \Lambda(S) \in \PPT$ we have
\begin{equation}\begin{aligned}
  &F_{\PPTPR}(\rho,m) \\
  &= \max \lsetr \< \rho, X \> \barr 0 \mleq X \mleq \id,\; \left|\< S, X \>\right| \leq \frac{1}{m} \; \forall S \in \PPT \rsetr\\
  &= \max \lsetr \< \rho, X \> \barr 0 \mleq X \mleq \id,\; \< S, X \> \leq \frac{1}{m} \; \forall S \in \PPTR \rsetr.
\end{aligned}\end{equation}
Noting that the Choi matrix of the map \eqref{eq:proof:lambda_twirl} is given by $J_\Lambda = X_{AB} \otimes \Psi_{m_{A'B'}} + \frac{1}{m^2-1}\left(\id - X\right)_{AB} \otimes (\id - \Psi_m)_{A'B'}$, an explicit computation yields
\begin{equation}\begin{aligned}
  J^{T_{BB'}}_\Lambda \mgeq 0 \iff - \frac{1}{m} \id \mleq X^{T_B} \mleq\frac{1}{m} \id,
\end{aligned}\end{equation}
which is precisely the condition $\< S, X \> \leq \frac{1}{m} \; \forall S \in \PPTR$, yielding the equality between $F_{d,\PPTPR}(\rho,m)$ and $F_{d,\PPTO}(\rho,m)$.

Since the distillation fidelities $F_{\PPTPPR}$ and $F_{\PPTPR}$ are precisely of the form $G_\PPTP$ and $G_\PPTR$, respectively, the result follows by Thm.~\ref{thm:hyp_entropy}.
\end{proof}

The Theorem establishes an operational equivalence between the sets of operations $\PPTO$ and $\PPTPR$, although we stress again that in fact $\PPTO \subsetneq \PPTPR$: in particular, the swap operation, defined as $\Lambda(\ketbra{ij}{kl}) = \ketbra{ji}{lk}$ in a basis and extended by linearity, trivially preserves the positivity of the partial transpose of any operator, while the partial transpose of the Choi matrix $J_\Lambda$ can be verified to be non-positive. This can be understood by noting that the swap operation does not preserve PPT states when acting only on a part of a larger system --- indeed, if Alice and Bob each possess a singlet and exchange only half of it, they will have generated (maximal) entanglement. Notice also that we have explicitly shown a difference between the distillation rates of $\PPTPR$ and $\PPTPPR$, thus immediately implying that $\PPTPR \subsetneq \PPTPPR$.

In addition, we recall an argument in~\cite{chitambar_2017} which investigated a gap between $\PPTO$ and $\PPTPPR$ operations by showing that the negativity (a known monotone under $\PPTO$~\cite{vidal_2002}) can increase under $\PPTPPR$. This argument no longer applies to $\PPTPR$ --- the negativity can be expressed as a robustness-type quantifier with respect to the set $\PPT$~\cite{vidal_2002} and it follows straightforwardly that this is a strong monotone under $\PPTPR$~\cite{regula_2018}. The gap between $\PPTPR$ and $\PPTO$ is therefore much more subtle.

Although it is not easy to characterise the asymptotic rates of distillation under $\PPTO$ and $\PPTPR$ maps, we have the following characterisation of distillable entanglement under $\PPTPPR$, thus establishing a limit on the asymptotic performance of $\PPTO$ and $\PPTPR$ (see also~\cite{brandao_2015}).
\begin{corollary}\label{cor:ppt_asymptotic}
The asymptotic distillable entanglement under $\PPTPPR$ is given by the regularised relative entropy of entanglement with respect to the set $\PPTP$,
\begin{equation}\begin{aligned}
  E^{\infty}_{d,\PPTPPR} (\rho) &= E^{\infty}_{R,\PPTP} (\rho) \coloneqq \lim_{n\to\infty} \min_{\sigma \in \PPTP} \frac{1}{n} D(\rho^{\otimes n}\|\sigma)
\end{aligned}\end{equation}
with $D$ denoting the quantum relative entropy.
\end{corollary}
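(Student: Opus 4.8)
The plan is to combine the exact one-shot formula \eqref{distillation PPTPplus} with a generalised quantum Stein's lemma applied to the set of PPT states. First I would note that $\logfloor{x}$ differs from $x$ by at most $1$ whenever $x \ge 1$, so after dividing by $n$ the floor contributes only an $O(1/n)$ correction that vanishes in the limit. Starting from $E^{(1),\ve}_{d,\PPTPPR}(\rho^{\otimes n}) = \logfloor{\min_{\sigma_n \in \PPTP} D_H^\ve(\rho^{\otimes n}\|\sigma_n)}$, where the minimisation runs over PPT states $\sigma_n$ of the $n$-copy system, this gives
\[
E^{\infty}_{d,\PPTPPR}(\rho) = \limsup_{\ve \to 0}\ \limsup_{n\to\infty}\ \frac1n \min_{\sigma_n \in \PPTP} D_H^\ve(\rho^{\otimes n}\|\sigma_n),
\]
so the task reduces to evaluating this regularised hypothesis-testing rate.

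Next I would check that the regularised relative entropy $E^\infty_{R,\PPTP}(\rho) = \lim_{n} \frac1n \min_{\sigma_n\in\PPTP} D(\rho^{\otimes n}\|\sigma_n)$ is well defined. This follows from Fekete's lemma: the sequence $a_n \coloneqq \min_{\sigma_n \in \PPTP} D(\rho^{\otimes n}\|\sigma_n)$ is subadditive, since the product ansatz $\sigma_{n+m} = \sigma_n \otimes \sigma_m$ is again a PPT state and the relative entropy is additive on tensor products, whence $a_{n+m} \le a_n + a_m$ and $\lim_n a_n/n = \inf_n a_n/n$ exists.

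The core of the argument is the generalised quantum Stein's lemma of Brand\~ao and Plenio~\cite{brandao_2010}, which asserts that for any sequence of sets $\{M_n\}$ that are convex, closed, invariant under permutations of the copies, closed under tensor products and partial traces, and each containing a full-rank state, one has
\[
\lim_{n\to\infty}\frac1n \min_{\sigma_n\in M_n} D_H^\ve(\rho^{\otimes n}\|\sigma_n) = \lim_{n\to\infty}\frac1n \min_{\sigma_n\in M_n} D(\rho^{\otimes n}\|\sigma_n)
\]
for every $\ve \in (0,1)$, the common value being independent of $\ve$. I would apply this with $M_n = \PPTP$ on the $n$-copy system and verify the axioms directly: PPT states form a closed convex set; permuting copies acts simultaneously on all $A$- and $B$-systems and so preserves the PPT property; a tensor product of PPT states is PPT; the partial trace of a PPT state is PPT because partial transpose and partial trace commute and both preserve positivity; and the maximally mixed state is a full-rank PPT state. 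The lemma then identifies the hypothesis-testing rate with $E^\infty_{R,\PPTP}(\rho)$ for each fixed $\ve \in (0,1)$, so the outer $\limsup_{\ve\to 0}$ is vacuous and the claim follows.

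The main obstacle is the verification that the set of PPT states genuinely meets all the structural hypotheses of the generalised Stein's lemma — in particular permutation-invariance and tensor/partial-trace stability phrased for the bipartite $A|B$ cut across many copies — since the lemma is most often quoted for the set of separable states. The key point is that PPT states obey exactly the same closure properties, so no analytic input beyond~\cite{brandao_2010} is required, and the single-letter regularised formula follows immediately.
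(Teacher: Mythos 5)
Your proposal is correct and follows essentially the same route as the paper: combine the exact one-shot expression $E^{(1),\ve}_{d,\PPTPPR}(\rho^{\otimes n}) = \logfloor{\min_{\sigma\in\PPTP} D^{\ve}_H(\rho^{\otimes n}\|\sigma)}$ from Theorem~\ref{thm:ppt_dist} with the generalised quantum Stein's lemma of Brand\~ao and Plenio (cited in the paper as~\cite{brandao_2010-1}, not~\cite{brandao_2010}), which identifies the regularised, $\ve\to 0$ limit of the minimised hypothesis-testing relative entropy with $E^{\infty}_{R,\PPTP}$. The paper's proof is a one-line appeal to that lemma, whereas you additionally spell out the floor-function bookkeeping, Fekete's lemma, and the verification that $\PPTP$ satisfies the lemma's structural axioms --- all of which are correct and implicit in the paper's argument.
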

\begin{proof}
Follows directly from the generalised quantum Stein's lemma~\cite{brandao_2010-1}, which shows precisely that the regularisation of the hypothesis testing relative entropy $\min_{\sigma \in \PPTP} D^{\ve}_H(\rho\|\sigma)$ in the asymptotic limit with $\ve$ going to $0$ is given by the regularised relative entropy.
\end{proof}
We remark that, although $E^{\infty}_{d,\PPTPPR} (\rho)$ is not known in general, it has been computed exactly for classes of all orthogonally invariant states (including isotropic and Werner states)~\cite{audenaert_2002}, and it has been shown that there exist states such that $E^{\infty}_{d,\PPTO} (\rho) < E^{\infty}_{d,\PPTPPR} (\rho)$~\cite{wang_2017-1}. 

\subsubsection{Rains set and distillation}\label{sec:rains}

The above Corollary in particular gives an operational interpretation to the regularised relative entropy $E^{\infty}_{R,\PPTP}$, introduced first as a bound for distillable entanglement in~\cite{rains_1999-1}. One can then wonder whether similar operational interpretation can be given to other asymptotic quantities in entanglement distillation theory. We will show that it is indeed the case for one of the most fundamental of such bounds, the regularised Rains bound~\cite{rains_2001,audenaert_2002,wang_2017-3}, constituting the tightest known bound for the asymptotically distillable entanglement. It is defined as
\begin{equation}\begin{aligned}
  E^{\infty}_{\textrm{Rains}} (\rho) \coloneqq \lim_{n\to\infty} \min_{X \in \PPTRP} \frac{1}{n} D(\rho^{\otimes n}\|X)
\end{aligned}\end{equation}
where $\PPTRP = \lset X \mgeq 0 \bar \norm{X^{T_B}}{1} \leq 1 \rset$ is the so-called Rains set. To relate this quantity with the distillation of entanglement, we will define the class of \textbf{Rains-preserving operations} $\PPTRPPR$ as all maps such that $X \in \PPTRP \Rightarrow \Lambda (X) \in \PPTRP$. We then have the following.

\begin{theorem}\label{thm:rains_preserving}
The fidelity of distillation and one-shot distillable entanglement under Rains-preserving operations $\PPTRPPR$ are given by
\begin{equation}\begin{aligned}
  F_{\PPTRPPR} (\rho, m) &= \GG{m}_{\PPTRP} (\rho)\\
  E^{(1),\ve}_{d,\PPTRPPR} (\rho) &= \logfloor{ \min_{X \in \PPTRP} D^{\ve}_H(\rho\|X) }.
\end{aligned}\end{equation}
\end{theorem}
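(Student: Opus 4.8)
The plan is to mirror the proof of Theorem~\ref{thm:ppt_dist}: exploit the $U\otimes U^*$-symmetry of the target $\Psi_m$ to reduce the supremum over all Rains-preserving channels to an optimisation over a single Hermitian operator $X$ via the twirled form~\eqref{eq:proof:lambda_twirl}, translate the constraint $X \in \PPTRP \Rightarrow \Lambda(X) \in \PPTRP$ into a condition on $X$, and recognise the resulting feasible set as exactly the one defining $\GG{m}_{\PPTRP}$. The one-shot formula then follows immediately from Theorem~\ref{thm:hyp_entropy} applied with $\Q = \PPTRP$, using that $\PPTRP$ is compact and convex so that $\conv(\PPTRP) = \PPTRP$.

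First I would justify the twirling reduction. Since $\<\Lambda(\rho), \Psi_m\> = \<\T\circ\Lambda(\rho), \Psi_m\>$ by invariance of $\Psi_m$, it suffices to check that $\T\circ\Lambda$ is again Rains-preserving whenever $\Lambda$ is, which reduces to showing that $\T$ itself maps $\PPTRP$ into itself. Using the identity $\big((U\otimes U^*)\sigma(U\otimes U^*)^\dagger\big)^{T_B} = (U\otimes U)\,\sigma^{T_B}\,(U\otimes U)^\dagger$, each Haar unitary conjugates $\sigma^{T_B}$ by a unitary and so preserves both $\sigma \mgeq 0$ and $\norm{\sigma^{T_B}}{1}$; the convex mixture defining $\T$ therefore preserves $\PPTRP$, and hence $\T\circ\Lambda \in \PPTRPPR$, making the restriction to twirled channels without loss of generality.

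The core step is to convert the Rains-preservation constraint on the twirled map into a condition on $X$. Complete positivity already forces $0 \mleq X \mleq \id$. For the partial-transpose constraint I would reuse the expression for $\Lambda(Z)^{T_B}$ from the proof of Theorem~\ref{thm:ppt_dist}: for $\sigma \mgeq 0$ the coefficients of $P^+_m/(m+1)$ and $P^-_m/(m-1)$ are $\tfrac{\Tr\sigma}{m} \pm \<\sigma, X\>$, and counting the multiplicities $\rank P^\pm_m = m(m\pm1)/2$ gives $\norm{\Lambda(\sigma)^{T_B}}{1} = \max\!\big(\Tr\sigma,\, m\<\sigma, X\>\big)$, where I have used $\<\sigma, X\> \geq 0$ (valid since $X,\sigma \mgeq 0$) to drop an absolute value. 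For any $\sigma \in \PPTRP$ one has $\Tr\sigma = \Tr(\sigma^{T_B}) \leq \norm{\sigma^{T_B}}{1} \leq 1$ automatically, so the requirement $\norm{\Lambda(\sigma)^{T_B}}{1} \leq 1$ reduces exactly to $\<\sigma, X\> \leq 1/m$ for all $\sigma \in \PPTRP$, i.e. $X \in \tfrac1m\PPTRP^\circ$. This yields
\begin{equation}\begin{aligned}
 F_{\PPTRPPR}(\rho, m) &= \max\lset \<\rho, X\> \bar 0 \mleq X \mleq \id,\; X \in \tfrac1m\PPTRP^\circ \rset \\
 &= \GG{m}_{\PPTRP}(\rho),
\end{aligned}\end{equation}
which is the first claim, and the second follows from Theorem~\ref{thm:hyp_entropy}.

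The main obstacle is the clean reduction of the trace-norm constraint to the polar-set constraint in the final step, and two features that make it succeed should be flagged carefully. The positivity $\<\sigma, X\> \geq 0$ removes the absolute value, so that the preservation constraint is one-sided and matches $\PPTRP^\circ$ rather than the symmetrised polar $(\PPTRP \cup -\PPTRP)^\circ$; and the automatic bound $\Tr\sigma \leq 1$ on $\PPTRP$ renders the competing term in the maximum inactive. If either of these failed, the feasible region would not coincide with that of $\GG{m}_{\PPTRP}$, so both deserve explicit verification.
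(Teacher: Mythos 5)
Your proof is correct and follows essentially the same route as the paper's: twirling to the isotropic channel form, computing $\norm{\Lambda(\sigma)^{T_B}}{1}$ via the projectors $P^\pm_m$ to reduce Rains-preservation to $X \in \tfrac1m\PPTRP^\circ$, and invoking Theorem~\ref{thm:hyp_entropy}. You additionally spell out two details the paper leaves implicit (that twirling preserves $\PPTRP$, and that $\Tr\sigma \leq 1$ on the Rains set renders the competing term inactive), which is a welcome tightening rather than a deviation.
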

\begin{proof}
The proof proceeds analogously to Thm.~\ref{thm:ppt_dist}. The crucial step is to notice that for the isotropic operator $\Lambda(Z) =  \< Z, X\> \Psi_m + \frac{\< Z, \id - X \>}{m^2-1} \left(\id - \Psi_m\right)$ we have
\begin{equation}\begin{aligned}
  \norm{\Lambda(Z)^{T_B}}{1} &= \frac{m}{2} \left| \frac{\Tr Z}{m} + \< Z, X \> \right| \\&
  + \frac{m}{2} \left| \frac{\Tr Z}{m} - \< Z, X \> \right|\
\end{aligned}\end{equation}
where we have used that $\frac{1}{m+1} \norm{P^+_m}{1} = \frac{1}{m-1} \norm{P^-_m}{1} =\frac{m}{2}$ and that $P^\pm_m$ are mutually orthogonal projections. For any $Z \in \PPTRP$, it is then easy to verify that we have $\norm{\Lambda(Z)^{T_B}}{1} \leq 1 \iff \< Z, X \> \leq \frac{1}{m}$. This gives
\begin{equation}\begin{aligned}
  &F_{\PPTRPPR}(\rho,m) \\
  &= \max \lsetr \< \rho, X \> \barr 0 \mleq X \mleq \id,\; \< Z, X \> \leq \frac{1}{m} \; \forall Z \in \PPTRP \rsetr\\
  &= \GG{m}_{\PPTRP} (\rho).
\end{aligned}\end{equation}
The statement about $E^{(1),\ve}_{d,\PPTRPPR}$ then follows directly from Thm.~\ref{thm:hyp_entropy}.
\end{proof}
Once again, an application of the generalised quantum Stein's lemma~\cite{brandao_2010-1} then gives
\begin{equation}\begin{aligned}
    E^{\infty}_{d,\PPTRPPR} (\rho) = E^{\infty}_{\textrm{\rm Rains}} (\rho),
\end{aligned}\end{equation}
which establishes an explicit operational interpretation of the regularised Rains bound as the asymptotic rate of entanglement distillation under the class of Rains-preserving operations. Noting that $\PPTRPPR \supseteq \PPTPR$ by definition, we recover the result that  $E^{\infty}_{\textrm{\rm Rains}}$ upper bounds the asymptotic distillable entanglement under $\PPTO$ operations~\cite{rains_2001}. It is interesting to conjecture that we have equality between $E^\infty_{d,\PPTO}$ and $E^{\infty}_{d,\PPTRPPR}$ (cf.~\cite{fang_2019}), but we were not able to establish this.

To obtain a tighter bound on distillable entanglement, one could then ask about distillation under operations which \textit{completely} preserve the Rains set, in the sense that 
\begin{equation}\begin{aligned}
  X \in \PPTRP(AC|BD) \Rightarrow \Lambda \otimes \textrm{id } (X) \in \PPTRP(A'C|B'D)
\end{aligned}\end{equation}
for some spaces $C,D$ such that $d_C = d_A$ and $d_D=d_B$. We will call any such channel \textbf{completely Rains-preserving}.
In other words, a map $\Lambda: AB \to A'B'$ is completely Rains-preserving iff it is CPTP and
\begin{equation}\begin{aligned}\label{eq:crp}
  \norm{ \Lambda \otimes \textrm{id} (X)^{T_{B'D'}} }{1} \leq 1 \quad \forall \, X: X \mgeq 0,\,  \norm{X^{T_{BD}}}{1} \leq 1.
\end{aligned}\end{equation}
We will now show that these maps are precisely the set of PPT channels.
\begin{theorem}
A quantum channel is PPT iff it is completely Rains-preserving.
\end{theorem}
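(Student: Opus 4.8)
The plan is to reduce the statement to condition~(ii) of Definition~\ref{def: PPT opt}: a channel $\Lambda\colon AB\to A'B'$ is PPT iff the map $\wt\Lambda \coloneqq T_{B'}\circ\Lambda\circ T_{B}$ is completely positive. Writing $D'=D$, $C'=C$ for the untouched ancillary systems, the engine of both directions is the commutation identity
\[
  (\Lambda\otimes\idc)(X)^{T_{B'D'}} = (\wt\Lambda\otimes\idc)\!\left(X^{T_{BD}}\right),
\]
valid for every Hermitian $X$ on $ABCD$: applying $T_{B'}$ first turns the output transpose of $\Lambda$ into $\wt\Lambda\circ T_{B}$ (using $T_{B}\circ T_{B}=\idc$), while $T_{D'}$ commutes with $\wt\Lambda\otimes\idc$ since the two act on disjoint tensor factors. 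One also checks that $\wt\Lambda$ is trace preserving, hence so is $\wt\Lambda\otimes\idc$, independently of any positivity.

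For the forward direction, suppose $\Lambda$ is PPT, so $\wt\Lambda$ is CP and $\wt\Lambda\otimes\idc$ is CPTP. Since every CPTP map is a trace-norm contraction on Hermitian operators, the identity gives, for any $X\mgeq 0$ with $\norm{X^{T_{BD}}}{1}\le 1$,
\[
  \norm{(\Lambda\otimes\idc)(X)^{T_{B'D'}}}{1} = \norm{(\wt\Lambda\otimes\idc)(X^{T_{BD}})}{1} \le \norm{X^{T_{BD}}}{1}\le 1,
\]
which is exactly \eqref{eq:crp}; thus $\Lambda$ is completely Rains-preserving.

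For the converse, let $\ket{\Omega}=\sum_i \ket{i}_{AB}\ket{i}_{CD}$ be the unnormalised maximally entangled vector between $AB$ and the ancilla $CD$; here the matching $d_C=d_A$, $d_D=d_B$ is essential, as it makes $\Omega\coloneqq\proj{\Omega}$ the Choi state certifying complete positivity of $\wt\Lambda$. The key observation is that, although $\Omega$ is maximally entangled across $AB|CD$, it factorises across the bipartition $AC|BD$ into a maximally entangled pair on $AC$ tensored with one on $BD$; it is therefore a product — in particular PPT — state for that cut, so $\Omega^{T_{BD}}\mgeq 0$. Feeding $X\coloneqq \Omega^{T_{BD}}/\Tr\Omega\mgeq 0$ into the completely-Rains-preserving hypothesis (note $\norm{X^{T_{BD}}}{1}=\norm{\Omega}{1}/\Tr\Omega=1$) and invoking the identity yields $\norm{(\wt\Lambda\otimes\idc)(\Omega)}{1}\le\Tr\Omega$. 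Since $\wt\Lambda\otimes\idc$ is trace preserving, the Choi matrix $J_{\wt\Lambda}=(\wt\Lambda\otimes\idc)(\Omega)$ has trace $\Tr\Omega$; and as every Hermitian $Z$ obeys $\norm{Z}{1}\ge\Tr Z$ with equality iff $Z\mgeq 0$, we conclude $J_{\wt\Lambda}\mgeq 0$, i.e. $\wt\Lambda$ is CP and $\Lambda$ is PPT.

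The crux — and the step I expect to be the main obstacle — is recognising that the Choi state, maximally entangled across $AB|CD$, is nevertheless separable across the cut $AC|BD$ that defines the Rains set, so that the completely-Rains-preserving hypothesis genuinely constrains the channel on the very operator needed to certify complete positivity. The remaining care lies in the partial-transpose bookkeeping of the commutation identity and in the dimension matching that makes $\Omega$ the correct Choi state.
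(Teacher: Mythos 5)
Your proof is correct, and its second half takes a genuinely different route from the paper. The forward direction (PPT implies completely Rains-preserving) is essentially the paper's argument in more elementary form: the paper introduces a ``PPT-diamond norm'', observes via the same partial-transpose bookkeeping you use that complete Rains-preservation is equivalent to this norm of $\wt\Lambda \coloneqq T_{B'}\circ\Lambda\circ T_B$ being at most $1$, and bounds it by the diamond norm of the CPTP map $\wt\Lambda$, which equals $1$; your direct appeal to trace-norm contractivity of $\wt\Lambda\otimes\idc$ is the same fact stripped of the norm formalism. The converse is where you depart. The paper argues set-theoretically: for a CPTP, completely Rains-preserving $\Lambda$ and \emph{any} PPT state $\sigma$, the output $(\Lambda\otimes\idc)(\sigma)$ is a unit-trace element of the Rains set, hence (by $\norm{Z}{1}\geq\Tr Z$ with equality iff $Z\mgeq 0$, applied to its partial transpose) again a PPT state; so $\Lambda$ is completely $\PPTP$-preserving, which is condition (iii) of Def.~\ref{def: PPT opt}. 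You instead exhibit a \emph{single} witness --- the normalised partial transpose of the Choi state, which lies in the Rains set across $AC|BD$ because $\ket\Omega$ factorises into maximally entangled pairs on $AC$ and $BD$ --- and apply the same inequality $\norm{Z}{1}\geq\Tr Z$ to the Choi matrix of $\wt\Lambda$, certifying condition (ii) of Def.~\ref{def: PPT opt} via Choi's theorem. The two converses thus rest on the same elementary fact but apply it to different objects: the paper needs preservation of all PPT states and leans on the equivalence of PPT with complete $\PPTP$-preservation, while your argument shows that preservation of one specific Rains-set state already forces PPT-ness and lands directly on the complete-positivity characterisation --- a more concrete and arguably more economical certificate. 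All the steps you flag as delicate (the commutation identity, trace preservation of $\wt\Lambda$, the product structure of $\Omega$ across the $AC|BD$ cut, and the dimension matching $d_C=d_A$, $d_D=d_B$ required by the definition) check out.
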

\begin{proof}
One direction is straightforward: if $\Lambda$ is completely positive and completely Rains-preserving, then for any $\sigma \in \PPTP(AC|BD)$ we necessarily have $\Lambda \otimes \textrm{id } (\sigma) \in \PPTP(A'C'|B'D')$ due to the fact that $\PPTP = \PPTRP \cap \HH_+$. This means that $\Lambda$ is completely $\PPTP$-preserving, i.e. PPT.

To see the opposite inclusion, define a ``PPT-diamond norm'' of any map $\Gamma$ as
\begin{equation*}\begin{aligned}
  \norm{\Gamma}{\!\!\vardiamond} \coloneqq&\max \lset \norm{ \Gamma \otimes \textrm{id } (X^{T_{BD}}) }{1} \bar \norm{X^{T_{BD}}}{1} \leq 1,\; X \mgeq 0 \rset
  \\=& \max \lset \norm{ \Gamma \otimes \textrm{id } (X) }{1} \bar \norm{X}{1} \leq 1,\; X^{T_{BD}} \mgeq 0 \rset.
\end{aligned}\end{equation*}
Rewriting Eq. \eqref{eq:crp} one can see that $\Lambda$ is completely Rains-preserving iff
\begin{equation*}\begin{aligned}
  \norm{T_{B'} \circ \Lambda \circ T_B}{\!\!\vardiamond} \leq 1.
\end{aligned}\end{equation*}
Notice then that, for any Hermiticity-preserving map $\Gamma$ it holds that
\begin{equation*}\begin{aligned}
  \norm{\Gamma}{\!\!\vardiamond} &\leq \max \lset \norm{ \Gamma \otimes \textrm{id } (X) }{1} \bar \norm{X}{1} \leq 1 \rset = \norm{\Gamma}{\!\!\Diamond}
\end{aligned}\end{equation*}
where $\norm{\cdot}{\!\!\Diamond}$ is the diamond norm (completely bounded trace norm)~\cite{paulsen_2002,watrous_2004}.
Since for any PPT channel $\Lambda$ the map $T_{B'} \circ \Lambda \circ T_B$ is CPTP, we have that any PPT channel satisfies $\norm{T_{B'} \circ \Lambda \circ T_B}{\!\!\Diamond} = 1$~\cite{watrous_2004} and therefore is completely Rains-preserving.
\end{proof}
The above result establishes an operational connection between the sets $\PPTPPR$ and $\PPTRPPR$, showing that their ``completely preserving'' variants reduce to the same set of operations (PPT).

%%%%%%%%%%%%%%%%%%%%%%%%%%%%%%%%%%%%%%%%%%%%%%%%%%%%%%%%%%%%%%%%%%%%%%%%%%%%%%%%%%%%%%%%%%%%%

\subsection{Pure-state distillation and separability-preserving operations}\label{sec:sepp}

The class of \textbf{separability-preserving operations} $\SEPP$ is defined as all CPTP maps $\Lambda$ such that $\sigma \in \SEP \Rightarrow \Lambda(\sigma) \in \SEP$, that is, as the maximal class of free (non-entangling) operations in the resource theory of entanglement. Notice that this class does not \textit{completely} preserve separability, in the sense that it could generate entanglement if applied to a part of a larger system; if such complete preservation is imposed, we instead recover the class of separable operations. The inclusions between the different classes of operations are shown in Fig.~\ref{fig:op-ent}.

The fidelity of distillation under $\SEPP$ was first derived in~\cite{brandao_2010}, and can be used to characterise the distillable entanglement as follows.
\begin{lemma}[\cite{brandao_2011}]\label{lemma:sepp_fid}It holds that $\begin{displaystyle}\FF{\SEPP} (\rho,m) = \GG{m}_{\SEP} (\rho) = \frac{1}{m} \left(\TTT{m-1}_\SEP(\rho) + 1\right)\end{displaystyle}$, and hence
\begin{equation}\begin{aligned}
  E^{(1),\ve}_{d,\SEPP} (\rho) = \logfloor{ \min_{\sigma \in \SEP} D^{\ve}_H(\rho\|\sigma) }.
\end{aligned}\end{equation}\end{lemma}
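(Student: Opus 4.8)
The plan is to follow the same route as the proof of Theorem~\ref{thm:ppt_dist}, reducing the optimisation over channels to a scalar optimisation over a single operator $X$ by exploiting the symmetry of the target state. First I would note that $\Psi_m$ is invariant under the twirl $\T(\cdot) = \int (U\otimes U^*)\cdot(U\otimes U^*)^\dagger\,\mathrm{d}U$, and that $\T$ is a mixture of local unitaries (acting as $U$ on $A'$ and $U^*$ on $B'$) and hence itself separability-preserving; composing it after any $\Lambda \in \SEPP$ keeps the map in $\SEPP$ while leaving $\<\Lambda(\rho),\Psi_m\>$ unchanged. It therefore suffices to optimise over twirled channels $\Lambda = \T\circ\Lambda$, which take the isotropic form
\[
  \Lambda(Z) = \<Z, X\>\,\Psi_m + \frac{\<Z, \id - X\>}{m^2-1}\,(\id - \Psi_m),
\]
where complete positivity forces $0 \mleq X \mleq \id$. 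Since $\<\Lambda(\rho),\Psi_m\> = \<\rho, X\>$, this recasts $\FF{\SEPP}(\rho,m)$ as a maximisation of $\<\rho, X\>$ over the admissible $X$.

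The crucial step is to translate the defining constraint $\sigma \in \SEP \Rightarrow \Lambda(\sigma)\in\SEP$ into a linear condition on $X$. Here I would use that, by construction, $\Lambda(\sigma)$ is an isotropic state on $\CC^m\otimes\CC^m$, together with the standard fact that an isotropic state is separable precisely when its singlet fraction does not exceed $\tfrac1m$. Because $\<\Lambda(\sigma),\Psi_m\> = \<\sigma, X\>$, this yields the equivalence $\Lambda(\sigma)\in\SEP \iff \<\sigma, X\>\leq \tfrac1m$, so that the separability-preserving requirement becomes $\<\sigma,X\>\leq\tfrac1m$ for every $\sigma\in\SEP$. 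I expect this identification to be the main obstacle, since it is the only place where a genuine structural property of separable states (rather than pure convex duality) is invoked; the remainder is bookkeeping entirely parallel to the PPT case.

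With this constraint in hand one reads off
\[
  \FF{\SEPP}(\rho,m) = \max\lset \<\rho,X\> \bar 0\mleq X\mleq\id,\ \<\sigma,X\>\leq\tfrac1m\ \ \forall\,\sigma\in\SEP\rset,
\]
which is exactly $\GG{m}_{\SEP}(\rho)$ upon recognising that the condition $\<\sigma,X\>\leq\tfrac1m$ for all $\sigma\in\SEP$ is the statement $X\in\tfrac1m\SEP^\circ$. Since $\SEP$ consists of unit-trace operators, the relation noted immediately after the definition of $\GG{m}_\Q$ gives $\GG{m}_{\SEP}(\rho) = \tfrac1m\bigl(\TTT{m-1}_{\SEP}(\rho)+1\bigr)$, establishing the first claim. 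Finally, because $\FF{\SEPP}(\rho,m)=\GG{m}_{\SEP}(\rho)$ has exactly the form required by Theorem~\ref{thm:hyp_entropy} with the compact set $\Q=\SEP$, and $\SEP$ is already convex so that $\conv(\SEP)=\SEP$, the one-shot distillable entanglement follows directly as $\logfloor{\min_{\sigma\in\SEP}D^\ve_H(\rho\|\sigma)}$.
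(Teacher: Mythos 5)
Your proposal is correct and follows essentially the same route as the paper: the paper's proof also reduces to the twirling argument of Theorem~\ref{thm:ppt_dist} and invokes the Horodecki characterisation of separability of isotropic states (separable iff PPT, equivalently iff the singlet fraction is at most $\tfrac{1}{m}$), then concludes via Theorem~\ref{thm:hyp_entropy}. Your explicit justification that the twirl is itself in $\SEPP$ and your identification of the constraint as $X \in \tfrac{1}{m}\SEP^\circ$ are exactly the bookkeeping the paper leaves implicit.
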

\begin{proof}Follows in exactly the same way as the proof of Thm.~\ref{thm:ppt_dist}, since isotropic states of the form
\begin{equation}\begin{aligned}
  \< Z, X\> \Psi_m + \frac{\< Z, \id - X \>}{m^2-1} \left(\id - \Psi_m\right)
\end{aligned}\end{equation}
are separable if and only if they are PPT~\cite{horodecki_1999}.\end{proof}

By the inclusion $\SEP \subseteq \PPTP \subset \PPTR$, we immediately have that
\begin{equation}\begin{aligned}\label{eq:fid_and_G}
F_{\SEPP} (\rho,m) %&\geq \GG{m}_{\PPT_+} (\rho)\\
&\geq F_{\PPTPPR} (\rho,m)\\
&\geq F_{\PPTO} (\rho,m)
 \end{aligned}\end{equation}
 thus establishing a hierarchy of rates of distillation between the operations $\SEPP$, $\PPTPPR$, and $\PPTO$. Notice that this does not follow from their definition, as there is no inclusion between the sets of maps $\SEPP$ and $\PPTO$, nor between $\SEPP$ and $\PPTPPR$.

Crucially, for any pure state, the fidelity of distillation can be computed exactly. To establish this result, we will employ the so-called \textit{$m$-distillation norm}, introduced in~\cite{regula_2017} as
\begin{align}\label{eq:distillation_norm_def}
  \mnorm{\ket{x}} \coloneqq& \min_{\ket{x}=\ket{y}+\ket{z}} \lnorm{\ket{y}}{1} + \sqrt{m} \lnorm{\ket{z}}{2}\\
  =& \max \lset \cbraket{x|w} \bar \lnorm{\ket{w}}{\infty} \leq 1,\; \lnorm{\ket{w}}{2} \leq \sqrt{m} \rset\nonumber
\end{align}
for any vector $\ket{x} \in \CC^{d}$. One can immediately notice from the inequality $\lnorm{\cdot}{2} \leq \lnorm{\cdot}{1} \leq \sqrt{d} \lnorm{\cdot}{2}$ that we have $\norm{\ket x}{[1]} = \lnorm{\ket{x}}{2}$ and $\norm{\ket{x}}{[d]} = \lnorm{\ket{x}}{1}$. Notably, for any normalised vector $\ket{x} \in \CC^{d}$ and any integer $m \in \{1, \ldots, d\}$, the norm admits an exact expression as~\cite{regula_2017}
\begin{equation*}\begin{aligned}
  \mnorm{\ket{x}} = \lnorm{\ket{x^\downarrow_{1:m-k^\star}}}{1} + \sqrt{k^\star} \lnorm{\ket{x^\downarrow_{m-k^\star+1:d}}}{2},
\end{aligned}\end{equation*}
where $\ket{x^\downarrow_{1:k}}$ denotes the vector consisting of the $k$ largest (by magnitude) coefficients of $\ket{x}$, analogously $\ket{x^\downarrow_{k+1:d}}$ denotes the vector of the $d-k$ smallest coefficients of $\ket{x}$ with $\ket{x^\downarrow_{1:0}}$ being the zero vector, and we define
\begin{equation}\begin{aligned}
  k^\star \coloneqq \argmin_{1 \leq k \leq m} \frac{1}{k}\lnorm{\ket{x^\downarrow_{m-k+1:d}}}{2}^2.
\end{aligned}\end{equation}
We stress that the computation of $\mnorm{\ket{x}}$ is thus reduced to evaluating $m-1$ inequalities.

We will now use $\ket{\xi_\psi} \in \RR^d$ to denote the vector of Schmidt coefficients of a pure state $\ket\psi$, in the sense that $\ket{\xi_\psi} = (\alpha_1, \ldots, \alpha_d)^T$ where $\ket{\psi} = \sum_i \alpha_i \ket{i}_A \ket{i}_B$ for some orthonormal bases $\{\ket{i_A}\}, \{\ket{i}_B\}$. Employing the $m$-distillation norm, we then have the following.
\begin{theorem}\label{thm:Tm_pure_states}
For any $m \geq 1$, it holds that
\begin{equation}\begin{aligned}
  \TTT{m-1}_\SEP(\psi) = \mnorm{\ket{\xi_\psi}}^2 - 1,
\end{aligned}\end{equation}
and in particular $F_{\SEPP} (\psi,m) = \frac{1}{m} \mnorm{\ket{\xi_\psi}}^2$.
\end{theorem}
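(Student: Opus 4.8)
The plan is to compute $\TTT{m-1}_\SEP(\psi)$ directly from its dual (primal) form in Proposition~\ref{prop:proj_pos_neg}, namely $\TTT{m-1}_\SEP(\psi) = \min_{X \in \SEP\*\*} \left[ (m-1)\Tr(\psi - X)_+ + \Tr(\psi-X)_- \right]$, and to show this equals $\mnorm{\ket{\xi_\psi}}^2 - 1$. The key simplification is that $\ket\psi$, having Schmidt decomposition $\ket\psi = \sum_i \alpha_i \ket{ii}$ with $\ket{\xi_\psi} = (\alpha_1,\dots,\alpha_d)^T$, has a high degree of symmetry: it is invariant under the ``local phase'' group $U\otimes \bar U$ for diagonal unitaries $U$, and under simultaneous permutation of the Schmidt basis on both sides. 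First I would argue, via a twirling argument over this symmetry group, that the optimal $X$ can be taken to share the symmetry of $\psi$, hence to be supported on the maximally correlated subspace $\sspan\{\ket{ii}\}$ and to be real in that basis. This reduces the matrix optimisation over $\SEP\*\*$ to an optimisation over vectors, since a symmetric $X$ is governed by the Gram-type data of its coefficients.

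The second step is to rewrite $\GG{m}_\SEP$ directly in its variational form rather than going through $\TTT{m-1}_\SEP$, which is cleaner. By Lemma~\ref{lemma:sepp_fid} we have $F_\SEPP(\psi,m) = \GG{m}_\SEP(\psi) = \frac1m\left(\TTT{m-1}_\SEP(\psi)+1\right)$, so it suffices to show $\GG{m}_\SEP(\psi) = \frac1m \mnorm{\ket{\xi_\psi}}^2$, which then yields the claimed $\TTT{m-1}_\SEP$ identity immediately. Using the defining maximisation
\begin{equation}\begin{aligned}
  \GG{m}_\SEP(\psi) = \max \lset \<\psi, W\> \bar 0 \mleq W \mleq \id,\; W \in \tfrac1m \SEP^\circ \rset,
\end{aligned}\end{equation}
I would exploit the fact that $\SEP^\circ$ is characterised by the condition $\<\proj{\phi}\otimes\proj{\eta}, W\>\leq 1$ for all product vectors. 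Restricting to the symmetric optimal $W$ supported on $\sspan\{\ket{ii}\}$, the product-state constraint becomes a statement about $\ket w$-type vectors: writing the relevant $W$ through a vector $\ket w\in\CC^d$, the constraint $W\in\frac1m\SEP^\circ$ translates into $\lnorm{\ket w}{2}\le\sqrt m$ while $W\mleq\id$ translates into $\lnorm{\ket w}{\infty}\le 1$, and $\<\psi,W\>$ becomes $\cbraket{\xi_\psi|w}^2$. This is precisely the dual (second) form of the $m$-distillation norm in Eq.~\eqref{eq:distillation_norm_def}, giving $\GG{m}_\SEP(\psi)=\frac1m\mnorm{\ket{\xi_\psi}}^2$.

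The main obstacle is the reduction step: establishing rigorously that the optimal $W$ (or $X$) inherits the full symmetry of $\psi$ and is therefore diagonal in the Schmidt basis with real, correlated structure, so that the matrix problem collapses to the vector problem defining $\mnorm{\cdot}$. This requires a careful twirling argument showing that averaging a feasible $W$ over the stabiliser group of $\psi$ preserves feasibility (both $0\mleq W\mleq\id$ and $W\in\frac1m\SEP^\circ$ are invariant, the latter because the symmetry group permutes product states among product states) while not decreasing the objective $\<\psi,W\>$. The second delicate point is matching the two translated constraints $\lnorm{\ket w}{2}\le\sqrt m$ and $\lnorm{\ket w}{\infty}\le 1$ exactly to the $W\in\frac1m\SEP^\circ$ and $W\mleq\id$ conditions; I expect the $\ell_\infty$ bound to come straightforwardly from $W\mleq\id$ evaluated on Schmidt-basis product vectors $\ket{jj}$, whereas the $\ell_2$ bound will follow from testing the polar constraint against a balanced product vector $\frac1{\sqrt d}\sum_j e^{i\theta_j}\ket{jj}$ and optimising the phases, which is where the structure of $\SEP^\circ$ is genuinely used.
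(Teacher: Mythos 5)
Your overall route --- reduce the matrix optimisation defining $\GG{m}_\SEP(\psi)$ to a vector problem and recognise the dual form of the $m$-distillation norm --- parallels the paper's, but your proposal is missing the one step that constitutes the technical heart of the proof, and the step you do spell out in detail is stated incorrectly. The phase twirl over diagonal unitaries $U\otimes U^*$ does let you take $W$ supported on $\sspan\{\ket{ii}\}$ and real in that basis, after discarding a positive remainder of the form $\sum_{i\neq j}c_{ij}\proj{ij}$ (the simultaneous-permutation symmetry you also invoke is not a symmetry of $\psi$ unless the Schmidt coefficients are degenerate, but it is not needed). However, this only leaves you with a $d\times d$ \emph{matrix} variable $\tilde W = (W_{ij})$, constrained by $0\mleq\tilde W\mleq\id$ and $\braket{v|\tilde W|v}\leq 1/m$ for all $\lnorm{\ket{v}}{1}\leq 1$; no symmetry argument forces the optimum to have rank one, which is what ``writing the relevant $W$ through a vector $\ket{w}$'' silently assumes. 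This rank-one reduction is exactly what the paper imports from Thm.~10 of~\cite{regula_2018}. If you want your route to be self-contained you must supply it, e.g.\ by a Cauchy--Schwarz argument: given any feasible $\tilde W$, the rank-one matrix $\proj{w'}$ with $\ket{w'}=\tilde W\ket{\xi_\psi}/\lnorm{\smash{\tilde W^{1/2}}\ket{\xi_\psi}}{2}$ is again feasible, since $|\braket{v|\tilde W|\xi_\psi}|^2\leq\braket{v|\tilde W|v}\braket{\xi_\psi|\tilde W|\xi_\psi}$, and it attains the same objective value $\braket{\xi_\psi|\tilde W|\xi_\psi}$.

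Second, your constraint dictionary is inverted, and the test vectors you propose would not establish it. For $W$ supported on the maximally correlated subspace with block $\frac{1}{m}\proj{w}$, it is the operator constraint $W\mleq\id$ that yields $\lnorm{\ket{w}}{2}\leq\sqrt{m}$ (test it on the unit vector $\sum_i w_i\ket{ii}/\lnorm{\ket{w}}{2}$, which is entangled, not product), while the polar constraint $W\in\frac{1}{m}\SEP^\circ$ yields $\lnorm{\ket{w}}{\infty}\leq 1$, obtained from the concentrated product vectors $\ket{jj}$: for positive semidefinite $W$ supported on that subspace one has $\sup_{\sigma\in\SEP}\<\sigma,W\>=\max_i\braket{ii|W|ii}$. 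Your proposal asserts the opposite pairing. Moreover, the vector $\frac{1}{\sqrt d}\sum_j e^{i\theta_j}\ket{jj}$ with which you plan to extract the $\ell_2$ bound from the polar constraint is maximally entangled, not a product vector, so it is not an admissible test against $\SEP^\circ$; genuine balanced product vectors $\bigl(\frac{1}{\sqrt d}\sum_j e^{i\theta_j}\ket{j}\bigr)\otimes\bigl(\frac{1}{\sqrt d}\sum_j e^{i\phi_j}\ket{j}\bigr)$ only give the weaker bound $\lnorm{\ket{w}}{1}\leq d$. No $\ell_2$ information can come from $\SEP^\circ$ here: the constraint it imposes on such $W$ is exactly the $\ell_\infty$ one. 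With the rank-one reduction supplied and the dictionary corrected, your symmetrisation route does go through and would be a legitimate alternative to the paper's argument, which avoids twirling altogether by identifying the feasible set as the polar of $\conv(\SEP\cup\frac{1}{m}\DD)$ and passing to vector-level gauge functions.
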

\begin{proof}To begin, notice that with a simple rearrangement of terms $\TTT{m-1}_\SEP$ can be written as
\begin{equation}\begin{aligned}
  %\TTT{m-1}_\SEP (\rho) + 1 &= 
  &\max \lset \<\rho, W \> \bar 0 \mleq W \mleq m\id,\; W \in \SEP^\circ \rset\\
   = &\max \lsetr \< \rho, W \> \barr  W \mgeq 0,\;  W \in \left(\SEP \cup \frac{1}{m}\DD\right)^\circ \rsetr
\end{aligned}\end{equation}
where we used that $(\mathcal{C} \cup \mathcal{D})^\circ = \mathcal{C}^\circ \cap \mathcal{D}^\circ$ for convex and closed sets. The set $\conv(\SEP \cup \frac{1}{m}\DD) \eqqcolon \Q_m$ can be noticed to be the convex hull of rank-one terms as $\Q_m = \conv \lset \proj{x} \bar \ket{x} \in \V \cup \N_m \rset$ where $\V = \{ \ket{\phi}_A \otimes \ket{\eta}_B \}$ is the set of all normalised product state vectors and $ \N_m \coloneqq \lset \ket{x} \bar \lnorm{\ket{x}}{2} = 1 / \sqrt{m} \rset.$

By Thm. 10 in~\cite{regula_2018}, for any pure state $\ket\psi$ we then have~\footnote{Strictly speaking,~\cite[Thm. 10]{regula_2018} is obtained for sets of \textit{normalised} operators; it is easy to notice, however, that the proof does not rely on normalisation and the Theorem applies in full generality also for unnormalised sets of operators, such as $\Q_m$ in our proof.}
\begin{equation}\begin{aligned}
  \TTT{m-1}_\SEP (\psi) + 1 &= \max \lset \braket{\psi | W | \psi} \bar W \mgeq 0, W \in \Q_m^\circ \rset \\ &= \max \lset \cbraket{\psi | w}^2 \bar \ket{w} \in (\V \cup \N_m)^\circ \rset\\
  &= \Gamma_{\V \cup \N_m}(\ket\psi)^2
\end{aligned}\end{equation}
which means that the value of $\TTT{m-1}_\SEP$ will be given by a corresponding gauge function $\Gamma_{\V \cup \N_m}(\ket\psi)^2$ defined at the level of the underlying Hilbert space, instead of the whole space of Hermitian operators. Since $\V$ and $\N_m$ are both compact sets, by standard results in convex analysis (see e.g.~\cite{rockafellar_1970}, 16.4.1 and 15.1.2), this gauge can be obtained as
\begin{equation}\begin{aligned}
  \Gamma_{\V \cup \N_m}(\ket\psi) = \min_{\ket\psi = \ket{x} + \ket{y}} \Gamma_\V (\ket{x}) + \Gamma_{\N_m} (\ket{y}).
\end{aligned}\end{equation}

Now, for any vector $\ket{x}$ we have $\Gamma_{\N_m} (\ket{x}) = \sqrt{m} \lnorm{x}{2} = \sqrt{m} \lnorm{\ket{\xi_x}}{2}$ and it is known that $\Gamma_\V (\ket{x})$ can be computed as $\lnorm{\ket{\xi_x}}{1}$ (see e.g.~\cite{rudolph_2005,johnston_2015}). By optimising over vectors $\ket{x},\ket{y}$ in the Schmidt basis of $\ket\psi$ only, the problem reduces to the $m$-distillation norm of the Schmidt vector $\ket{\xi_\psi}$, and we thus have $\TTT{m-1}_\SEP(\psi) \leq \mnorm{\ket{\xi_\psi}}^2 - 1$.

To show the opposite inequality, we use the fact that $(\V \cup \N_m)^\circ = \V^\circ \cap \N_m^\circ$ to write the gauge $\Gamma_{\V \cup \N_m}$ in its dual form as
\begin{equation}\begin{aligned}
  &\TTT{m-1}_\SEP(\psi) + 1 = \\ & \quad \max \lset \cbraket{\psi|x}^2 \bar \Gamma^\circ_{\N_m} (\ket x) \leq 1,\; \Gamma^\circ_{\V}(\ket x) \leq 1 \rset
\end{aligned}\end{equation}
where we have, for any $\ket{x}$, $\Gamma^\circ_{\V}(\ket x) = \lnorm{\ket{\xi_x}}{\infty}$~\cite{shimony_1995}. By optimising over all vectors $\ket{x}$ in the Schmidt basis of $\ket\psi$, we recover again the $m$-distillation norm of $\ket{\xi_\psi}$ and the result follows.
\end{proof}

From the above, we then have the fidelity of distillation under $\SEPP$ of a pure state $\ket\psi$ as $\frac{1}{m}\mnorm{\ket{\xi_\psi}}^2$. Crucially, the $m$-distillation norm can be closely connected with the concept of majorisation, allowing us to relate it to the optimal fidelity of pure-state distillation under LOCC and one-way LOCC ($\OLOCC$), which was previously considered in \cite{vidal_2000-1}. We will now rederive the exact expression for the fidelity of pure-state distillation under LOCC in terms of the $m$-distillation norm, and in particular establish an operational equivalence between all relevant sets of operations in the distillation of entanglement from pure states.

\begin{theorem}\label{thm:all_fid_equal_LOCC}
For any pure state $\ket\psi$, any integer $m \geq 1$, and any set of operations $\O \in \{\OLOCC, \LOCC, \PPTO, \PPTRPPR, \PPTPPR, \SEPP \}$, the fidelity of distillation is given by 
\begin{equation}\begin{aligned}
  &F_\O (\psi,m) =\frac{1}{m} \mnorm{\ket{\xi_\psi}}^2.
\end{aligned}\end{equation}
\end{theorem}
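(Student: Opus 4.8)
The plan is to establish the common value by a squeeze. Theorem~\ref{thm:Tm_pure_states} already pins the top of the hierarchy at $F_{\SEPP}(\psi,m)=\tfrac1m\mnorm{\ket{\xi_\psi}}^2$, so it suffices to arrange all six fidelities into a single monotone chain lying below this value and above $F_{\OLOCC}$, and then to prove the matching lower bound $F_{\OLOCC}(\psi,m)\ge\tfrac1m\mnorm{\ket{\xi_\psi}}^2$. That lower bound forces every inequality in the chain to be an equality, and the theorem follows at once.

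To order the chain, note that $F_\O$ is a supremum of $\langle\Lambda(\psi),\Psi_m\rangle$ over $\Lambda\in\O$ and is therefore monotone under inclusion of operation classes. From Fig.~\ref{fig:op-ent} we have $\OLOCC\subseteq\LOCC\subseteq\SEPO\subseteq\PPTO$, giving $F_{\OLOCC}\le F_{\LOCC}\le F_{\PPTO}$. For the PPT-based and separability-preserving classes I would instead read off the gauge representations $F_{\PPTO}=\GG{m}_{\PPTR}$, $F_{\PPTRPPR}=\GG{m}_{\PPTRP}$, $F_{\PPTPPR}=\GG{m}_{\PPTP}$ and $F_{\SEPP}=\GG{m}_{\SEP}$ provided by Theorems~\ref{thm:ppt_dist},~\ref{thm:rains_preserving} and Lemma~\ref{lemma:sepp_fid}. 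Since $\GG{m}_\Q$ is antitone in $\Q$ (a larger set has a smaller polar $\Q^\circ$, hence a smaller feasible region in its definition), the inclusions $\SEP\subseteq\PPTP\subseteq\PPTRP\subseteq\PPTR$ yield $F_{\SEPP}\ge F_{\PPTPPR}\ge F_{\PPTRPPR}\ge F_{\PPTO}$, extending Eq.~\eqref{eq:fid_and_G} to the Rains class. Concatenating the two halves squeezes all six fidelities between $F_{\OLOCC}$ and $F_{\SEPP}=\tfrac1m\mnorm{\ket{\xi_\psi}}^2$.

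It remains only to show $F_{\OLOCC}(\psi,m)\ge\tfrac1m\mnorm{\ket{\xi_\psi}}^2$. Here I would invoke the exact optimal fidelity of pure-state distillation under LOCC computed by Vidal, Jonathan and Nielsen~\cite{vidal_2000-1}, together with the fact that their optimal transformation is realised by a one-way protocol (Alice measures, communicates her outcome, and Bob applies a correcting unitary), so that $F_{\OLOCC}(\psi,m)=F_{\LOCC}(\psi,m)$ equals their closed form in the decreasingly ordered Schmidt coefficients $\alpha_1\ge\cdots\ge\alpha_d$ of $\ket\psi$. The task then reduces to a purely algebraic identity: matching their formula to the evaluated $m$-distillation norm recorded before the theorem, namely $\mnorm{\ket{\xi_\psi}}=\sum_{i=1}^{m-k^\star}\alpha_i+\sqrt{k^\star\sum_{i=m-k^\star+1}^{d}\alpha_i^2}$ for the optimal truncation index $k^\star$. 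I would carry this out through the dual form $\mnorm{\ket{\xi_\psi}}=\max\{\cbraket{\xi_\psi|w}\mid\lnorm{w}{\infty}\le1,\ \lnorm{w}{2}\le\sqrt m\}$, which is the natural shape for a maximised fidelity: choosing the optimiser $\ket w$ with nonnegative entries aligned with the ordered $\alpha_i$ reproduces the truncated sum term by term.

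The step I expect to be the main obstacle is exactly this identification of the two closed forms (including reconciling the squared-versus-amplitude fidelity convention of~\cite{vidal_2000-1}). The norm is stated as a minimisation over splits $\ket{\xi_\psi}=\ket y+\ket z$, equivalently via the argmin $k^\star$ of the averaged tail $\tfrac1k\lnorm{\ket{\xi^\downarrow_{m-k+1:d}}}{2}^2$, whereas the distillation fidelity is most naturally a maximisation, so one must confirm that the optimal truncation index selected by the two optimisations coincides, and that the extreme regimes $k^\star=m$ (the pure $\ell_2$ split, $\ket\psi$ near maximally entangled) and $k^\star=1$ (the $\ell_1$-dominated split, $\ket\psi$ near product) are handled correctly. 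Once the evaluated norm and the VJN expression are seen to agree, the lower bound holds, the chain collapses, and all six operation classes share the single value $\tfrac1m\mnorm{\ket{\xi_\psi}}^2$.
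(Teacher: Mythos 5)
Your overall strategy is the same squeeze the paper itself uses: Theorem~\ref{thm:Tm_pure_states} pins $F_\SEPP(\psi,m)=\frac{1}{m}\mnorm{\ket{\xi_\psi}}^2$ at the top, the gauge representations from Theorems~\ref{thm:ppt_dist} and~\ref{thm:rains_preserving} and Lemma~\ref{lemma:sepp_fid} together with antitonicity of $\GG{m}_\Q$ in $\Q$ order the large classes, inclusion of operation classes orders $\OLOCC\subseteq\LOCC\subseteq\SEPO\subseteq\PPTO$, and everything collapses once the matching lower bound $F_\OLOCC(\psi,m)\geq\frac{1}{m}\mnorm{\ket{\xi_\psi}}^2$ is in hand. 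Your chain is assembled correctly (and you rightly slot $\PPTRPPR$ in via $\SEP\subseteq\PPTP\subseteq\PPTRP\subseteq\PPTR$, extending Eq.~\eqref{eq:fid_and_G}).

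Where you genuinely diverge is the lower bound, and this is also where your proposal stops short. The paper does not import the Vidal--Jonathan--Nielsen closed form; it rederives it self-containedly: it writes down the explicit ansatz $\ket\eta$ obtained by keeping the $m-k^\star$ largest Schmidt coefficients of $\ket\psi$ and flattening the tail, proves through a short chain of inequalities --- all consequences of the argmin property defining $k^\star$ --- that $\ket{\xi_\eta}$ majorises $\ket{\xi_\psi}$, invokes Nielsen's criterion~\cite{nielsen_1999} to obtain a deterministic one-way protocol $\ket\psi\to\ket\eta$, and then computes $\cbraket{\Psi_m|\eta}^2=\frac{1}{m}\mnorm{\ket{\xi_\psi}}^2$ directly. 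Your route --- citing \cite{vidal_2000-1} for the optimal (one-way achievable) LOCC fidelity and then matching that formula against the evaluated $m$-distillation norm --- is viable in principle, but the step you flag as ``the main obstacle'' (showing that VJN's optimal truncation index coincides with $k^\star$, and reconciling fidelity conventions) is not a side issue: it is essentially the entire mathematical content of the bound, and it amounts to the same inequality manipulation the paper performs in its majorisation chain. So as written your proposal defers rather than completes the crux. The paper's construction also shows the most economical way to discharge it: one need not reconcile two optimisations at all, only exhibit the single truncated state $\ket\eta$ and verify one inequality, $k^\star\alpha^2_{m-k^\star+1}\leq\lnorm{\ket{\alpha^\downarrow_{m-k^\star+1:d}}}{2}^2$, which follows from the definition of $k^\star$; that single check is what makes the squeeze close.
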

\begin{proof}
We begin by recalling that the $m$-distillation norm of $\ket{\xi_\psi} \coloneqq \left(\alpha_1, \ldots, \alpha_d\right)^T$ can be computed as
\begin{equation}\begin{aligned}
  \mnorm{\ket{\xi_\psi}} = \lnorm{\ket{\alpha^\downarrow_{1:m-k^\star}}}{1} + \sqrt{k^\star} \lnorm{\ket{\alpha^\downarrow_{m-k^\star+1:d}}}{2},
\end{aligned}\end{equation}
with $k^\star \coloneqq \displaystyle\argmin_{1 \leq k \leq m} \frac{\lnorm{\ket{\alpha^\downarrow_{m-k+1:d}}}{2}^2}{k}$. 

Now, it is well-known that the deterministic transformation from $\ket\psi$ to another pure state $\ket\eta$ is possible with (either one-way and two-way) $\LOCCO$ if and only if the Schmidt vector $\ket{\xi_\psi} = \left(\alpha_1, \ldots, \alpha_d\right)^T$ is majorised by the Schmidt vector $\ket{\xi_\eta} \coloneqq \left(\beta_1, \ldots, \beta_d\right)^T$~\cite{nielsen_1999}, that is,
\begin{equation}\begin{aligned}
  \sum_{i=1}^{k} \alpha^2_i \leq \sum_{i=1}^{k} \beta^2_i \quad \forall k \in \{1, \ldots, d\}
\end{aligned}\end{equation}
where we have assumed without loss of generality that the Schmidt coefficients are given in non-increasing order. Let us then define the ansatz
\begin{equation}\begin{aligned}
\ket\eta = \sum_{i=1}^{m-k^\star} \alpha_i \ket{ii} + \sum_{i=m-k^\star+1}^{m} \frac{\lnorm{\ket{\alpha^\downarrow_{m-k^\star+1:d}}}{2}}{\sqrt{k^\star}} \ket{ii}
\end{aligned}\end{equation}
expressed in the Schmidt basis of $\ket{\Psi_m}$, where $k^\star$ is defined as above. To see that the Schmidt coefficients of $\ket\eta$ majorise the ones of $\ket\psi$, let us assume that $k^\star > 1$ (as otherwise the desired relation is trivial) and consider the following chain of equivalent inequalities:
\begin{equation}\begin{aligned}
    \frac{\lnorm{\ket{\alpha^\downarrow_{m-k^\star+1:d}}}{2}^2}{k^\star} &\leq \frac{\lnorm{\ket{\alpha^\downarrow_{m-k^\star+2:d}}}{2}^2}{k^\star-1}\\
    \alpha^2_{m-k^\star+1} + \lnorm{\ket{\alpha^\downarrow_{m-k^\star+2:d}}}{2}^2 &\leq \frac{k^\star}{k^\star-1} \lnorm{\ket{\alpha^\downarrow_{m-k^\star+2:d}}}{2}^2\\
    \alpha^2_{m-k^\star+1} &\leq \frac{1}{k^\star-1} \left(1 - \lnorm{\ket{\alpha^\downarrow_{1:m-k^\star+1}}}{2}^2\right)\\
    k^\star \alpha^2_{m-k^\star+1} &\leq \left(1 - \lnorm{\ket{\alpha^\downarrow_{1:m-k^\star}}}{2}^2\right)\\
    \alpha^2_{m-k^\star+1} &\leq \frac{\lnorm{\ket{\alpha^\downarrow_{m-k^\star+1:d}}}{2}^2}{k^\star}
\end{aligned}\end{equation}
where the first line follows by definition of $k^\star$, and in the third and fifth lines we have used the fact that $\ket\psi$ is a normalised pure state.
Thus, we have
\begin{equation}\begin{aligned}
  F_\OLOCC(\psi,m) &\geq \cbraket{\Psi_m|\eta}^2\\
  &= \frac1m \left(\lnorm{\alpha^\downarrow_{1:m-k^\star}}{1} + \sqrt{k^\star} \lnorm{\alpha^\downarrow_{m-k^\star+1:d}}{2}\right)^2\\
  &= \frac{1}{m} \mnorm{\ket{\xi_\psi}}^2.
\end{aligned}\end{equation}

On the other hand, since $\OLOCC \subseteq \SEPP$, we have
\begin{equation}\begin{aligned}
  F_\OLOCC(\psi,m) &\leq F_\SEPP(\psi, m) \\ &= \frac{1}{m} \left(\TTT{m-1}_\SEP(\psi) + 1\right) \\&= \frac{1}{m} \mnorm{\ket{\xi_\psi}}^2
\end{aligned}\end{equation}
by Thm.~\ref{thm:Tm_pure_states}, which concludes the proof.
\end{proof}

The exact correspondence is rather surprising. The operations PPT and SEPP are known to be much more powerful than LOCC in general --- indeed, $\SEPP$ exhibit no bound entanglement whatsoever~\cite{brandao_2010}, and even in the manipulation of pure states PPT operations can, for instance, arbitrarily increase the Schmidt rank (number of Schmidt coefficients) of a pure state~\cite{ishizaka_2005,matthews_2008}, which cannot increase whatsoever under $\LOCCO$ or $\SEPO$~\cite{lo_2001}. The result then shows that even such large sets cannot outperform one-way LOCC in entanglement distillation from pure states, even in the one-shot setting.

Using the above expression, we can furthermore show that the computation of $E^{(1),\ve}_{d,\O}$ belongs to a class of efficiently solvable optimisation problems known as quadratically-constrained linear programs~\cite{lobo_1998}.
\begin{corollary}
For any set of operations $\O \in \{\OLOCC, \LOCC, \PPTO, \PPTRPPR, \PPTPPR, \SEPP \}$, the one-shot distillable entanglement of a pure state can be expressed exactly as the optimal value of the convex quadratically-constrained linear program
\begin{align}
 & E^{(1),\ve}_{d,\O} (\psi)\nonumber\\
 & = \Big\lfloor - \log \min \Big\{ \norm{\ket\omega}{\ell_\infty}^2 \;\Big|\; \braket{\xi_\psi|\omega} \geq \sqrt{1-\ve},\\
  &\hphantom{= \Big\lfloor - \log \min \Big\{ \norm{\ket\omega}{\ell_\infty}^2 \;\Big|} \norm{\ket\omega}{\ell_2} \leq 1,\; \ket\omega \in \RR^d_+ \Big\} \Big\rfloor_{\log}.\nonumber
\end{align}
\end{corollary}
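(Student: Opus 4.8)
The plan is to reduce the corollary entirely to Theorem~\ref{thm:all_fid_equal_LOCC}, which has already collapsed all six classes of operations to the single common value $F_\O(\psi,m) = \frac{1}{m}\mnorm{\ket{\xi_\psi}}^2$. Everything that remains is a reformulation of the maximisation over $m$ in the definition of $E^{(1),\ve}_{d,\O}$. First I would substitute this fidelity into Eq.~\eqref{one shot distillable ent}, so that the feasibility condition $F_\O(\psi,m) \geq 1-\ve$ becomes simply $\mnorm{\ket{\xi_\psi}}^2 \geq m(1-\ve)$, giving $E^{(1),\ve}_{d,\O}(\psi) = \log\max\lset m \in \NN \bar \mnorm{\ket{\xi_\psi}} \geq \sqrt{m(1-\ve)} \rset$.

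Next I would expand $\mnorm{\ket{\xi_\psi}}$ using its dual variational form from Eq.~\eqref{eq:distillation_norm_def}, namely the maximisation of $\cbraket{\xi_\psi|w}$ over $\ket w$ with $\lnorm{\ket w}{\infty}\leq 1$ and $\lnorm{\ket w}{2}\leq\sqrt m$. The key simplifying observation is that, since the Schmidt vector $\ket{\xi_\psi}$ has nonnegative real entries, any optimiser may be taken in $\RR^d_+$: replacing each coordinate of $\ket w$ by its modulus leaves both norm constraints untouched while not decreasing $\braket{\xi_\psi|w}$. The feasibility condition then reads that there exists $\ket w \in \RR^d_+$ with $\lnorm{\ket w}{\infty}\leq 1$, $\lnorm{\ket w}{2}\leq\sqrt m$, and $\braket{\xi_\psi|w}\geq\sqrt{m}\sqrt{1-\ve}$.

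The decisive step is a rescaling $\ket\omega \coloneqq \ket w/\sqrt m$, which decouples $m$ from the norm constraints: the conditions turn into $\lnorm{\ket\omega}{2}\leq 1$, $\braket{\xi_\psi|\omega}\geq\sqrt{1-\ve}$, and $\lnorm{\ket\omega}{\infty}^2 \leq 1/m$, i.e.\ $m\leq\lnorm{\ket\omega}{\infty}^{-2}$. Now the integer $m$ enters only through a single scalar inequality, so the largest feasible $m$ is $\lfloor(\min_{\ket\omega}\lnorm{\ket\omega}{\infty}^2)^{-1}\rfloor$, the minimisation running over the feasible $\omega$'s just described. Taking a logarithm and absorbing the floor into the $\logfloor{\,\cdot\,}$ shorthand — using that $2^{-\log\min\lnorm{\ket\omega}{\infty}^2} = (\min\lnorm{\ket\omega}{\infty}^2)^{-1}$ — recovers the stated expression. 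Convexity is then immediate, since the feasible region is the intersection of an affine half-space with the Euclidean unit ball and the positive orthant, and the objective $\lnorm{\ket\omega}{\infty}^2$ is convex.

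I expect the only genuine subtlety to lie in the bookkeeping of the last step: passing from ``largest integer $m$ with $m\leq\lnorm{\ket\omega}{\infty}^{-2}$'' to the closed-form floor requires checking that the floor and the minimisation over $\ket\omega$ commute correctly, and that the $\logfloor{\,\cdot\,}$ operator is applied to the right argument. The reduction to real nonnegative $\ket\omega$ and the rescaling are purely algebraic, so there is no deeper obstacle; the content of the corollary is essentially a change of variables on top of Theorem~\ref{thm:all_fid_equal_LOCC}.
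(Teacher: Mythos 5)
Your proposal is correct and takes essentially the same route as the paper's own proof: substituting the common fidelity $\frac{1}{m}\mnorm{\ket{\xi_\psi}}^2$ from Theorem~\ref{thm:all_fid_equal_LOCC} into the definition of $E^{(1),\ve}_{d,\O}$, invoking the dual variational form of the $m$-distillation norm, rescaling by $\sqrt{m}$ so that $m$ appears only through $\lnorm{\ket\omega}{\infty}^{2}\leq 1/m$, and restricting to non-negative vectors because $\ket{\xi_\psi}$ is non-negative. The paper's proof is merely a terser rendition of this argument (leaving the rescaling and the floor bookkeeping implicit), so your write-up fills in exactly the steps the paper glosses over.
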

\begin{proof}
The dual form of the $m$-distillation norm, which we recall here as
\begin{equation}\begin{aligned}
   \mnorm{\ket{x}} = \max \lset \cbraket{x|w} \bar \lnorm{\ket{w}}{\infty} \leq 1,\; \lnorm{\ket{w}}{2} \leq \sqrt{m} \rset,
\end{aligned}\end{equation}
gives
\begin{equation}\begin{aligned}
  E^{(1),\ve}_{d,\O} (\psi) &= \log \max \lsetr m \in \mathbb N \barr \frac{1}{m} \mnorm{\ket{\xi_\psi}}^2 \geq 1- \ve \rsetr\\
  &= \Big\lfloor - \log \min \Big\{ \norm{\ket\omega}{\ell_\infty}^2 \;\Big|\; \cbraket{\xi_\psi|\omega}^2 \geq 1-\ve,\\
  &\hphantom{= \Big\lfloor - \log \min \Big\{ \norm{\ket\omega}{\ell_\infty}^2 \;\Big|} \norm{\ket\omega}{\ell_2} \leq 1 \Big\} \Big\rfloor_{\log}
\end{aligned}\end{equation}
and we conclude by noting that it suffices to optimise over vectors with non-negative coefficients since $\ket{\xi_\psi}$ is also non-negative.
\end{proof}
The above result can be compared with the bounds obtained for LOCC distillable entanglement in~\cite{buscemi_2010-1,buscemi_2013}, and in fact we have tightened the bounds to an exact expression for the one-shot distillable entanglement:
\begin{equation}\begin{aligned}
  E^{(1),\ve}_{d,\O} (\psi) \!=\! \logfloor{ - \log \min \Big\{\left. \norm{\rho_B}{\!\infty} \,\right|\, F\!\left(\rho_B, \Tr_{A}(\psi)\right) \!\geq\! 1-\ve \Big\} }\!\!.
\end{aligned}\end{equation}

The Theorem also leads to an interesting characterisation of the $m$-distillation norm in two different ways. Notice that the proof of Thm.~\ref{thm:Tm_pure_states} in fact shows that the $m$-distillation norm of the Schmidt vector $\ket{\xi_x}$ of a vector $\ket{x}$ can be equivalently written as a norm at the level of the vector $\ket{x}$ itself:
\begin{equation}\begin{aligned}\label{eq:mnorm_locc_relation}
   &\mnorm{\ket{\xi_{x}}} = \min_{\ket{x}=\ket{y}+\ket{z}} \lnorm{\ket{\xi_y}}{1} + \sqrt{m} \lnorm{\ket{\xi_z}}{2}\\
  &= \max \lset \cbraket{x|w} \bar \lnorm{\ket{\xi_w}}{\infty} \leq 1,\; \lnorm{\ket{\xi_w}}{2} \leq \sqrt{m} \rset.
\end{aligned}\end{equation}
Writing $\ket\psi \toLOCC \ket\eta$ to denote that the pure state transformation is possible with $\LOCCO$, we then have
\begin{equation}\begin{aligned}
  \mnorm{\ket{\xi_\psi}} &= \sqrt{m}\,  \max \lsetr \cbraket{\Psi_m|\eta} \barr \ket\psi \xrightarrow{\LOCCO} \ket\eta \rsetr\\
  &= \sqrt{m}\, \max \lsetr \cbraket{\psi|\eta} \barr \ket\eta \xrightarrow{\LOCCO} \ket{\Psi_m} \rsetr,\\
\end{aligned}\end{equation}
where the maximisation is over normalised state vectors $\ket\eta$, and the second line is precisely Eq.~\ref{eq:mnorm_locc_relation}.

As a straightforward corollary of the results above, we can establish the value of the quantifiers $G^{(m)}_\S$ for several sets other than $\SEP$.
\begin{corollary}For any $1 \leq m \leq d$, any pure state $\ket\psi$, and any $\S \in \{\SEP,\PPT,\PPTP,\PPTR\}$ we have
\begin{equation}\begin{aligned}
  G^{(m)}_\S (\proj\psi) = \frac{1}{m} \mnorm{\ket{\xi_\psi}}^2
\end{aligned}\end{equation}
\end{corollary}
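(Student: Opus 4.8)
The plan is to reduce all four cases to the pure-state fidelity formula already established in Thm.~\ref{thm:all_fid_equal_LOCC}, which gives $F_\O(\psi,m) = \frac1m \mnorm{\ket{\xi_\psi}}^2$ for every $\O$ in the listed hierarchy. For three of the four sets this is immediate: combining the identifications $F_{\SEPP}(\rho,m) = \GG{m}_{\SEP}(\rho)$ (Lemma~\ref{lemma:sepp_fid}), $F_{\PPTPPR}(\rho,m) = \GG{m}_{\PPTP}(\rho)$ and $F_{\PPTO}(\rho,m) = \GG{m}_{\PPTR}(\rho)$ (Thm.~\ref{thm:ppt_dist}) with Thm.~\ref{thm:all_fid_equal_LOCC} evaluated at $\rho = \proj\psi$ yields $\GG{m}_{\SEP}(\proj\psi) = \GG{m}_{\PPTP}(\proj\psi) = \GG{m}_{\PPTR}(\proj\psi) = \frac1m \mnorm{\ket{\xi_\psi}}^2$ directly. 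The only set among the four not appearing on the right-hand side of any fidelity formula is $\PPT$ itself, so this case needs a separate, short argument.

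For $\S = \PPT$ I would exploit the monotonicity of the quantifier $\GG{m}_\Q$ under set inclusion. Since $\GG{m}_\Q(\rho)$ is a supremum over the feasible set $\{\, W \mid 0 \mleq W \mleq \id,\ W \in \tfrac1m \Q^\circ \,\}$, and since $\Q_1 \subseteq \Q_2$ implies $\Q_2^\circ \subseteq \Q_1^\circ$, enlarging $\Q$ shrinks the feasible set and hence can only decrease the supremum; that is, $\GG{m}_\Q$ is antitone in $\Q$. I would then invoke the inclusions $\PPTP \subseteq \PPT \subseteq \PPTR$, where the first merely drops the positivity constraint $X \mgeq 0$ and the second is immediate from $\PPTR = \conv(\PPT \cup -\PPT)$. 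Antitonicity sandwiches the $\PPT$ value between the other two,
\begin{equation*}
  \GG{m}_{\PPTR}(\proj\psi) \;\leq\; \GG{m}_{\PPT}(\proj\psi) \;\leq\; \GG{m}_{\PPTP}(\proj\psi),
\end{equation*}
and since both outer terms equal $\frac1m \mnorm{\ket{\xi_\psi}}^2$ by the previous paragraph, the middle term is forced to the same value, completing the proof.

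There is essentially no serious obstacle here: the quantitative content is carried entirely by Thm.~\ref{thm:all_fid_equal_LOCC}, and the only point requiring attention is that $\PPT$ is \emph{not} itself one of the sets whose distillation fidelity was computed (PPT operations distill with the symmetrised set $\PPTR$, not $\PPT$). The mild subtlety is therefore to apply the monotonicity argument in the correct (polar) direction and to verify that the chain $\PPTP \subseteq \PPT \subseteq \PPTR$ holds as stated; both are routine.
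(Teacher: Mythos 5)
Your proposal is correct and matches the argument the paper leaves implicit: the corollary is stated without proof precisely because the identifications $F_{\SEPP}=\GG{m}_{\SEP}$, $F_{\PPTPPR}=\GG{m}_{\PPTP}$, $F_{\PPTO}=\GG{m}_{\PPTR}$ combined with Thm.~\ref{thm:all_fid_equal_LOCC} give three of the four cases immediately. Your handling of the remaining case $\S=\PPT$ --- sandwiching $\GG{m}_{\PPT}$ between $\GG{m}_{\PPTR}$ and $\GG{m}_{\PPTP}$ via antitonicity of $\GG{m}_{\Q}$ under the inclusions $\PPTP\subseteq\PPT\subseteq\PPTR$ --- is exactly the right (and only nontrivial) step, and mirrors the same monotonicity reasoning the paper uses for Eq.~\eqref{eq:fid_and_G}.
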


Going beyond pure states, combined with Prop.~\ref{prop:rob_for_d-1} the expression for $F_\SEPP(\rho,d)$ gives a direct operational interpretation to the generalised robustness of entanglement $R_\SEP^\DD$ by showing that
\begin{equation}\begin{aligned}
  R_\SEP^\DD(\rho) = d \FF{\SEPP}(\rho,d) - 1
\end{aligned}\end{equation}
for any state $\rho$, and in fact by Thm.~\ref{thm:ppt_dist} also for the class $\PPTPPR$ we have the relation $R_\PPTP^\DD(\rho) = d F_{\PPTPPR}(\rho,d) - 1$. This complements the known operational applications of this quantity~\cite{brandao_2011,takagi_2019-2,bae_2019}. Note also that all of the other measures in the family $\TTT{m}_\SEP$ and $\TTT{m}_\PPTP$, including the modified trace distance quantifiers, are given similar interpretations.

An important use of the fidelity of distillation $F_{\LOCCO}(\rho,m)$ in the particular case $m=d$ is as the fidelity of teleportation, that is, the best average fidelity one can achieve in the task of quantum teleportation by employing an LOCC protocol on the state $\rho$~\cite{horodecki_1999-1}. Notably, in~\cite{verstraete_2003-2} it was then shown that for $d_A = d_B = 2$, we have
%\begin{equation}\begin{aligned}
  $F_{\LOCCO} (\rho, d) = F_{\PPTO} (\rho,d)$
%\end{aligned}\end{equation}
showing that even PPT protocols (or SEPP protocols, by Lemma~\ref{lemma:sepp_fid}) cannot enhance the fidelity of teleportation of the given state. By Thm.~\ref{thm:all_fid_equal_LOCC}, we know that this relation extends to all pure states in all dimensions; that is, $F_{\LOCCO} (\psi, d) = F_{\PPTO} (\psi,d) = F_{\SEPP} (\psi,d)$.

\begin{remark}In~\cite{brandao_2005} (Prop. 9) it was claimed that the asymptotic distillable entanglement $E^\infty_{d,\LOCC}(\rho)$ of any state is upper bounded by $\log \left(\TTT{m}_\SEP(\rho) + 1\right)$ for any $m \geq 1$. This is clearly not true, as $\TTT{m}_\SEP(\rho) \leq m \; \forall \rho$ and the distillable entanglement obeys no such restriction. We have seen, however, that the quantifiers $\TTT{m}_\SEP(\rho)$ characterise exactly the \textit{fidelity} of distillation.\end{remark}

As a side note, noticing the similarity between the distillation under $\PPTP$-preserving, Rains-preserving, and $\SEP$-preserving operations, it might appear that the hypothesis testing relative entropy $D^\ve_H$ in general quantifies the rate of distillation under a set of operations which is defined to preserve a given set of operators. This claim is supported by recent independent results concerned with distillation in a class of general quantum resources~\cite{liu_2019}, but it does not hold in full generality as the distillation under $\PPTPR$ operations shows (see Thm.~\ref{thm:ppt_dist}), and indeed also distillation in the resource theory of coherence~\cite{regula_2017} is a counterexample.

\subsection{Isotropic states}\label{sec:iso}

Consider $d=d_A=d_B$ and define the isotropic states as
\begin{equation}\begin{aligned}
  \rho_f = f \Psi_d + \frac{1 - f}{d^2 - 1} \left(\id - \Psi_d\right)
\end{aligned}\end{equation}
with $0 \leq f \leq 1$. This class of states is particularly useful due to its strong symmetry, allowing for a much easier evaluation of their entanglement properties~\cite{horodecki_1999}.
We then have the following result, showing the operational equivalence of all sets of channels from $\SEPO$ to $\PPTPPR$ and $\SEPP$ in distilling entanglement from isotropic states, and extending the known characterisation of isotropic state distillation under PPT operations considered in~\cite{rains_2001}.

\begin{theorem}\label{thm:isotropic}
For any isotropic state $\rho_f$ and any $\O \in \{\SEPO, \PPTO, \PPTRPPR, \PPTPPR, \SEPP\}$, it holds that
\begin{equation}\begin{aligned}
  F_\O(\rho_f,m)  = \begin{cases} \frac{1}{m} & f \leq \frac1d\\ \frac{df - 1}{d-1} + \frac{d(1-f)}{m(d-1)} & f \geq \frac1d.\end{cases}
\end{aligned}\end{equation}
\end{theorem}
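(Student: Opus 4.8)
The plan is to prove the identity by a sandwich argument. All the listed classes sit between $\SEPO$ and $\SEPP$: using the inclusions $\SEP\subseteq\PPTP\subseteq\PPTRP\subseteq\PPTR$ together with Thm.~\ref{thm:ppt_dist}, Thm.~\ref{thm:rains_preserving} and Lem.~\ref{lemma:sepp_fid}, and the fact that $\SEPO\subseteq\PPTO\subseteq\PPTRPPR\subseteq\PPTPPR$ as sets of maps, one obtains
\[ F_\SEPO(\rho_f,m)\le F_\PPTO(\rho_f,m)\le F_\PPTRPPR(\rho_f,m)\le F_\PPTPPR(\rho_f,m)\le F_\SEPP(\rho_f,m). \]
Hence it suffices to prove a matching upper bound on the top term and lower bound on the bottom term, both equal to the claimed piecewise value, which I denote $F^\star$.

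For the upper bound I would use $F_\SEPP=\GG{m}_\SEP$. Since $\rho_f$ and $\Psi_m$ are invariant under the twirl $\T$, and $U\otimes U^*$ is a \emph{local} unitary preserving $\SEP$, averaging any feasible witness $W$ over $\T$ leaves both $\<\rho_f,W\>$ and feasibility unchanged; thus without loss of generality $W=\alpha\Psi_d+\beta(\id-\Psi_d)$. The constraint $\<\sigma,W\>\le 1/m$ for all $\sigma\in\SEP$ then reduces, via $\max_{\sigma\in\SEP}\<\sigma,\Psi_d\>=1/d$, to a two-variable linear program in $(\alpha,\beta)$ with $0\le\alpha,\beta\le 1$; solving it gives exactly $F^\star$ for $f\ge 1/d$. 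For $f\le 1/d$ the state $\rho_f$ is separable, so substituting $\sigma=\rho_f$ into the witness constraint immediately yields $\<\rho_f,W\>\le 1/m$, matching the claimed value.

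For the lower bound, the case $f\le 1/d$ is handled by the product-state replacement channel $\Lambda(\cdot)=\Tr(\cdot)\proj{00}$, which is $\LOCC$ (hence separable) and attains $\<\proj{00},\Psi_m\>=1/m$. For $f\ge 1/d$ I would first verify that the optimal witness $X^\star=\Psi_d+\tfrac{d-m}{m(d-1)}(\id-\Psi_d)$ is also feasible for $\GG{m}_\PPTR$: a short eigenvalue computation gives $\norm{{X^\star}^{T_B}}{\infty}=1/m$, so combined with the chain above one gets $F_\PPTO(\rho_f,m)=F^\star$, attained by the PPT channel built from $X^\star$ as in Thm.~\ref{thm:ppt_dist}. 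Twirling this channel on both input and output produces a bi-covariant PPT channel with the same fidelity, whose Choi matrix lies in the four-dimensional commutant spanned by $\{\id_d,\Psi_d\}\otimes\{\id_m,\Psi_m\}$. The remaining task is to show that this symmetric PPT Choi matrix is in fact separable across $AA'|BB'$, so that the channel is a genuine $\SEPO$ operation and $F_\SEPO(\rho_f,m)\ge F^\star$; I expect this to follow because PPT coincides with separability on this isotropic-type symmetric subspace, which I would confirm by exhibiting an explicit product decomposition of the four-parameter Choi operator.

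The main obstacle is precisely this last separability check. It is genuinely necessary rather than cosmetic: the naive deterministic protocol that discards a subsystem to send $\Psi_d\to\Psi_m$ can be computed to \emph{undershoot} the target (writing $d=mn$ it yields $g_0=(n^2-1)/(d^2-1)<\tfrac{d-m}{m(d-1)}$), so no trivial $\LOCC$ protocol reaches $F^\star$, consistent with $\SEPO$ potentially strictly outperforming $\LOCC$ on isotropic states. Establishing that the optimal bi-covariant PPT Choi matrix admits a separable (product-Kraus) decomposition is therefore where the real work lies; everything else is the covariance reduction plus routine linear algebra.
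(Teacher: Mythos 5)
Your covariance reduction, the two-variable linear program for $F_\SEPP(\rho_f,m)=\GG{m}_\SEP(\rho_f)$, and the explicit witness $X^\star=\Psi_d+\tfrac{d-m}{m(d-1)}(\id-\Psi_d)$ with $\norm{(X^\star)^{T_B}}{\infty}=\tfrac1m$ are all correct, and together with the chain $F_\SEPO\le F_\PPTO\le F_\PPTRPPR\le F_\PPTPPR\le F_\SEPP$ they settle the theorem for $\O\in\{\PPTO,\PPTRPPR,\PPTPPR,\SEPP\}$; this parallels the paper's own argument, except that the paper certifies the $\SEPP$ upper bound dually, via $\Psi_d+\tfrac1d\id\in\SEP\*\*$ and Prop.~\ref{prop:G_primal}, rather than by solving your primal LP. The genuine gap is exactly where you located it: the $\SEPO$ lower bound. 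Moreover, the route you propose for closing it rests on a false premise. PPT does \emph{not} coincide with separability on the commutant spanned by $\{\id_d,\Psi_d\}\otimes\{\id_m,\Psi_m\}$. Concretely, take $\alpha=1$, $\beta=0$ in the parametrization $W=\alpha\Psi_d+\beta\id$, i.e.\ the channel $\Lambda_{\Psi_d}$: its Choi matrix lies in this commutant and is PPT whenever $m\le d$ (since $\Psi_d^{T_B}$ has eigenvalues $\pm\tfrac1d$, so $-\tfrac1m\id\mleq\Psi_d^{T_B}\mleq\tfrac1m\id$), yet it is not separable, because the fourth condition of Eq.~\eqref{eq:sep_ineq} evaluates to $m\alpha+d^2m\beta-d(\alpha+\beta)=m-d<0$ for $m<d$. (Incidentally, $\Lambda_{\Psi_d}$ \emph{is} separability-preserving, since every separable input is mapped to an isotropic state with fidelity at most $\tfrac1d\le\tfrac1m$; so this very family witnesses the gap between $\SEPP$ and $\SEPO$.) Hence "PPT on the symmetric subspace" cannot stand in for a separability proof.

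What the paper does at this point, and what your proposal is missing, is to invoke the exact separability characterization for this four-parameter family of Choi operators, Thm.~6 of Ref.~\cite{lami_2016-2}, which produces the four linear inequalities of Eq.~\eqref{eq:sep_ineq}, and then to check all four for the specific witness $W=\tfrac{d(m-1)}{(d-1)m}\Psi_d+\tfrac{d-m}{(d-1)m}\id$ (this is your $X^\star$ written in the other parametrization). For that witness the inequalities do hold --- in particular the problematic fourth one evaluates to $d(d-m)\ge0$, and the first two reduce to $\tfrac{2d(d-m)}{d-1}\ge 0$ and $\tfrac{2d(m-1)}{d-1}\ge0$ --- so the twirled channel is a genuine separable operation and $F_\SEPO(\rho_f,m)\ge F^\star$ follows. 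Your alternative plan of exhibiting an explicit product-state decomposition of the optimal Choi operator would also suffice, but it amounts to re-deriving that separability theorem and is not routine linear algebra; until it is carried out, your argument proves the claim only for the four larger classes and not for $\SEPO$.
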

\begin{proof}
Take $1 \leq m \leq d$. If $f \leq \frac{1}{d}$, then $\rho_f \in \SEP$, and we have $F_\SEPO(\rho_f,m) = F_\PPTO(\rho_f,m) = F_\SEPP(\rho_f,m) = \frac{1}{m}$; we will therefore assume that $f \geq \frac1d$ in the sequel.

Recall by our previous arguments that, due to twirling, we can limit ourselves to considering operations of the form
\begin{equation}\begin{aligned}
  \Lambda_{W}(Z) =  \< Z, W\> \Psi_m + \frac{\< Z, \id - W \>}{m^2-1} \left(\id - \Psi_m\right),
\end{aligned}\end{equation}
and the fidelity of distillation under a set $\O$ is then given by
\begin{equation}\begin{aligned}
  F_\O(\rho_f,m) = \max \lset \< \rho_f, W \> \bar \Lambda_W \in \O \rset.
\end{aligned}\end{equation}
Here, noting the invariance of $\rho_f$ under twirling, we can twirl once more; in particular, $\<\rho_f, W \> = \< \T(\rho_f), \T(W) \> = \< \rho_f, \T(W) \> \; \forall W$, so we can again limit the considered $W$ to be of the form
\begin{equation}\begin{aligned}
  W = \alpha \Psi_d + \beta \id,
\end{aligned}\end{equation}
where $0 \leq \beta \leq 1$, $0 \leq \alpha + \beta \leq 1$. The Choi operator of the corresponding map $\Lambda_W$ is then of the form
\begin{equation}\begin{aligned}
  J_{\Lambda_W} =& \frac{\alpha m^2}{m^2 - 1} \Psi_{dm} + \frac{\beta m^2 - 1}{m^2-1} \id_{AB} \otimes \Psi_m \\
  &- \frac{\alpha}{m^2-1} \Psi_d \otimes \id_{A'B'} + \frac{1-\beta}{m^2-1} \id_{ABA'B'}
\end{aligned}\end{equation}
where $d_{A'} = d_{B'} = m$. By~\cite[Thm. 6]{lami_2016-2}, $J_{\Lambda_W} \in \SEP(AA':BB')$ if and only if the following conditions are all satisfied:
\begin{equation}\begin{aligned}\label{eq:sep_ineq}
  & d - m \alpha + d m \beta \geq 0,\\
  & d + m \alpha - d m \beta \geq 0,\\
  & d - m \alpha - d m \beta \geq 0,\\
  & m \alpha + d^2 m \beta - d (\alpha+\beta) \geq 0.
\end{aligned}\end{equation}
Let us choose
\begin{equation}\begin{aligned}
  W = \frac{d(m-1)}{(d-1)m} \Psi_d + \frac{d-m}{(d-1{})m}\id
\end{aligned}\end{equation}
for which the inequalities \eqref{eq:sep_ineq} can be readily verified to hold. This gives
\begin{equation}\begin{aligned}
  F_\SEPO(\rho_f,m) &\geq \< \rho_f, W \>\\
  &= \frac{df - 1}{d-1} + \frac{d(1-f)}{m(d-1)}.
\end{aligned}\end{equation}
On the other hand, take
\begin{equation}\begin{aligned}
  X = \frac{d(1-f)}{d^2-1}\left(\Psi_d + \frac{1}{d}\id\right).
\end{aligned}\end{equation}
It is known that $\Psi_d + \frac{1}{d} \id \in \SEP\*\*$~\cite{vidal_1999}, which gives
\begin{equation}\begin{aligned}
  F_\SEPP(\rho_f,m) &\leq \Tr (\rho_f - X)_+ + \frac{1}{m} \Tr X\\
  &= \frac{df - 1}{d-1} + \frac{d(1-f)}{m(d-1)}
\end{aligned}\end{equation}
where we used Lemma~\ref{lemma:sepp_fid} together with the dual form of $\GG{m}_\SEP$ from Prop.~\ref{prop:G_primal}.
\end{proof}

We remark that the above also gives a general way of lower-bounding the fidelity of distillation under separable operations of any state with a simple linear program, tight for all isotropic states:
\begin{equation}\begin{aligned}
  &F_\SEPO(\rho,m) \geq\\
  & \max \lset \alpha \!\< \rho, \Psi_d \>\! +\! \beta \bar 0 \leq \beta \leq 1,\; 0 \leq \alpha+\beta \leq 1,\; \text{Eq. \eqref{eq:sep_ineq}} \rset.
\end{aligned}\end{equation}
Unsurprisingly, however, a numerical investigation reveals this bound to be rather ineffective beyond the set of isotropic states.

Note also that a general investigation of one- and multi-shot entanglement distillation from isotropic states under PPT operations as a linear program has been explored in~\cite{rains_2001,fang_2019}.

\subsection{Maximally correlated states}\label{sec:maxcorr}

Let us consider a bipartite system with $d = d_A = d_B$. A \textit{maximally correlated state} is any state of the form $\rho_\mc = \sum_{i,j} \rho_{ij} \ket{ii}\!\bra{jj}$ for some local orthonormal bases $\{\ket{i}\}$~\cite{rains_1999-1}. The name for this class of states comes from the fact is that the two parties are guaranteed to obtain the same measurement results for any measurement in their local basis $\{\ket{i}\}$.

Notice that any maximally correlated state has a corresponding single-party state $\widetilde{\rho_\mc} \coloneqq \sum_{i,j} \rho_{ij} \ketbra{i}{j}$ with the same coefficients in an orthonormal basis $\{\ket{i}\}$. This led to comparisons between the manipulation of maximally correlated states and the resource theory of coherence, which studies the properties of superposition as a quantum resource~\cite{baumgratz_2014,streltsov_2017}. In particular, it has been conjectured in several works that the resource theory of coherence is equivalent to the resource theory of entanglement restricted to maximally correlated states~\cite{winter_2016,chitambar_2016}. Although this conjecture is still unsolved in full generality, we know that many operational quantifiers such as the entanglement of formation, relative entropy of entanglement (and other R\'{e}nyi entropy--based entanglement measures), and asymptotic distillable entanglement can be evaluated on maximally correlated states by quantifying the corresponding coherence quantifiers, typically significantly simpler to evaluate and satisfying useful properties such as additivity~\cite{winter_2016,zhu_2017}. Furthermore, an operational equivalence between transformations acting on $\widetilde{\rho_\mc}$ and LOCC operations acting on $\rho_\mc$ has been suggested, although so far this conjecture has been shown only in specific cases~\cite{winter_2016,chitambar_2016}.

To obtain a result allowing us to quantify the one-shot distillable entanglement of maximally correlated states, we will fix a choice of basis $\{\ket{i}\}_{i=1}^{d}$ for the single-party state $\widetilde{\rho_\mc}$, and use $\I \coloneqq \conv \{ \proj{i} \}_i$ to denote the set of all incoherent (diagonal) states in this basis. Furthermore, we define the subset of separable states $\SEP_\mc \coloneqq \conv \{ \proj{ii} \}_i$ where $\{\ket{ii}\}_{i=1}^{d}$ is the maximally correlated basis of the state $\rho_\mc$. We then get the following.

\begin{theorem}\label{thm:max_corr_ppt}
For any maximally correlated state, any $m \geq 1$, and any choice of operations $\O \in \{ \PPTO, \PPTRPPR, \PPTPPR, \SEPP \}$ it holds that
\begin{equation}\begin{aligned}
  F_{\O}(\rho_\mc,m) = \GG{m}_\I (\widetilde{\rho_\mc}).
\end{aligned}\end{equation}
\end{theorem}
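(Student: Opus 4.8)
The plan is to prove that all four fidelities coincide by squeezing them between two copies of the coherence quantity $\GG{m}_\I(\widetilde{\rho_\mc})$. First I would record the inclusions $\SEP \subseteq \PPTP \subseteq \PPTRP \subseteq \PPTR$ together with the fact that $\Q \mapsto \GG{m}_\Q$ is order-reversing: a larger $\Q$ has a smaller polar $\Q^\circ$, hence a smaller feasible set in the definition of $\GG{m}_\Q$. Combined with Thm.~\ref{thm:ppt_dist}, Thm.~\ref{thm:rains_preserving} and Lemma~\ref{lemma:sepp_fid}, this gives the chain $F_\SEPP(\rho_\mc,m) = \GG{m}_\SEP(\rho_\mc) \geq F_\PPTPPR(\rho_\mc,m) \geq F_\PPTRPPR(\rho_\mc,m) \geq F_\PPTO(\rho_\mc,m) = \GG{m}_\PPTR(\rho_\mc)$. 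It therefore suffices to prove the two bounds $\GG{m}_\SEP(\rho_\mc) \leq \GG{m}_\I(\widetilde{\rho_\mc})$ and $\GG{m}_\PPTR(\rho_\mc) \geq \GG{m}_\I(\widetilde{\rho_\mc})$, which then pin every term to $\GG{m}_\I(\widetilde{\rho_\mc})$.

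Both bounds I would handle through the isometry $V:\ket i \mapsto \ket{ii}$, for which $\rho_\mc = V\,\widetilde{\rho_\mc}\,V^\dagger$ and $V^\dagger V = \id$. For the upper bound, given any $W$ feasible for $\GG{m}_\SEP(\rho_\mc)$ (so $0 \mleq W \mleq \id$ and $\langle \sigma, W\rangle \leq \tfrac1m$ for all $\sigma \in \SEP$), I would pull it back to $\widetilde W \coloneqq V^\dagger W V$. Conjugation by an isometry preserves $0 \mleq \widetilde W \mleq \id$, while $\langle \proj i, \widetilde W\rangle = \langle \proj{ii}, W\rangle \leq \tfrac1m$ because $\proj{ii} \in \SEP$; since the incoherent constraint defining $\GG{m}_\I$ need only be checked on the extreme points $\proj i$ of $\I$, the operator $\widetilde W$ is feasible there, and $\langle \rho_\mc, W\rangle = \langle \widetilde{\rho_\mc}, \widetilde W\rangle$ gives the claim after taking the supremum.

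For the lower bound I would instead push an optimal incoherent witness forward. Writing the optimiser as $\widetilde W = \sum_{ij} w_{ij} \ketbra i j$ with $w_{ii} \leq \tfrac1m$ and $\widetilde W \mgeq 0$, I set $W \coloneqq V \widetilde W V^\dagger = \sum_{ij} w_{ij}\ketbra{ii}{jj}$, which matches values via $\langle \rho_\mc, W\rangle = \langle \widetilde{\rho_\mc}, \widetilde W\rangle$ and satisfies $0 \mleq W \mleq \id$ since it is supported on the maximally correlated subspace. The only nontrivial point is feasibility for $\GG{m}_\PPTR(\rho_\mc)$, i.e. the constraint $-\tfrac1m\id \mleq W^{T_B} \mleq \tfrac1m\id$ identified in the proof of Thm.~\ref{thm:ppt_dist}. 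Here the key computation is that $W^{T_B} = \sum_{ij} w_{ij}\ketbra{ij}{ji}$ is block diagonal, with one-dimensional blocks on $\ket{ii}$ carrying eigenvalue $w_{ii}$ and two-dimensional blocks on $\{\ket{ij},\ket{ji}\}$ carrying eigenvalues $\pm|w_{ij}|$. The incoherent constraint controls the former via $0 \leq w_{ii} \leq \tfrac1m$, while positivity of $\widetilde W$ forces $|w_{ij}|^2 \leq w_{ii} w_{jj} \leq \tfrac{1}{m^2}$ on the $2\times 2$ principal submatrices, so every eigenvalue of $W^{T_B}$ lies in $[-\tfrac1m, \tfrac1m]$. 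This establishes feasibility and hence $\GG{m}_\PPTR(\rho_\mc) \geq \langle \rho_\mc, W\rangle = \GG{m}_\I(\widetilde{\rho_\mc})$.

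The hard part is precisely the off-diagonal bound in the last step: it is the interplay between the diagonal (incoherence) constraint $w_{ii}\leq \tfrac1m$ and the positive semidefiniteness of $\widetilde W$ that forces the partial transpose of the correlated witness to obey the PPT constraint, and without the Cauchy--Schwarz-type inequality $|w_{ij}|^2 \le w_{ii}w_{jj}$ the construction would fail. The remaining ingredients — the inclusion-based squeeze and the isometry bookkeeping — are routine.
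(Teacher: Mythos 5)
Your proof is correct and follows essentially the same route as the paper: the same squeeze between $F_{\SEPP}$ and $F_{\PPTO}$ via the inclusion chain, and an identical lower-bound construction (pushing the optimal incoherent witness into the maximally correlated basis, computing the eigenvalues $\{w_{ii},\pm|w_{ij}|\}$ of its partial transpose, and controlling the off-diagonals through positive semidefiniteness, $|w_{ij}|\le\sqrt{w_{ii}w_{jj}}\le\tfrac{1}{m}$). The only cosmetic difference is in the upper bound, where you pull witnesses back through the isometry $V:\ket{i}\mapsto\ket{ii}$ on the dual side, while the paper restricts the primal minimisation of Prop.~\ref{prop:G_primal} from $\SEP\*\*$ to the cone generated by $\{\proj{ii}\}$ --- both amount to the same isometric identification of $\I$ with $\SEP_\mc$.
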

\begin{proof}
Using Lemma~\ref{lemma:sepp_fid} together with Prop.~\ref{prop:G_primal}, we have
\begin{equation}\begin{aligned}
  F_{\PPTO}(\rho_\mc,m) &\leq F_{\SEPP}(\rho_\mc,m)\\
   &=  \min_{X \in \SEP\*\*} \Tr(\rho_\mc - X)_+ + \frac1m\Tr(X)\\
  &\leq  \min_{X \in \SEPmcdd} \Tr(\rho_\mc - X)_+ + \frac1m\Tr(X)\\
  &= \min_{X \in \I\*\*} \Tr(\widetilde{\rho_\mc} - X)_+ + \frac1m \Tr(X)\\
  &= \GG{m}_\I(\widetilde{\rho_\mc}).
\end{aligned}\end{equation}
On the other hand, let $W^\star = \sum_{i,j} W_{ij} \ket{i}\bra{j}$ be the optimal solution to the dual problem of
\begin{equation}\begin{aligned}
  \GG{m}_\I (\widetilde{\rho_\mc}) = \max \lset \< \widetilde{\rho_\mc}, W \> \bar 0 \mleq W \mleq \id,\; W \in \frac{1}{m} \I^\circ \rset.
\end{aligned}\end{equation}
Notice that $W^\star \in \frac{1}{m} \I^\circ$ is equivalent to $\max_i W_{ii} \leq \frac1m$. Consider then the matrix $W_\mc \coloneqq \sum_{i,j} W_{ij} \ket{ii}\bra{jj}$, defined in the basis of $\rho_\mc$. The eigenvalues of $W^{T_B}_\mc$ can be straightforwardly verified to be $\{ W_{ii}, \;\pm|W_{ij}| \}_{i,j=1}^d$. Positivity of $W^\star$ imposes that 
\begin{equation}\begin{aligned}
  \max_{i,j} |W_{ij}| \leq \max_i W_{ii}
\end{aligned}\end{equation}
from which it follows that
\begin{equation}\begin{aligned}
  \Gamma_{\PPTR^\circ}(W_\mc) = \norm{W^{T_B}_\mc}{\infty} = \max_i W_{ii} \leq \frac1m.
\end{aligned}\end{equation}
This means that $W_\mc \in \frac{1}{m}\PPTRc$, and so
\begin{equation}\begin{aligned}
  &F_{\PPTO}(\rho_\mc,m) \\
  &= \GG{m}_\PPTR (\rho_\mc) \\
  &=  \max \lsetr \< \rho_\mc, W \> \barr 0 \mleq W \mleq \id,\; W \in \frac{1}{m} \PPTRc \rsetr\\
  &\geq  \< \rho_\mc, W_\mc \>\\
  &=  \GG{m}_\I (\widetilde{\rho_\mc}).
\end{aligned}\end{equation}
\end{proof}
Notice that $ \GG{m}_\I (\widetilde{\rho_\mc}) = \frac{1}{m} \left( \TTT{m-1}_\I (\widetilde{\rho_\mc}) + 1 \right)$, where $\TTT{m-1}_\I$ have been considered as coherence measures in~\cite{regula_2017}. Further, using Thm.~\ref{thm:hyp_entropy} we have that
\begin{equation}\begin{aligned}
  E^{(1),\ve}_{d,\O} (\rho_\mc ) &= \logfloor{ \min_{\sigma \in \I} D^{\ve}_H(\widetilde{\rho_\mc}\|\sigma) }\\
  &= \logfloor{ - \log \min_{\substack{\< \widetilde{\rho_\mc}, W \> \geq 1 - \ve\\0 \mleq W \mleq \id}}\, \norm{\Delta(W)}{\infty} }
\end{aligned}\end{equation}
where $\Delta(\cdot) = \sum_i \proj{i} \cdot \proj{i}$ is the completely dephasing map. We stress that these optimisation problems are all efficiently computable as simple semidefinite programs~\cite{vandenberghe_1996}, facilitating an efficient quantification of the fidelity as well as rates of one-shot distillation of all maximally correlated states.

Interestingly, in contrast to many other results which show an exact equality between operational quantities in the resource theory of coherence and the resource theory of entanglement of maximally correlated states, our result above shows a slight discrepancy between the two resources: in particular, it is not difficult to find numerical examples of states such that
\begin{equation}\begin{aligned}
  F_\SEPP (\rho_\mc, m) = \GG{m}_\I (\rho) > \GG{m}_\J (\rho) =  F_\MIO (\rho, m)
\end{aligned}\end{equation}
$\forall\, m < d$, where MIO denotes the class of maximally incoherent operations in the resource theory of coherence, defined to be channels $\Lambda$ such that $\sigma \in \I \Rightarrow \Lambda(\sigma) \in \I$, and $\J$ is the set of all unit-trace diagonal Hermitian operators (see~\cite{regula_2017} for the rightmost equality). Therefore, the one-shot distillable entanglement of a maximally correlated state under the largest set of free operations in the resource theory of entanglement (SEPP) can be strictly larger than the distillable coherence of the corresponding single-partite state under the largest set of free operations in the resource theory of coherence (MIO). This shows in particular that, in the distillation of entanglement from maximally correlated states under SEPP, it is not sufficient to consider operations whose output remains in the maximally correlated subspace --- indeed, if this were the case, any such operation could always be mapped to a corresponding MIO operation, and the fidelities $F_\MIO(\rho,m)$ and $F_\SEPP(\rho_\mc,m)$ would be equal. This also motivates a rather curious conjecture that, should there exist a smaller class of operations for which it suffices to consider only maximally correlated output states, then it is plausible that $F_\LOCC(\rho_\mc,m) \leq F_\MIO(\rho,m) < F_\PPTO(\rho_\mc,m)$ for general maximally correlated states. This could be surprising, as it is known that the gap between $\LOCC$ and $\PPTO$ distillation of $\rho_\mc$ disappears at the asymptotic level~\cite{devetak_2005,hiroshima_2004} or even when considering the second-order non-asymptotic expansion of the rate of distillation~\cite{fang_2019}.

%%%%%%%%%%%%%%%%%%%%%%%%%%%%%%%%%%%%%%%%%%%%%%%%%%%%%%%%%%%%%%%%%%%%%%%%%%%%%%%%%%%%%

\subsection{Assisted distillation}\label{sec:assist}

The setting of (environment-)assisted distillation of entanglement, considered first in~\cite{divincenzo_1999,smolin_2005}, has been studied in the non-asymptotic regime in~\cite{buscemi_2013}. It is based on a scenario in which the two parties $A$ and $B$ are assisted by a third party $C$ who holds a purifying state of the system $\rho_{AB}$, i.e. such that the joint state is $\psi_{ABC}$, and aims to increase the entanglement distillable from $\rho_{AB}$ by performing a measurement on their local system $C$ and communicating its result classically to $A$ and $B$. A particular property of this setting is that the optimal protocol always involves a rank-1 measurement on subsystem $C$~\cite{buscemi_2013}, giving parties $A$ and $B$ access to arbitrary pure-state decompositions of the system $\rho_{AB}$. Specifically, the best achievable rate of distillation is given by
\begin{equation}
E_{A,\O}^{(1),\ve}(\rho) \coloneqq \log \max \lset m \in \mathbb N \bar F_{A,\O}(\rho,m) \geq 1- \ve \rset
\label{one shot asst distillable ent}
\end{equation}
where the fidelity of assisted distillation is the best average fidelity optimised over all decompositions, i.e.
  \begin{align}
       &F_{A,\O}(\rho,m) \coloneqq\\
       &= \max \left\{ \vphantom{\< \sum_i p_i \>} \right. \< \sum_i p_i \Lambda_i(\proj{\psi_i}), \Psi_m \> \;\left|\vphantom{\< \sum_i p_i \>} \right.\; \rho = \sum_i p_i \proj{\psi_i},\nonumber\\
       & \hphantom{= \max \Big\{ \< \sum_i p_i \Lambda_i(\proj{\psi_i}), \Psi_m \> \;\Big|\;\; } \Lambda_i \in \O \; \forall i \left\}\vphantom{\< \sum_i p_i \>} \right..\nonumber
  \end{align}
This in particular means that, having obtained the measurement result from party $C$, the distillation is performed from a \textit{pure} state --- therefore, employing our results in Thm.~\ref{thm:all_fid_equal_LOCC}, we immediately obtain the result that the rate of assisted entanglement disitillation is the same under all sets of operations from $\OLOCC$ up to $\PPTPPR$ and $\SEPP$.

Additionally, the proofs of the main results of~\cite{buscemi_2013} can be significantly simplified by employing the formalism introduced in this work, in fact strengthening the one-shot characterisation of Thms. 1 and 2 of~\cite{buscemi_2013} and tightening the bounds derived therein. In particular, our pure state-results in Thm.~\ref{thm:all_fid_equal_LOCC} allow us to straightforwardly obtain the following.
\begin{theorem}
For any $\O \in \{\OLOCC$, $\LOCC$, $\PPTO$, $\PPTPR$, $\PPTPPR$, $\SEPP \}$, the fidelity and one-shot rate of assisted distillation of any state are given by
\begin{align}
  &F_{A,\O}(\rho,m) = \max \lsetr F(\rho, \omega) \barr \omega \in \M_m \rsetr,\\
  &E_{A,\O}^{(1),\ve}(\rho) = \label{eq:assisted_rate}\\
  &\quad \logfloor{ -\log \min \lset \vartheta(\omega) \bar \omega \in \DD,\; F(\rho,\omega) \geq 1-\ve \rset },\nonumber
\end{align}
where
\begin{equation}\begin{aligned}
  \M_m \coloneqq \conv \lsetr \proj{\phi} \barr \lnorm{\ket{\phi}}{2} = 1,\; \lnorm{\ket{\xi_\phi}}{\infty} \leq \frac{1}{\sqrt{m}} \rsetr
\end{aligned}\end{equation}
and
\begin{equation}\begin{aligned}
  \vartheta(\omega) \coloneqq \min \lset \max_i \lnorm{\ket{\xi_{\psi_i}}}{\infty}^2 \bar \omega = \sum_i p_i \proj{\psi_i} \rset.
\end{aligned}\end{equation}
\end{theorem}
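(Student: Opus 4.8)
The plan is to reduce assisted distillation to ordinary pure-state distillation, invoke Thm.~\ref{thm:all_fid_equal_LOCC} to make the problem class-independent, and then repackage the resulting optimisation over decompositions as a single fidelity over $\M_m$ by means of Uhlmann's theorem.

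As recalled before the statement, the optimal assisted protocol uses a rank-$1$ measurement on $C$, so that $F_{A,\O}(\rho,m)$ is obtained by optimising over pure-state decompositions $\rho=\sum_ip_i\proj{\psi_i}$ together with an \emph{independent} channel $\Lambda_i\in\O$ applied to each $\psi_i$. Because the objective $\sum_ip_i\<\Lambda_i(\psi_i),\Psi_m\>$ is linear and the $\Lambda_i$ are unconstrained relative to one another, the supremum decouples and equals $\sum_ip_iF_\O(\psi_i,m)$. By Thm.~\ref{thm:all_fid_equal_LOCC} (and Thm.~\ref{thm:ppt_dist} to cover $\PPTPR$, whose pure-state fidelity coincides with that of $\PPTO$) we have $F_\O(\psi_i,m)=\tfrac1m\mnorm{\ket{\xi_{\psi_i}}}^2$ for every listed $\O$, so
\begin{equation}
F_{A,\O}(\rho,m)=\max_{\rho=\sum_ip_i\psi_i}\frac1m\sum_ip_i\mnorm{\ket{\xi_{\psi_i}}}^2 ,
\end{equation}
which is already manifestly the same for all classes and settles the operational equivalence.

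Next I would recast each pure-state term. The proof of Thm.~\ref{thm:Tm_pure_states} together with the dual form of the $m$-distillation norm gives $\tfrac1m\mnorm{\ket{\xi_\psi}}^2=\max_{\omega\in\M_m}\<\psi,\omega\>=\max_{\omega\in\M_m}F(\psi,\omega)$, with the maximum attained at an extreme point $\proj{\phi}$ of $\M_m$ --- a pure state with $\lnorm{\ket{\xi_\phi}}{\infty}\leq 1/\sqrt m$ that can be aligned to the Schmidt basis of $\psi$ --- and $F(\psi,\omega)=\braket{\psi|\omega|\psi}=\norm{\sqrt{\psi}\sqrt{\omega}}{1}^2$ the fidelity with a pure argument.

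The core step is to upgrade the average over a decomposition into a single fidelity $\max_{\omega\in\M_m}F(\rho,\omega)$. For the bound $\max_{\omega\in\M_m}F(\rho,\omega)\leq F_{A,\O}(\rho,m)$, fix $\omega\in\M_m$, write $\omega=\sum_kq_k\proj{\phi_k}$ with $\lnorm{\ket{\xi_{\phi_k}}}{\infty}\leq 1/\sqrt m$, and let $\ket{\Phi_\omega}=\sum_k\sqrt{q_k}\ket{\phi_k}\ket{k}_C$ be the induced purification. Taking the Uhlmann-optimal purification $\ket{\Phi_\rho}$ relative to $\ket{\Phi_\omega}$ and measuring $C$ in the basis $\{\ket{k}\}$ yields a decomposition $\rho=\sum_kp_k\proj{\psi_k}$ with $\sqrt{F(\rho,\omega)}=\sum_k\sqrt{p_kq_k}\braket{\psi_k|\phi_k}$ (phases aligned); the Cauchy--Schwarz inequality then gives $F(\rho,\omega)\leq\sum_kp_k\cbraket{\psi_k|\phi_k}^2\leq\sum_kp_k\tfrac1m\mnorm{\ket{\xi_{\psi_k}}}^2\leq F_{A,\O}(\rho,m)$, the middle step using the $\ell_\infty$ constraint on each $\phi_k$. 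For the converse $F_{A,\O}(\rho,m)\leq\max_{\omega\in\M_m}F(\rho,\omega)$, take the optimal decomposition $\rho=\sum_ip_i\psi_i$ with extreme maximisers $\phi_i$, set $c_i:=\braket{\psi_i|\phi_i}\geq 0$, and build $\omega:=\sum_iq_i\proj{\phi_i}\in\M_m$ with weights $q_i\propto p_ic_i^2$; comparing the matched purifications $\sum_i\sqrt{p_i}\ket{\psi_i}\ket{i}$ and $\sum_i\sqrt{q_i}\ket{\phi_i}\ket{i}$ yields $\sqrt{F(\rho,\omega)}\geq\sum_i\sqrt{p_iq_i}\,c_i=\sqrt{\sum_ip_ic_i^2}$, i.e. $F(\rho,\omega)\geq\sum_ip_i\tfrac1m\mnorm{\ket{\xi_{\psi_i}}}^2$. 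I expect this converse to be the main obstacle: the naive equal-weight choice $\omega=\sum_ip_i\proj{\phi_i}$ undershoots because of Jensen's inequality, and it is precisely the $c_i^2$-weighting that makes the purification overlap reproduce $\sum_ip_ic_i^2$ and renders Cauchy--Schwarz tight.

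Finally, the rate formula follows by bookkeeping. One notes that $\omega\in\M_m$ if and only if $\vartheta(\omega)\leq 1/m$, since the minimisation defining $\vartheta$ is attained on the compact set of pure-state decompositions and $\M_m$ is exactly the convex hull of pure states with $\lnorm{\ket{\xi_\phi}}{\infty}^2\leq 1/m$. Hence $F_{A,\O}(\rho,m)\geq 1-\ve$ holds iff there exists $\omega\in\DD$ with $F(\rho,\omega)\geq 1-\ve$ and $\vartheta(\omega)\leq 1/m$; maximising $m$ over this condition and applying $\logfloor{\cdot}$ to convert $m\leq 1/\vartheta(\omega)$ into the stated floor gives $E^{(1),\ve}_{A,\O}(\rho)=\logfloor{-\log\min\{\vartheta(\omega)\mid \omega\in\DD,\ F(\rho,\omega)\geq 1-\ve\}}$.
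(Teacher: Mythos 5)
Your proposal is correct, and its skeleton is identical to the paper's: reduce assisted distillation to an average of pure-state fidelities over decompositions, invoke Thm.~\ref{thm:all_fid_equal_LOCC} to get $F_\O(\psi_i,m)=\frac1m\mnorm{\ket{\xi_{\psi_i}}}^2=\max_{\omega\in\M_m}F(\psi_i,\omega)$, exchange the decomposition average for a single fidelity over $\M_m$, and finish with the $\vartheta$ bookkeeping. The one genuine difference is the exchange step: the paper simply cites the result of Streltsov et al.~\cite{streltsov_2010} (valid because $\M_m$ is a convex hull of pure states), whereas you prove it from scratch via Uhlmann's theorem --- the purification-plus-Cauchy--Schwarz argument for $\max_{\omega\in\M_m}F(\rho,\omega)\leq\sum_kp_k\max_{\omega_k}F(\psi_k,\omega_k)$, and the $c_i^2$-weighted mixture $q_i\propto p_ic_i^2$ with matched purifications for the converse. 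Both halves of your argument are sound (your weighting is exactly what makes Cauchy--Schwarz tight, as you note), so what you have written is in effect a self-contained proof of the cited lemma; this buys independence from the reference at the cost of length, while the paper's citation keeps the proof short. Two smaller points in your favour: you correctly flag that $\PPTPR$ is \emph{not} in the list of Thm.~\ref{thm:all_fid_equal_LOCC} and must be covered through $F_{\PPTPR}=F_{\PPTO}$ from Thm.~\ref{thm:ppt_dist}, a gap the paper's proof silently skips over; and you justify the equivalence $\omega\in\M_m\iff\vartheta(\omega)\leq1/m$ via attainment of the minimum, which the paper asserts by definition.
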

\begin{proof}
The derivation follows the approach taken for quantum coherence in~\cite{regula_2018-1}. We begin by writing the fidelity of assisted distillation as
\begin{equation}\begin{aligned}
  F_{A,\O}(\rho,m) = \max \!\lset \sum_i p_i F_{\O}(\psi_i,m) \bar \rho = \sum_i p_i \proj{\psi_i} \rset
\end{aligned}\end{equation}
with $F_\O$ denoting the fidelity of distillation as before.
Notice now that $F_{\O}(\psi_i,m) = \frac{1}{m} \mnorm{\ket{\xi_{\psi_i}}}^2$ can be written as $F_\O(\psi_i,m) = \max_{\omega \in \M_m} F(\psi_i, \omega)$, where we employed the dual characterisation of the $m$-distillation norm of the Schmidt vector. Since $\M_m$ is defined as the convex hull of rank-one projectors, we can now use the result of Streltsov et al.~\cite{streltsov_2010} (see also~\cite{regula_2018-1}) to obtain
\begin{equation}\begin{aligned}
  &F_{A,\O}(\rho,m)\\
  &= \max \lset \sum_i p_i \max_{\omega_i \in \M_m} F(\psi_i,\omega_i) \bar \rho = \sum_i p_i \proj{\psi_i} \rset\\
  &= \max_{\omega \in \M_m} F(\rho,\omega)
\end{aligned}\end{equation}
as required. The quantity $\vartheta$ is simply a function defined so that any state $\omega$ satisfies $\omega \in \M_m \iff \vartheta(\omega) \leq \frac{1}{m}$, allowing us to obtain
\begin{equation}\begin{aligned}
  &E_{A,\O}^{(1),\ve}(\rho) \\
  &\coloneqq \log \max \lsetr m \in \mathbb N \barr \max_{\omega \in \M_m} F(\rho, \omega) \geq 1- \ve \rsetr\\
  &\,= \log \max \Big\{ m \in \mathbb N \;\Big|\; \omega \in \DD,\; \vartheta(\omega) \leq \frac1m,\\
  & \hphantom{\;= \log \max \Big\{ m \in \mathbb N \;\Big|} F(\rho, \omega) \geq 1- \ve \Big\}
\end{aligned}\end{equation}
and thus completing the proof.
\end{proof}

%%%%%%%%%%%%%%%%%%%%%%%%%%%%%%%%%%%%%%%%%%%%%%%%%%%%%%%%%%%%%%%%%%%%%%%%%%%%%%%%%%%

\subsection{Zero-error distillation}\label{sec:zero}

Taking $\ve=0$ in the task of one-shot distillation corresponds to the problem of characterising the exact transformation $\rho \to \Psi_m$ with a given class of free operations. One is then interested in understanding not only the one-shot capabilities in such a task, but also the asymptotically achievable rate
\begin{equation}\begin{aligned}
  E_{d,\O}^{\infty, 0}(\rho) \coloneqq \limsup_{n \to \infty} \frac1n E_{d,\O}^{(1),0}(\rho^{\otimes n}).
\end{aligned}\end{equation}

To apply our methods in this setting, let us focus on the classes of operations for which we have shown that $F_\O(\rho,m) = \GG{m}_\S(\rho)$ for some set $\S$; recall from our previous results that $F_\SEPP(\cdot,m) = \GG{m}_\SEP$, $F_{\PPTPPR}(\cdot,m) = \GG{m}_\PPTP$, $F_{\PPTRPPR} = \GG{m}_\PPTRP$, and $F_\PPTPR(\cdot,m) = F_\PPTO(\cdot,m)  = \GG{m}_\PPTR$.

\begin{lemma}
Take $\O \in \{ \PPTO, \PPTPR, \PPTRPPR, \PPTPPR, \SEPP\}$ and let $\S$ be the set such that $F_\O(\rho,m) = \GG{m}_\S(\rho)$ for the given class. Then, for any $\rho$, it holds that
\begin{equation}\begin{aligned}\label{eq:zero_error_general}
  E_{d,\O}^{(1),0}(\rho) = \logfloor { - \log \min \lset \Gamma_{\S^\circ}(W) \bar \Pi_\rho \mleq W \mleq \id \rset }
\end{aligned}\end{equation}
where $\Pi_\rho$ is the projector onto the support of $\rho$.
\end{lemma}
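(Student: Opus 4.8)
The plan is to start from the definition $E^{(1),0}_{d,\O}(\rho) = \log\max\lset m \in \NN \bar F_\O(\rho,m) \geq 1 \rset$ and substitute the identity $F_\O(\rho,m) = \GG{m}_\S(\rho)$ recalled immediately before the lemma, so that the task reduces to determining, for each $m$, whether the primal program
\begin{equation*}
\GG{m}_\S(\rho) = \sup\lset \<\rho,W\> \bar 0 \mleq W \mleq \id,\ W \in \tfrac1m\S^\circ \rset
\end{equation*}
attains the value $1$. Since $W \mleq \id$ forces $\<\rho,W\> \leq \Tr\rho = 1$ for any density matrix $\rho$, the value $1$ is the largest possible, and the whole question is precisely when it is achieved by a feasible $W$.

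The key step is to show that a feasible $W$ satisfies $\<\rho,W\> = 1$ if and only if $W \mgeq \Pi_\rho$. First I would write $\<\rho, \id - W\> = 1 - \<\rho,W\> = 0$ and invoke that both $\rho \mgeq 0$ and $\id - W \mgeq 0$; for two positive semidefinite operators a vanishing trace inner product forces $\rho(\id-W)=0$, so that $\id - W$ annihilates $\supp\rho$ and $Wu = u$ for every $u \in \supp\rho$. By Hermiticity $W$ is then block-diagonal with respect to the decomposition $\supp\rho \oplus (\supp\rho)^\perp$, acting as the identity on the first block, which is exactly the statement $W \mgeq \Pi_\rho$. The converse is immediate, since $\Pi_\rho \mleq W$ gives $\<\rho,W\> \geq \<\rho,\Pi_\rho\> = \Tr\rho = 1$. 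Hence $\GG{m}_\S(\rho)=1$ precisely when there exists $W$ with $\Pi_\rho \mleq W \mleq \id$ and $W \in \frac1m\S^\circ$.

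Next I would convert the conic membership $W \in \frac1m\S^\circ$ into a gauge statement. Because $\S^\circ$ is a closed convex set containing the origin, the dilations $\lambda\S^\circ$ are nested in $\lambda$, so $W \in \frac1m\S^\circ \iff \Gamma_{\S^\circ}(W) \leq \frac1m$. Combining with the previous step, $F_\O(\rho,m)=1$ holds iff $t \leq \frac1m$, where
\begin{equation*}
t \coloneqq \min\lset \Gamma_{\S^\circ}(W) \bar \Pi_\rho \mleq W \mleq \id \rset.
\end{equation*}
The minimum is attained and finite because the feasible set is compact, $\Gamma_{\S^\circ}$ is lower semicontinuous, and $W=\id$ is feasible with $\Gamma_{\S^\circ}(\id)=1$, so in particular $t \leq 1$ and the admissible set of $m$ is nonempty.

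Finally, the largest integer $m$ with $\frac1m \geq t$ is $\lfloor 1/t \rfloor$, and since $2^{-\log t} = 1/t$ the definition of $\logfloor{\cdot}$ gives $\log\lfloor 1/t\rfloor = \logfloor{-\log t}$, which is exactly the claimed expression $\logfloor{-\log \min\lset \Gamma_{\S^\circ}(W) \bar \Pi_\rho \mleq W \mleq \id \rset}$. I expect the only delicate point to be the support argument in the second paragraph — rigorously establishing that pushing $\<\rho,W\>$ to its ceiling forces $W$ to dominate $\Pi_\rho$ — whereas the gauge-function reformulation and the floor manipulation are routine given the machinery already assembled in the paper.
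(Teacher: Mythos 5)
Your proof is correct and takes essentially the same route as the paper's: the core of both arguments is the identification of the feasible set $\left\{ W : \<\rho, W\> = 1,\ 0 \mleq W \mleq \id \right\}$ with $\left\{ W : \Pi_\rho \mleq W \mleq \id \right\}$, which the paper establishes by spectral-decomposing $\rho$ (so that $\braket{\psi_i|W|\psi_i} = 1$ forces $W$ to act as the identity on $\supp\rho$), and you establish by the equivalent observation that $\<\rho, \id - W\> = 0$ with both factors positive semidefinite forces $\rho(\id - W) = 0$. The only other difference is presentational: the paper reaches the intermediate expression $\logfloor{-\log\min\left\{ \Gamma_{\S^\circ}(W) : \<\rho, W\> = 1,\ 0 \mleq W \mleq \id \right\}}$ by invoking Prop.~\ref{prop:hyp_gauge} and Thm.~\ref{thm:hyp_entropy} at $\ve = 0$, whereas you unpack the same gauge and floor-function manipulations directly from the definitions.
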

\begin{proof}
Using the characterisation in Prop.~\ref{prop:hyp_gauge} and Thm.~\ref{thm:hyp_entropy}, we can write
\begin{equation}\begin{aligned}\label{eq:zero_error}
  &E_{d,\O}^{(1),0}(\rho)\\
  &= \logfloor { - \log \min \lset \Gamma_{\S^\circ}(W) \bar \< \rho, W \> = 1,\; 0\mleq W \mleq \id \rset }.
\end{aligned}\end{equation}
Write $\rho$ in its spectral decomposition as $\rho = \sum_i \lambda_i \proj{\psi_i}$. We then have
\begin{equation}\begin{aligned}
 \sum_i \lambda_i = \Tr \rho  = 1 = \< W, \rho \> = \sum_i \lambda_i \< W, \proj{\psi_i}\>
\end{aligned}\end{equation}
and so $\braket{\psi_i|W|\psi_i} = 1$ for each $i \in \{1, \ldots, r\}$ since $0 \mleq W \mleq \id$. The constraints then imply that every feasible solution will have the form $W = \Pi_\rho + P$ with $0 \mleq P \mleq \id$ and $\supp(P) \subseteq \ker(\rho)$, and in particular $\Pi_\rho \mleq W \mleq \id$. Conversely, every $W$ such that $\Pi_\rho \mleq W \mleq \id$ satisfies $1 \geq \< \rho, W \> \geq 1$ and $0\mleq W \mleq \id$, so the feasible sets of the two problems are equal.
\end{proof}
Note that for any positive semidefinite $W$, from Eq.~\eqref{eq:gauge_dual} we have $\Gamma_{\S^\circ}(W) = \max_{X \in \S} \<X, W \>$. For the case of $\PPTO$ operations, where $\Gamma_{{\PPTR}^\circ}(W) = \norm{W^{T_B}}{\infty}$, the above recovers a result of~\cite{wang_2016}. 

Notice that the above implies that one-shot zero-error distillation is impossible from any full-rank state under any class of free operations, as for $\Pi_\rho = \id$ the only feasible $W$ is $\id$ itself and so we have $E_{d,\O}^{(1),0}(\rho) = \log \left\lfloor \Gamma_{\S^\circ}(\id)^{-1} \right\rfloor = \log 1 = 0$. We will shortly improve this characterisation of zero-error undistillability.

Interestingly, in the case of $\PPTPPR$, $\PPTRPPR$, and $\SEPP$, the set $\S$ consists of positive semidefinite operators, which means that for any $P \in \HH_+$ it holds that
\begin{equation}\begin{aligned}
  \Gamma_{\S^\circ}(\Pi_\rho + P) &= \max_{\sigma \in \S} \< \sigma, \Pi_\rho + P \> \\
  &\geq \max_{\sigma \in \S} \< \sigma, \Pi_\rho \> = \Gamma_{\S^\circ}(\Pi_\rho)
\end{aligned}\end{equation}
and so $\Pi_\rho$ itself will be the optimal solution to the minimisation in Eq. \eqref{eq:zero_error_general}. This gives the following.
\begin{corollary}
For the classes of operations $\O \in \{ \PPTPPR, \PPTRPPR, \SEPP\}$, the one-shot zero-error distillable entanglement is given exactly by
\begin{equation}\begin{aligned}
  E_{d,\O}^{(1),0}(\rho) = \log \left\lfloor \Gamma_{\S^\circ}(\Pi_\rho)^{-1} \right\rfloor.
\end{aligned}\end{equation}
\end{corollary}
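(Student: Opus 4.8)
The plan is to invoke the preceding Lemma, which already reduces the zero-error rate to
\[
  E_{d,\O}^{(1),0}(\rho) = \logfloor{ - \log \min \lset \Gamma_{\S^\circ}(W) \bar \Pi_\rho \mleq W \mleq \id \rset },
\]
and to argue that for these three classes the inner minimisation is solved by the single extreme point $W = \Pi_\rho$. First I would record the defining feature that separates $\PPTPPR$, $\PPTRPPR$, and $\SEPP$ from the remaining classes: the associated sets $\S$ (namely $\PPTP$, $\PPTRP$, and $\SEP$) all consist of positive semidefinite operators. Since every feasible $W$ obeys $W \mgeq \Pi_\rho \mgeq 0$, I can then invoke the dual form \eqref{eq:gauge_dual} of the gauge in its simplified shape $\Gamma_{\S^\circ}(W) = \max_{\sigma \in \S}\< \sigma, W\>$, valid precisely because $W$ and every $\sigma \in \S$ are positive semidefinite.

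The crux is a short monotonicity argument. Writing any feasible point as $W = \Pi_\rho + P$ with $P = W - \Pi_\rho \mgeq 0$, positivity of each $\sigma \in \S$ gives $\< \sigma, P\> \geq 0$ and hence
\[
  \Gamma_{\S^\circ}(\Pi_\rho + P) = \max_{\sigma \in \S}\< \sigma, \Pi_\rho + P\> \geq \max_{\sigma \in \S}\< \sigma, \Pi_\rho\> = \Gamma_{\S^\circ}(\Pi_\rho).
\]
Because $W = \Pi_\rho$ is itself feasible, the minimum is attained at $P = 0$, so that $\min_{\Pi_\rho \mleq W \mleq \id} \Gamma_{\S^\circ}(W) = \Gamma_{\S^\circ}(\Pi_\rho)$. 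Substituting this value back and using $\logfloor{-\log x} = \log\lfloor x^{-1}\rfloor$ then delivers the claimed closed form $E_{d,\O}^{(1),0}(\rho) = \log\lfloor \Gamma_{\S^\circ}(\Pi_\rho)^{-1}\rfloor$.

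The one point I would treat with care---and which I expect to be the genuine obstacle rather than the trivial algebra---is justifying the reduction $\Gamma_{\S^\circ}(W) = \max_{\sigma \in \S}\< \sigma, W\>$ for positive semidefinite $W$. In \eqref{eq:gauge_dual} the supremum is taken over $\S^{\circ\circ} = \cl\conv(\S \cup \{0\})$; for positive semidefinite $W$ and positive semidefinite members of $\S$ the inner product is nonnegative, so neither the origin nor any scaling toward it is ever advantageous, and the extremum is achieved on $\S$ itself. This is exactly where positivity of $\S$ is indispensable: for the excluded classes $\PPTO$ and $\PPTPR$ one has $\S = \PPTR \not\subseteq \HH_+$, the inequality $\< \sigma, P\> \geq 0$ can fail, and $\Pi_\rho$ need no longer be optimal---which explains why only the three positivity-preserving classes admit this particularly clean expression.
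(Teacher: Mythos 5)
Your proposal is correct and follows essentially the same route as the paper: invoke the preceding lemma, observe that for $\O \in \{\PPTPPR, \PPTRPPR, \SEPP\}$ the corresponding set $\S$ consists of positive semidefinite operators, and use $\< \sigma, P \> \geq 0$ to show that $W = \Pi_\rho$ minimises $\Gamma_{\S^\circ}$ over the feasible set. Your extra care in justifying $\Gamma_{\S^\circ}(W) = \max_{\sigma \in \S} \< \sigma, W \>$ for positive semidefinite $W$ (ruling out any advantage from the origin in $\S^{\circ\circ}$) is a detail the paper states without elaboration, but it is the same argument, not a different approach.
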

Noticing further that $\Pi_{\rho^{\otimes n}} = \Pi_\rho^{\otimes n}$, we can easily see that $\Gamma_{\S^\circ}\left(\Pi_{\rho^{\otimes n}}\right) \geq \Gamma_{\S^\circ}(\Pi_{\rho})^n$ due to the fact that $\sigma \in \S \Rightarrow \sigma^{\otimes n} \in \S$. This gives the relation $E_{d,\O}^{(1),0}(\rho^{\otimes n}) \leq \logfloor{ - n \log \Gamma_{\S^\circ}(\Pi_\rho)}$, and in particular we see that $-\log \Gamma_{\S^\circ}(\Pi_\rho)$ upper bounds the asymptotically achievable zero-error distillable entanglement $E_{d,\O}^{\infty, 0}(\rho)$. Equality does not generally hold since the quantities $\Gamma_{{\SEP^\circ}}$, $\Gamma_{{\PPT^\circ_+}}$ are not multiplicative (a counterexample being any state supported on the antisymmetric subspace~\cite{werner_2002,grudka_2010}). Interestingly, multiplicativity is indeed satisfied for $\Gamma_{\PPTRP^\circ}$ --- this can be seen explicitly by expressing the computation of $\Gamma_{\PPTRP^\circ}$ in its dual form as
\begin{equation}\begin{aligned}
   \Gamma_{\PPTRP^\circ}(\Pi_\rho) &= \min_{Q \mgeq \Pi_\rho} \norm{Q^{T_B}}{\infty},
\end{aligned}\end{equation}
from which it straightforwardly follows that $\Gamma_{\PPTRP^\circ}(\Pi^{\otimes n}_\rho) \leq \Gamma_{\PPTRP^\circ}(\Pi_\rho)^n$. This gives in particular the following.
\begin{corollary}\label{corr:zero_rains}
The asymptotic zero-error distillable entanglement under Rains-preserving operations is given by
\begin{equation}\begin{aligned}
  E_{d,\O}^{\infty, 0}(\rho) = - \log \Gamma_{\PPTRP^\circ}(\Pi_\rho).
\end{aligned}\end{equation}
\end{corollary}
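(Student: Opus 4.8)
The plan is to assemble the ingredients already developed above, the single-copy zero-error formula and the multiplicativity of $\Gamma_{\PPTRP^\circ}(\Pi_\rho)$, into the regularised rate. By the preceding Corollary applied with $\O=\PPTRPPR$ and $\S=\PPTRP$, we have $E_{d,\PPTRPPR}^{(1),0}(\rho)=\log\lfloor \Gamma_{\PPTRP^\circ}(\Pi_\rho)^{-1}\rfloor$ for every $\rho$. Since $\Pi_{\rho^{\otimes n}}=\Pi_\rho^{\otimes n}$, applying this to $\rho^{\otimes n}$ reduces the whole problem to controlling $\Gamma_{\PPTRP^\circ}\!\left(\Pi_\rho^{\otimes n}\right)$, so the first step is to establish the exact multiplicativity $\Gamma_{\PPTRP^\circ}\!\left(\Pi_\rho^{\otimes n}\right)=\Gamma_{\PPTRP^\circ}(\Pi_\rho)^n$.

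For the two bounds I would proceed as follows. Supermultiplicativity is automatic and holds for any $\S$: writing $\Gamma_{\PPTRP^\circ}(\Pi_\rho)=\max_{X\in\PPTRP}\<X,\Pi_\rho\>$ and using $X\in\PPTRP\Rightarrow X^{\otimes n}\in\PPTRP$ gives $\Gamma_{\PPTRP^\circ}\!\left(\Pi_\rho^{\otimes n}\right)\geq\Gamma_{\PPTRP^\circ}(\Pi_\rho)^n$. The nontrivial direction, which is precisely what singles out the Rains set from $\SEP$ or $\PPT$ (for which multiplicativity fails, e.g.\ on the antisymmetric subspace), is submultiplicativity. Here I would invoke the dual form $\Gamma_{\PPTRP^\circ}(\Pi_\rho)=\min_{Q\mgeq\Pi_\rho}\norm{Q^{T_B}}{\infty}$ recorded above: taking an optimal $Q$ for the single copy, its tensor power satisfies $Q^{\otimes n}\mgeq\Pi_\rho^{\otimes n}$ and, because partial transposition factorises across the tensor product and the operator norm is multiplicative, $\norm{(Q^{\otimes n})^{T_{B}}}{\infty}=\norm{Q^{T_B}}{\infty}^{\,n}$; feasibility of $Q^{\otimes n}$ then yields $\Gamma_{\PPTRP^\circ}\!\left(\Pi_\rho^{\otimes n}\right)\leq\Gamma_{\PPTRP^\circ}(\Pi_\rho)^n$. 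Combining the two bounds gives multiplicativity.

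It then remains to evaluate the regularised limit. Substituting multiplicativity into the single-copy formula gives $E_{d,\PPTRPPR}^{(1),0}(\rho^{\otimes n})=\log\lfloor \Gamma_{\PPTRP^\circ}(\Pi_\rho)^{-n}\rfloor$, so $E_{d,\PPTRPPR}^{\infty,0}(\rho)=\limsup_n\frac1n\log\lfloor a^{-n}\rfloor$ with $a\coloneqq\Gamma_{\PPTRP^\circ}(\Pi_\rho)$. A preliminary observation is that $a\in(0,1]$: for any $X\in\PPTRP$ one has $\Tr X=\Tr X^{T_B}\leq\norm{X^{T_B}}{1}\leq 1$, and since $0\mleq\Pi_\rho\mleq\id$ this forces $\<X,\Pi_\rho\>\leq 1$. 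Consequently $a^{-n}\geq 1$ and the floor is sandwiched as $a^{-n}-1\leq\lfloor a^{-n}\rfloor\leq a^{-n}$; taking logarithms, dividing by $n$, and letting $n\to\infty$ gives $\frac1n\log\lfloor a^{-n}\rfloor\to-\log a$ (the case $a=1$ being immediate), which is the claimed value $-\log\Gamma_{\PPTRP^\circ}(\Pi_\rho)$.

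The main obstacle is the submultiplicativity step: the argument rests entirely on the availability of the spectral-norm dual form for $\Gamma_{\PPTRP^\circ}$, whose tensor-stability has no analogue for $\Gamma_{\SEP^\circ}$ or $\Gamma_{\PPT^\circ_+}$; everything else is either inherited from the preceding results or a routine floor-function estimate.
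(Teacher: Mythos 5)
Your proposal is correct and follows essentially the same route as the paper: the single-copy formula $E_{d,\PPTRPPR}^{(1),0}(\rho)=\log\lfloor \Gamma_{\PPTRP^\circ}(\Pi_\rho)^{-1}\rfloor$, supermultiplicativity from tensor-stability of $\PPTRP$, and submultiplicativity via the dual form $\Gamma_{\PPTRP^\circ}(\Pi_\rho)=\min_{Q\mgeq\Pi_\rho}\norm{Q^{T_B}}{\infty}$ together with multiplicativity of the operator norm under tensor products. Your explicit floor-function sandwich and the check that $\Gamma_{\PPTRP^\circ}(\Pi_\rho)\in(0,1]$ merely spell out a limit step the paper leaves implicit.
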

The result therefore ensures the computability of both one-shot and asymptotic zero-error distillable entanglement under $\PPTRPPR$, showing that it constitutes an efficiently computable upper bound for zero-error LOCC distillation. Note that $\Gamma_{\PPTRP^\circ}(\Pi_\rho)$ appeared previously in the works~\cite{wang_2017-3,wang_2017-1} as a bound on entanglement cost and zero-error distillable entanglement. Our result gives this quantity a precise operational meaning, establishing it as a zero-error equivalent of the Rains bound (cf. Sec.~\ref{sec:rains}; see also discussion in~\cite{wang_2017-3}).

Evaluating $\Gamma_{\SEP^\circ}$ is significantly more difficult~\cite{gurvits_2003,harrow_2010}. One can write this quantity more explicitly as~\cite{acin_2003}
\begin{equation}\begin{aligned}
  \Gamma_{\SEP^\circ}(\Pi_\rho) &= \max \lset \norm{\ket{\xi_\psi}}{\ell_\infty}^2 \bar \ket{\psi} \in \supp(\rho) \rset
\end{aligned}\end{equation}
which makes it easy to see that if the support of $\rho$ contains a product state, then no class of free operations can distill any entanglement without error (even asymptotically). By a result of Parthasarathy~\cite{parthasarathy_2004}, if $\rank(\rho) > (d_A - 1)(d_B - 1)$, then $\Gamma_{\SEP^\circ}(\Pi_\rho) = 1$ and so $E_{d,\SEPP}^{(1),0}(\rho) = 0$. In a very similar manner, if $\supp(\rho)$ contains a PPT state, then $E_{d,\PPTPPR}^{(1),0}(\rho) = 0$; this, however, does not give a better universal bound for the rank of $\rho$ which ensures undistillability~\cite{johnston_2013}.

Our results in previous sections can further simplify the characterisation of zero-error distillable entanglement for several classes of states. In particular, any pure state has
\begin{equation}\begin{aligned}
  E_{d,\O}^{(1), 0}(\psi) = \log \left\lfloor \norm{\ket{\xi_\psi}}{\ell_\infty}^{-2} \right\rfloor
\end{aligned}\end{equation}
for any class of operations $\O$ considered in this work (which was already known in the case of LOCC~\cite{lo_2001} and PPT operations~\cite{matthews_2008}), and a maximally correlated state satisfies
\begin{equation}\begin{aligned}
  E_{d,\O}^{(1), 0}(\rho_{\mc}) = \log \left\lfloor \norm{\Delta(\Pi_{\rho_{\mc}})}{\infty}^{-1} \right\rfloor
\end{aligned}\end{equation}
for any $\O \in \{\PPTO,\PPTRPPR,\PPTPPR,\SEPP\}$, where $\Delta$ is the completely dephasing channel (diagonal map) in the maximally correlated basis. One can furthermore notice that in both of the above cases the quantity $\Gamma_{\SEP^\circ}(\Pi_\rho)$ is multiplicative, which means that in the asymptotic limit we have $E_{d,\O}^{\infty, 0}(\psi) = - \log \norm{\ket{\xi_\psi}}{\ell_\infty}^{2}$ and $E_{d,\O}^{\infty, 0}(\rho_{\mc}) = - \log \norm{\Delta(\Pi_{\rho_{\mc}})}{\infty}$.

Finally, we remark that the quantity $\Gamma_{\SEP^\circ}$, often encountered under the name $h_{\SEP}$, has found a plethora of uses beyond the resource theory of entanglement --- in particular, in the theory of quantum Merlin-Arthur games~\cite{harrow_2013} as well as in characterising the maximum output norms of quantum channels~\cite{werner_2002,harrow_2013}. Indeed, the non-multiplicativity of $\Gamma_{\SEP^\circ}$ is equivalent to the non-multiplicativity of the norm $\norm{\Lambda}{1\to\infty} \coloneqq \max \lset \norm{\Lambda(\rho)}{\infty} \bar \rho \in \DD\rset$ of a channel $\Lambda$; specifically, if $\Lambda$ takes operators on a Hilbert space $\H_{\text{\rm in}}$ to operators on Hilbert space $\H_{\text{\rm out}}$ and $V: \H_{\text{\rm in}} \to \H_{\text{\rm out}} \otimes \H_R$ is an isometry such that $\Lambda(\cdot) = \Tr_{R} V\cdot V^\dagger$ for some auxiliary Hilbert space $\H_R$, then $\norm{\Lambda}{1\to\infty} = \Gamma_{\SEP^\circ}(VV^\dagger)$~\cite{harrow_2013}. This interpretation provides an understanding of the cases in which $\Gamma_{\SEP^\circ}(VV^\dagger)$ is multiplicative: these are the cases in which the protocol $\Lambda$ obeys so-called perfect parallel repetition~\cite{harrow_2013}. It is furthermore known that, although not multiplicative, the quantity $\Gamma_{\SEP^\circ}$ obeys a form of weaker multiplicativity relations~\cite{montanaro_2013,lancien_2017}.
In the context of entanglement distillation we can see that additivity, in the sense that $E_{d,\SEPP}^{\infty, 0}(\rho) = - \log \Gamma_{\SEP^\circ}(\Pi_\rho)$, holds when the optimal operation $\Lambda \in \SEPP$ which distills entanglement from $\rho$ satisfies $\Lambda^{\otimes n} \in \SEPP$ for any $n$. We stress that an additive lower bound on $E_{d,\SEPP}^{\infty, 0}(\rho)$, and therefore also an upper bound on the regularisation of $\norm{\cdot}{1\to\infty}$, is given by Corr.~\ref{corr:zero_rains}.

%%%%%%%%%%%%%%%%%%%%%%%%%%%%%%%%%%%%%%%%%%%%%%%%%%%%%%%%%%%%%%%%%%%%%%%%%%%%%%%%%%%

\section{Discussion}

\noindent The contribution of our work is twofold. 

First, we established a comprehensive set of theoretical tools for the study of entanglement distillation. Employing a general framework based on convex analysis, we were able to relate many operational quantities to convex optimisation problems which can be efficiently characterised, in particular allowing for a significant simplification of the optimisation in many relevant cases. Our results revealed general connections between entanglement monotones $\GG{m}_\Q$ and the hypothesis testing relative entropy $D^\ve_H$, uncovering the fundamental role that both of the quantities play in the task of one-shot entanglement distillation.

Second, the methods found immediate operational applications in characterising the capabilities of several sets of quantum channels which extend the set LOCC. We not only established a precise and accessible one-shot description of entanglement distillation under a wide variety of relevant operations, we revealed several operational equivalences in distillation in the one-shot regime --- showing in particular that all sets of free operations achieve exactly the same performance in pure-state distillation, with similar simplifications occurring also in the distillation from isotropic and maximally correlated states. The theoretical framework allowed us to establish computable expressions for the distillation fidelities and rates in such cases, thus providing an exact characterisation of entanglement distillation for these classes of states. The insight from the one-shot characterisation allowed for an operational interpretation of quantities which did not enjoy a direct interpretation of this kind, including asymptotic bounds such as the Rains bound and its zero-error equivalent as well as entanglement monotones such as the generalised robustness or the modified trace distance of entanglement.

Our work thus sheds light on fundamental problems in the study of manipulating entanglement as a resource. By providing a powerful theoretical framework, establishing a precise description of entanglement distillation in the practically relevant one-shot setting, as well as uncovering several novel relations in the operational description of LOCC and beyond, our results will contribute to the ongoing effort to efficiently utilise entanglement in technological applications and optimise the performance of quantum technologies.

Due to the high generality of our framework, we expect it to find use in a variety of contexts not explicitly considered in this work, facilitating the precise description of other classes of states and operations. We hope the results can aid not only the further study of entanglement, but also other quantum resources whose distillation enjoys a similar structure \cite{brandao_2015,liu_2019}, including for example coherence \cite{regula_2017,zhao_2019,lami_2019-1} or thermodynamics \cite{aberg_2013,horodecki_2013,yungerhalpern_2016}.

%%%%%%%%%%%%%%%%%%%%%%%%%%%%%%%%%%%%%%%%%%%%%%%%%%%%%%%%%%%%%%%%%%%%%%%%%%%%%%%%%%%

\begin{acknowledgments}

We thank Ryuji Takagi, Hayata Yamasaki, and in particular Ludovico Lami for useful discussions. B.R. and M.G. acknowledge the financial support of the National Research Foundation of Singapore Fellowship No. NRF-NRFF2016-02 and the National Research Foundation and L'Agence Nationale de la Recherche joint Project No. NRF2017-NRFANR004 VanQuTe.
K.F. was supported by the University of Cambridge Isaac Newton Trust Early Career grant RG74916. X.W. was supported by the Department of Defense.
\end{acknowledgments}

\bibliographystyle{apsrev4-1}
\bibliography{main}

%merlin.mbs apsrev4-1.bst 2010-07-25 4.21a (PWD, AO, DPC) hacked
%Control: key (0)
%Control: author (72) initials jnrlst
%Control: editor formatted (1) identically to author
%Control: production of article title (-1) disabled
%Control: page (0) single
%Control: year (1) truncated
%Control: production of eprint (0) enabled
\begin{thebibliography}{100}%
\makeatletter
\providecommand \@ifxundefined [1]{%
 \@ifx{#1\undefined}
}%
\providecommand \@ifnum [1]{%
 \ifnum #1\expandafter \@firstoftwo
 \else \expandafter \@secondoftwo
 \fi
}%
\providecommand \@ifx [1]{%
 \ifx #1\expandafter \@firstoftwo
 \else \expandafter \@secondoftwo
 \fi
}%
\providecommand \natexlab [1]{#1}%
\providecommand \enquote  [1]{``#1''}%
\providecommand \bibnamefont  [1]{#1}%
\providecommand \bibfnamefont [1]{#1}%
\providecommand \citenamefont [1]{#1}%
\providecommand \href@noop [0]{\@secondoftwo}%
\providecommand \href [0]{\begingroup \@sanitize@url \@href}%
\providecommand \@href[1]{\@@startlink{#1}\@@href}%
\providecommand \@@href[1]{\endgroup#1\@@endlink}%
\providecommand \@sanitize@url [0]{\catcode `\\12\catcode `\$12\catcode
  `\&12\catcode `\#12\catcode `\^12\catcode `\_12\catcode `\%12\relax}%
\providecommand \@@startlink[1]{}%
\providecommand \@@endlink[0]{}%
\providecommand \url  [0]{\begingroup\@sanitize@url \@url }%
\providecommand \@url [1]{\endgroup\@href {#1}{\urlprefix }}%
\providecommand \urlprefix  [0]{URL }%
\providecommand \Eprint [0]{\href }%
\providecommand \doibase [0]{http://dx.doi.org/}%
\providecommand \selectlanguage [0]{\@gobble}%
\providecommand \bibinfo  [0]{\@secondoftwo}%
\providecommand \bibfield  [0]{\@secondoftwo}%
\providecommand \translation [1]{[#1]}%
\providecommand \BibitemOpen [0]{}%
\providecommand \bibitemStop [0]{}%
\providecommand \bibitemNoStop [0]{.\EOS\space}%
\providecommand \EOS [0]{\spacefactor3000\relax}%
\providecommand \BibitemShut  [1]{\csname bibitem#1\endcsname}%
\let\auto@bib@innerbib\@empty
%</preamble>
\bibitem [{\citenamefont {Bennett}\ \emph {et~al.}(1993)\citenamefont
  {Bennett}, \citenamefont {Brassard}, \citenamefont {Crépeau}, \citenamefont
  {Jozsa}, \citenamefont {Peres},\ and\ \citenamefont
  {Wootters}}]{bennett_1993}%
  \BibitemOpen
  \bibfield  {author} {\bibinfo {author} {\bibfnamefont {C.~H.}\ \bibnamefont
  {Bennett}}, \bibinfo {author} {\bibfnamefont {G.}~\bibnamefont {Brassard}},
  \bibinfo {author} {\bibfnamefont {C.}~\bibnamefont {Crépeau}}, \bibinfo
  {author} {\bibfnamefont {R.}~\bibnamefont {Jozsa}}, \bibinfo {author}
  {\bibfnamefont {A.}~\bibnamefont {Peres}}, \ and\ \bibinfo {author}
  {\bibfnamefont {W.~K.}\ \bibnamefont {Wootters}},\ }\href {\doibase
  10.1103/PhysRevLett.70.1895} {\bibfield  {journal} {\bibinfo  {journal}
  {Phys. Rev. Lett.}\ }\textbf {\bibinfo {volume} {70}},\ \bibinfo {pages}
  {1895} (\bibinfo {year} {1993})}\BibitemShut {NoStop}%
\bibitem [{\citenamefont {Bennett}\ and\ \citenamefont
  {Wiesner}(1992)}]{bennett_1992}%
  \BibitemOpen
  \bibfield  {author} {\bibinfo {author} {\bibfnamefont {C.~H.}\ \bibnamefont
  {Bennett}}\ and\ \bibinfo {author} {\bibfnamefont {S.~J.}\ \bibnamefont
  {Wiesner}},\ }\href {\doibase 10.1103/PhysRevLett.69.2881} {\bibfield
  {journal} {\bibinfo  {journal} {Phys. Rev. Lett.}\ }\textbf {\bibinfo
  {volume} {69}},\ \bibinfo {pages} {2881} (\bibinfo {year}
  {1992})}\BibitemShut {NoStop}%
\bibitem [{\citenamefont {Briegel}\ \emph {et~al.}(1998)\citenamefont
  {Briegel}, \citenamefont {Dür}, \citenamefont {Cirac},\ and\ \citenamefont
  {Zoller}}]{briegel_1998}%
  \BibitemOpen
  \bibfield  {author} {\bibinfo {author} {\bibfnamefont {H.-J.}\ \bibnamefont
  {Briegel}}, \bibinfo {author} {\bibfnamefont {W.}~\bibnamefont {Dür}},
  \bibinfo {author} {\bibfnamefont {J.~I.}\ \bibnamefont {Cirac}}, \ and\
  \bibinfo {author} {\bibfnamefont {P.}~\bibnamefont {Zoller}},\ }\href
  {\doibase 10.1103/PhysRevLett.81.5932} {\bibfield  {journal} {\bibinfo
  {journal} {Phys. Rev. Lett.}\ }\textbf {\bibinfo {volume} {81}},\ \bibinfo
  {pages} {5932} (\bibinfo {year} {1998})}\BibitemShut {NoStop}%
\bibitem [{\citenamefont {Kimble}(2008)}]{kimble_2008-1}%
  \BibitemOpen
  \bibfield  {author} {\bibinfo {author} {\bibfnamefont {H.~J.}\ \bibnamefont
  {Kimble}},\ }\href {\doibase 10.1038/nature07127} {\bibfield  {journal}
  {\bibinfo  {journal} {Nature}\ }\textbf {\bibinfo {volume} {453}},\ \bibinfo
  {pages} {1023} (\bibinfo {year} {2008})}\BibitemShut {NoStop}%
\bibitem [{\citenamefont {Bennett}\ \emph
  {et~al.}(1996{\natexlab{a}})\citenamefont {Bennett}, \citenamefont
  {DiVincenzo}, \citenamefont {Smolin},\ and\ \citenamefont
  {Wootters}}]{bennett_1996}%
  \BibitemOpen
  \bibfield  {author} {\bibinfo {author} {\bibfnamefont {C.~H.}\ \bibnamefont
  {Bennett}}, \bibinfo {author} {\bibfnamefont {D.~P.}\ \bibnamefont
  {DiVincenzo}}, \bibinfo {author} {\bibfnamefont {J.~A.}\ \bibnamefont
  {Smolin}}, \ and\ \bibinfo {author} {\bibfnamefont {W.~K.}\ \bibnamefont
  {Wootters}},\ }\href {\doibase 10.1103/PhysRevA.54.3824} {\bibfield
  {journal} {\bibinfo  {journal} {Phys. Rev. A}\ }\textbf {\bibinfo {volume}
  {54}},\ \bibinfo {pages} {3824} (\bibinfo {year}
  {1996}{\natexlab{a}})}\BibitemShut {NoStop}%
\bibitem [{\citenamefont {Bennett}\ \emph
  {et~al.}(1996{\natexlab{b}})\citenamefont {Bennett}, \citenamefont
  {Brassard}, \citenamefont {Popescu}, \citenamefont {Schumacher},
  \citenamefont {Smolin},\ and\ \citenamefont {Wootters}}]{bennett_1996-3}%
  \BibitemOpen
  \bibfield  {author} {\bibinfo {author} {\bibfnamefont {C.~H.}\ \bibnamefont
  {Bennett}}, \bibinfo {author} {\bibfnamefont {G.}~\bibnamefont {Brassard}},
  \bibinfo {author} {\bibfnamefont {S.}~\bibnamefont {Popescu}}, \bibinfo
  {author} {\bibfnamefont {B.}~\bibnamefont {Schumacher}}, \bibinfo {author}
  {\bibfnamefont {J.~A.}\ \bibnamefont {Smolin}}, \ and\ \bibinfo {author}
  {\bibfnamefont {W.~K.}\ \bibnamefont {Wootters}},\ }\href {\doibase
  10.1103/PhysRevLett.76.722} {\bibfield  {journal} {\bibinfo  {journal} {Phys.
  Rev. Lett.}\ }\textbf {\bibinfo {volume} {76}},\ \bibinfo {pages} {722}
  (\bibinfo {year} {1996}{\natexlab{b}})}\BibitemShut {NoStop}%
\bibitem [{\citenamefont {Horodecki}\ \emph {et~al.}(2009)\citenamefont
  {Horodecki}, \citenamefont {Horodecki}, \citenamefont {Horodecki},\ and\
  \citenamefont {Horodecki}}]{horodecki_2009}%
  \BibitemOpen
  \bibfield  {author} {\bibinfo {author} {\bibfnamefont {R.}~\bibnamefont
  {Horodecki}}, \bibinfo {author} {\bibfnamefont {P.}~\bibnamefont
  {Horodecki}}, \bibinfo {author} {\bibfnamefont {M.}~\bibnamefont
  {Horodecki}}, \ and\ \bibinfo {author} {\bibfnamefont {K.}~\bibnamefont
  {Horodecki}},\ }\href {\doibase 10.1103/RevModPhys.81.865} {\bibfield
  {journal} {\bibinfo  {journal} {Rev. Mod. Phys.}\ }\textbf {\bibinfo {volume}
  {81}},\ \bibinfo {pages} {865} (\bibinfo {year} {2009})}\BibitemShut
  {NoStop}%
\bibitem [{\citenamefont {Rains}(1999{\natexlab{a}})}]{rains_1999}%
  \BibitemOpen
  \bibfield  {author} {\bibinfo {author} {\bibfnamefont {E.~M.}\ \bibnamefont
  {Rains}},\ }\href {\doibase 10.1103/PhysRevA.60.173} {\bibfield  {journal}
  {\bibinfo  {journal} {Phys. Rev. A}\ }\textbf {\bibinfo {volume} {60}},\
  \bibinfo {pages} {173} (\bibinfo {year} {1999}{\natexlab{a}})}\BibitemShut
  {NoStop}%
\bibitem [{\citenamefont {Horodecki}\ \emph {et~al.}(1998)\citenamefont
  {Horodecki}, \citenamefont {Horodecki},\ and\ \citenamefont
  {Horodecki}}]{horodecki_1998}%
  \BibitemOpen
  \bibfield  {author} {\bibinfo {author} {\bibfnamefont {M.}~\bibnamefont
  {Horodecki}}, \bibinfo {author} {\bibfnamefont {P.}~\bibnamefont
  {Horodecki}}, \ and\ \bibinfo {author} {\bibfnamefont {R.}~\bibnamefont
  {Horodecki}},\ }\href {\doibase 10.1103/PhysRevLett.80.5239} {\bibfield
  {journal} {\bibinfo  {journal} {Phys. Rev. Lett.}\ }\textbf {\bibinfo
  {volume} {80}},\ \bibinfo {pages} {5239} (\bibinfo {year}
  {1998})}\BibitemShut {NoStop}%
\bibitem [{\citenamefont {Vedral}\ and\ \citenamefont
  {Plenio}(1998)}]{vedral_1998}%
  \BibitemOpen
  \bibfield  {author} {\bibinfo {author} {\bibfnamefont {V.}~\bibnamefont
  {Vedral}}\ and\ \bibinfo {author} {\bibfnamefont {M.~B.}\ \bibnamefont
  {Plenio}},\ }\href {\doibase 10.1103/PhysRevA.57.1619} {\bibfield  {journal}
  {\bibinfo  {journal} {Phys. Rev. A}\ }\textbf {\bibinfo {volume} {57}},\
  \bibinfo {pages} {1619} (\bibinfo {year} {1998})}\BibitemShut {NoStop}%
\bibitem [{\citenamefont {Devetak}\ and\ \citenamefont
  {Winter}(2005)}]{devetak_2005}%
  \BibitemOpen
  \bibfield  {author} {\bibinfo {author} {\bibfnamefont {I.}~\bibnamefont
  {Devetak}}\ and\ \bibinfo {author} {\bibfnamefont {A.}~\bibnamefont
  {Winter}},\ }\href {\doibase 10.1098/rspa.2004.1372} {\bibfield  {journal}
  {\bibinfo  {journal} {Proc. R. Soc. Lond. Math. Phys. Eng. Sci.}\ }\textbf
  {\bibinfo {volume} {461}},\ \bibinfo {pages} {207} (\bibinfo {year}
  {2005})}\BibitemShut {NoStop}%
\bibitem [{\citenamefont {Lo}\ and\ \citenamefont {Popescu}(2001)}]{lo_2001}%
  \BibitemOpen
  \bibfield  {author} {\bibinfo {author} {\bibfnamefont {H.-K.}\ \bibnamefont
  {Lo}}\ and\ \bibinfo {author} {\bibfnamefont {S.}~\bibnamefont {Popescu}},\
  }\href {\doibase 10.1103/PhysRevA.63.022301} {\bibfield  {journal} {\bibinfo
  {journal} {Phys. Rev. A}\ }\textbf {\bibinfo {volume} {63}},\ \bibinfo
  {pages} {022301} (\bibinfo {year} {2001})}\BibitemShut {NoStop}%
\bibitem [{\citenamefont {Morikoshi}\ and\ \citenamefont
  {Koashi}(2001)}]{morikoshi_2001}%
  \BibitemOpen
  \bibfield  {author} {\bibinfo {author} {\bibfnamefont {F.}~\bibnamefont
  {Morikoshi}}\ and\ \bibinfo {author} {\bibfnamefont {M.}~\bibnamefont
  {Koashi}},\ }\href {\doibase 10.1103/PhysRevA.64.022316} {\bibfield
  {journal} {\bibinfo  {journal} {Phys. Rev. A}\ }\textbf {\bibinfo {volume}
  {64}},\ \bibinfo {pages} {022316} (\bibinfo {year} {2001})}\BibitemShut
  {NoStop}%
\bibitem [{\citenamefont {{Martín-Delgado}}\ and\ \citenamefont
  {Navascués}(2003)}]{martin-delgado_2003}%
  \BibitemOpen
  \bibfield  {author} {\bibinfo {author} {\bibfnamefont {M.~A.}\ \bibnamefont
  {{Martín-Delgado}}}\ and\ \bibinfo {author} {\bibfnamefont {M.}~\bibnamefont
  {Navascués}},\ }\href {\doibase 10.1103/PhysRevA.68.012322} {\bibfield
  {journal} {\bibinfo  {journal} {Phys. Rev. A}\ }\textbf {\bibinfo {volume}
  {68}},\ \bibinfo {pages} {012322} (\bibinfo {year} {2003})}\BibitemShut
  {NoStop}%
\bibitem [{\citenamefont {Hayashi}(2006)}]{hayashi_2006-1}%
  \BibitemOpen
  \bibfield  {author} {\bibinfo {author} {\bibfnamefont {M.}~\bibnamefont
  {Hayashi}},\ }\href {\doibase 10.1109/TIT.2006.872976} {\bibfield  {journal}
  {\bibinfo  {journal} {IEEE Trans. Inf. Theory}\ }\textbf {\bibinfo {volume}
  {52}},\ \bibinfo {pages} {1904} (\bibinfo {year} {2006})}\BibitemShut
  {NoStop}%
\bibitem [{\citenamefont {Buscemi}\ and\ \citenamefont
  {Datta}(2010{\natexlab{a}})}]{buscemi_2010-1}%
  \BibitemOpen
  \bibfield  {author} {\bibinfo {author} {\bibfnamefont {F.}~\bibnamefont
  {Buscemi}}\ and\ \bibinfo {author} {\bibfnamefont {N.}~\bibnamefont
  {Datta}},\ }\href {\doibase 10.1063/1.3483717} {\bibfield  {journal}
  {\bibinfo  {journal} {J. Math. Phys.}\ }\textbf {\bibinfo {volume} {51}},\
  \bibinfo {pages} {102201} (\bibinfo {year} {2010}{\natexlab{a}})}\BibitemShut
  {NoStop}%
\bibitem [{\citenamefont {Buscemi}\ and\ \citenamefont
  {Datta}(2013)}]{buscemi_2013}%
  \BibitemOpen
  \bibfield  {author} {\bibinfo {author} {\bibfnamefont {F.}~\bibnamefont
  {Buscemi}}\ and\ \bibinfo {author} {\bibfnamefont {N.}~\bibnamefont
  {Datta}},\ }\href {\doibase 10.1109/TIT.2012.2227673} {\bibfield  {journal}
  {\bibinfo  {journal} {IEEE Trans. Inf. Theory}\ }\textbf {\bibinfo {volume}
  {59}},\ \bibinfo {pages} {1940} (\bibinfo {year} {2013})}\BibitemShut
  {NoStop}%
\bibitem [{\citenamefont {Brand{\~a}o}\ and\ \citenamefont
  {Datta}(2011)}]{brandao_2011}%
  \BibitemOpen
  \bibfield  {author} {\bibinfo {author} {\bibfnamefont {F.~G. S.~L.}\
  \bibnamefont {Brand{\~a}o}}\ and\ \bibinfo {author} {\bibfnamefont
  {N.}~\bibnamefont {Datta}},\ }\href {\doibase 10.1109/TIT.2011.2104531}
  {\bibfield  {journal} {\bibinfo  {journal} {IEEE Trans. Inf. Theory}\
  }\textbf {\bibinfo {volume} {57}},\ \bibinfo {pages} {1754} (\bibinfo {year}
  {2011})}\BibitemShut {NoStop}%
\bibitem [{\citenamefont {Datta}\ and\ \citenamefont
  {Leditzky}(2015)}]{datta_2015}%
  \BibitemOpen
  \bibfield  {author} {\bibinfo {author} {\bibfnamefont {N.}~\bibnamefont
  {Datta}}\ and\ \bibinfo {author} {\bibfnamefont {F.}~\bibnamefont
  {Leditzky}},\ }\href {\doibase 10.1109/TIT.2014.2366994} {\bibfield
  {journal} {\bibinfo  {journal} {IEEE Trans. Inf. Theory}\ }\textbf {\bibinfo
  {volume} {61}},\ \bibinfo {pages} {582} (\bibinfo {year} {2015})}\BibitemShut
  {NoStop}%
\bibitem [{\citenamefont {Fang}\ \emph {et~al.}(2019)\citenamefont {Fang},
  \citenamefont {Wang}, \citenamefont {Tomamichel},\ and\ \citenamefont
  {Duan}}]{fang_2019}%
  \BibitemOpen
  \bibfield  {author} {\bibinfo {author} {\bibfnamefont {K.}~\bibnamefont
  {Fang}}, \bibinfo {author} {\bibfnamefont {X.}~\bibnamefont {Wang}}, \bibinfo
  {author} {\bibfnamefont {M.}~\bibnamefont {Tomamichel}}, \ and\ \bibinfo
  {author} {\bibfnamefont {R.}~\bibnamefont {Duan}},\ }\href
  {http://doi.org/10.1109/TIT.2019.2914688} {\bibfield  {journal} {\bibinfo
  {journal} {IEEE Trans. Inf. Theory}\ }\textbf {\bibinfo {volume} {1}},\
  \bibinfo {pages} {1} (\bibinfo {year} {2019})}\BibitemShut {NoStop}%
\bibitem [{\citenamefont {Rozpędek}\ \emph {et~al.}(2018)\citenamefont
  {Rozpędek}, \citenamefont {Schiet}, \citenamefont {Thinh}, \citenamefont
  {Elkouss}, \citenamefont {Doherty},\ and\ \citenamefont
  {Wehner}}]{rozpedek_2018}%
  \BibitemOpen
  \bibfield  {author} {\bibinfo {author} {\bibfnamefont {F.}~\bibnamefont
  {Rozpędek}}, \bibinfo {author} {\bibfnamefont {T.}~\bibnamefont {Schiet}},
  \bibinfo {author} {\bibfnamefont {L.~P.}\ \bibnamefont {Thinh}}, \bibinfo
  {author} {\bibfnamefont {D.}~\bibnamefont {Elkouss}}, \bibinfo {author}
  {\bibfnamefont {A.~C.}\ \bibnamefont {Doherty}}, \ and\ \bibinfo {author}
  {\bibfnamefont {S.}~\bibnamefont {Wehner}},\ }\href {\doibase
  10.1103/PhysRevA.97.062333} {\bibfield  {journal} {\bibinfo  {journal} {Phys.
  Rev. A}\ }\textbf {\bibinfo {volume} {97}},\ \bibinfo {pages} {062333}
  (\bibinfo {year} {2018})}\BibitemShut {NoStop}%
\bibitem [{\citenamefont {Chitambar}\ \emph {et~al.}(2014)\citenamefont
  {Chitambar}, \citenamefont {Leung}, \citenamefont {Mančinska}, \citenamefont
  {Ozols},\ and\ \citenamefont {Winter}}]{chitambar_2014}%
  \BibitemOpen
  \bibfield  {author} {\bibinfo {author} {\bibfnamefont {E.}~\bibnamefont
  {Chitambar}}, \bibinfo {author} {\bibfnamefont {D.}~\bibnamefont {Leung}},
  \bibinfo {author} {\bibfnamefont {L.}~\bibnamefont {Mančinska}}, \bibinfo
  {author} {\bibfnamefont {M.}~\bibnamefont {Ozols}}, \ and\ \bibinfo {author}
  {\bibfnamefont {A.}~\bibnamefont {Winter}},\ }\href {\doibase
  10.1007/s00220-014-1953-9} {\bibfield  {journal} {\bibinfo  {journal}
  {Commun. Math. Phys.}\ }\textbf {\bibinfo {volume} {328}},\ \bibinfo {pages}
  {303} (\bibinfo {year} {2014})}\BibitemShut {NoStop}%
\bibitem [{\citenamefont {Rains}(1997)}]{rains_1997}%
  \BibitemOpen
  \bibfield  {author} {\bibinfo {author} {\bibfnamefont {E.~M.}\ \bibnamefont
  {Rains}},\ }\href {http://arxiv.org/abs/quant-ph/9707002} {\  (\bibinfo
  {year} {1997})},\ \Eprint {http://arxiv.org/abs/quant-ph/9707002}
  {arXiv:quant-ph/9707002} \BibitemShut {NoStop}%
\bibitem [{\citenamefont {Rains}(1999{\natexlab{b}})}]{rains_1999-1}%
  \BibitemOpen
  \bibfield  {author} {\bibinfo {author} {\bibfnamefont {E.~M.}\ \bibnamefont
  {Rains}},\ }\href {\doibase 10.1103/PhysRevA.60.179} {\bibfield  {journal}
  {\bibinfo  {journal} {Phys. Rev. A}\ }\textbf {\bibinfo {volume} {60}},\
  \bibinfo {pages} {179} (\bibinfo {year} {1999}{\natexlab{b}})}\BibitemShut
  {NoStop}%
\bibitem [{\citenamefont {Rains}(2001)}]{rains_2001}%
  \BibitemOpen
  \bibfield  {author} {\bibinfo {author} {\bibfnamefont {E.~M.}\ \bibnamefont
  {Rains}},\ }\href {\doibase 10.1109/18.959270} {\bibfield  {journal}
  {\bibinfo  {journal} {IEEE Trans. Inf. Theory}\ }\textbf {\bibinfo {volume}
  {47}},\ \bibinfo {pages} {2921} (\bibinfo {year} {2001})}\BibitemShut
  {NoStop}%
\bibitem [{\citenamefont {Eggeling}\ \emph {et~al.}(2001)\citenamefont
  {Eggeling}, \citenamefont {Vollbrecht}, \citenamefont {Werner},\ and\
  \citenamefont {Wolf}}]{eggeling_2001}%
  \BibitemOpen
  \bibfield  {author} {\bibinfo {author} {\bibfnamefont {T.}~\bibnamefont
  {Eggeling}}, \bibinfo {author} {\bibfnamefont {K.~G.~H.}\ \bibnamefont
  {Vollbrecht}}, \bibinfo {author} {\bibfnamefont {R.~F.}\ \bibnamefont
  {Werner}}, \ and\ \bibinfo {author} {\bibfnamefont {M.~M.}\ \bibnamefont
  {Wolf}},\ }\href {\doibase 10.1103/PhysRevLett.87.257902} {\bibfield
  {journal} {\bibinfo  {journal} {Phys. Rev. Lett.}\ }\textbf {\bibinfo
  {volume} {87}},\ \bibinfo {pages} {257902} (\bibinfo {year}
  {2001})}\BibitemShut {NoStop}%
\bibitem [{\citenamefont {Brandão}\ and\ \citenamefont
  {Plenio}(2010{\natexlab{a}})}]{brandao_2010}%
  \BibitemOpen
  \bibfield  {author} {\bibinfo {author} {\bibfnamefont {F.~G. S.~L.}\
  \bibnamefont {Brandão}}\ and\ \bibinfo {author} {\bibfnamefont {M.~B.}\
  \bibnamefont {Plenio}},\ }\href {\doibase 10.1007/s00220-010-1003-1}
  {\bibfield  {journal} {\bibinfo  {journal} {Commun. Math. Phys.}\ }\textbf
  {\bibinfo {volume} {295}},\ \bibinfo {pages} {829} (\bibinfo {year}
  {2010}{\natexlab{a}})}\BibitemShut {NoStop}%
\bibitem [{\citenamefont {Cirac}\ \emph {et~al.}(2001)\citenamefont {Cirac},
  \citenamefont {Dür}, \citenamefont {Kraus},\ and\ \citenamefont
  {Lewenstein}}]{cirac_2001}%
  \BibitemOpen
  \bibfield  {author} {\bibinfo {author} {\bibfnamefont {J.~I.}\ \bibnamefont
  {Cirac}}, \bibinfo {author} {\bibfnamefont {W.}~\bibnamefont {Dür}},
  \bibinfo {author} {\bibfnamefont {B.}~\bibnamefont {Kraus}}, \ and\ \bibinfo
  {author} {\bibfnamefont {M.}~\bibnamefont {Lewenstein}},\ }\href {\doibase
  10.1103/PhysRevLett.86.544} {\bibfield  {journal} {\bibinfo  {journal} {Phys.
  Rev. Lett.}\ }\textbf {\bibinfo {volume} {86}},\ \bibinfo {pages} {544}
  (\bibinfo {year} {2001})}\BibitemShut {NoStop}%
\bibitem [{\citenamefont {Wilde}\ \emph {et~al.}(2017)\citenamefont {Wilde},
  \citenamefont {Tomamichel},\ and\ \citenamefont {Berta}}]{wilde_2017-2}%
  \BibitemOpen
  \bibfield  {author} {\bibinfo {author} {\bibfnamefont {M.~M.}\ \bibnamefont
  {Wilde}}, \bibinfo {author} {\bibfnamefont {M.}~\bibnamefont {Tomamichel}}, \
  and\ \bibinfo {author} {\bibfnamefont {M.}~\bibnamefont {Berta}},\ }\href
  {\doibase 10.1109/TIT.2017.2648825} {\bibfield  {journal} {\bibinfo
  {journal} {IEEE Trans. Inf. Theory}\ }\textbf {\bibinfo {volume} {63}},\
  \bibinfo {pages} {1792} (\bibinfo {year} {2017})}\BibitemShut {NoStop}%
\bibitem [{\citenamefont {Bennett}\ \emph
  {et~al.}(1996{\natexlab{c}})\citenamefont {Bennett}, \citenamefont
  {Bernstein}, \citenamefont {Popescu},\ and\ \citenamefont
  {Schumacher}}]{bennett_1996-1}%
  \BibitemOpen
  \bibfield  {author} {\bibinfo {author} {\bibfnamefont {C.~H.}\ \bibnamefont
  {Bennett}}, \bibinfo {author} {\bibfnamefont {H.~J.}\ \bibnamefont
  {Bernstein}}, \bibinfo {author} {\bibfnamefont {S.}~\bibnamefont {Popescu}},
  \ and\ \bibinfo {author} {\bibfnamefont {B.}~\bibnamefont {Schumacher}},\
  }\href {\doibase 10.1103/PhysRevA.53.2046} {\bibfield  {journal} {\bibinfo
  {journal} {Phys. Rev. A}\ }\textbf {\bibinfo {volume} {53}},\ \bibinfo
  {pages} {2046} (\bibinfo {year} {1996}{\natexlab{c}})}\BibitemShut {NoStop}%
\bibitem [{\citenamefont {DiVincenzo}\ \emph {et~al.}(1999)\citenamefont
  {DiVincenzo}, \citenamefont {Fuchs}, \citenamefont {Mabuchi}, \citenamefont
  {Smolin}, \citenamefont {Thapliyal},\ and\ \citenamefont
  {Uhlmann}}]{divincenzo_1999}%
  \BibitemOpen
  \bibfield  {author} {\bibinfo {author} {\bibfnamefont {D.~P.}\ \bibnamefont
  {DiVincenzo}}, \bibinfo {author} {\bibfnamefont {C.~A.}\ \bibnamefont
  {Fuchs}}, \bibinfo {author} {\bibfnamefont {H.}~\bibnamefont {Mabuchi}},
  \bibinfo {author} {\bibfnamefont {J.~A.}\ \bibnamefont {Smolin}}, \bibinfo
  {author} {\bibfnamefont {A.}~\bibnamefont {Thapliyal}}, \ and\ \bibinfo
  {author} {\bibfnamefont {A.}~\bibnamefont {Uhlmann}},\ }in\ \href {\doibase
  10.1007/3-540-49208-9_21} {\emph {\bibinfo {booktitle} {Quantum {{Computing}}
  and {{Quantum Communications}}}}},\ \bibinfo {series and number} {Lecture
  {{Notes}} in {{Computer Science}}}\ (\bibinfo  {publisher} {{Springer,
  Berlin, Heidelberg}},\ \bibinfo {year} {1999})\ pp.\ \bibinfo {pages}
  {247--257}\BibitemShut {NoStop}%
\bibitem [{\citenamefont {Smolin}\ \emph {et~al.}(2005)\citenamefont {Smolin},
  \citenamefont {Verstraete},\ and\ \citenamefont {Winter}}]{smolin_2005}%
  \BibitemOpen
  \bibfield  {author} {\bibinfo {author} {\bibfnamefont {J.~A.}\ \bibnamefont
  {Smolin}}, \bibinfo {author} {\bibfnamefont {F.}~\bibnamefont {Verstraete}},
  \ and\ \bibinfo {author} {\bibfnamefont {A.}~\bibnamefont {Winter}},\ }\href
  {\doibase 10.1103/PhysRevA.72.052317} {\bibfield  {journal} {\bibinfo
  {journal} {Phys. Rev. A}\ }\textbf {\bibinfo {volume} {72}},\ \bibinfo
  {pages} {052317} (\bibinfo {year} {2005})}\BibitemShut {NoStop}%
\bibitem [{\citenamefont {Wang}\ and\ \citenamefont
  {Duan}(2017{\natexlab{a}})}]{wang_2017-3}%
  \BibitemOpen
  \bibfield  {author} {\bibinfo {author} {\bibfnamefont {X.}~\bibnamefont
  {Wang}}\ and\ \bibinfo {author} {\bibfnamefont {R.}~\bibnamefont {Duan}},\
  }\href {\doibase 10.1103/PhysRevA.95.062322} {\bibfield  {journal} {\bibinfo
  {journal} {Phys. Rev. A}\ }\textbf {\bibinfo {volume} {95}},\ \bibinfo
  {pages} {062322} (\bibinfo {year} {2017}{\natexlab{a}})}\BibitemShut
  {NoStop}%
\bibitem [{\citenamefont {Vidal}\ and\ \citenamefont
  {Tarrach}(1999)}]{vidal_1999}%
  \BibitemOpen
  \bibfield  {author} {\bibinfo {author} {\bibfnamefont {G.}~\bibnamefont
  {Vidal}}\ and\ \bibinfo {author} {\bibfnamefont {R.}~\bibnamefont
  {Tarrach}},\ }\href {\doibase 10.1103/PhysRevA.59.141} {\bibfield  {journal}
  {\bibinfo  {journal} {Phys. Rev. A}\ }\textbf {\bibinfo {volume} {59}},\
  \bibinfo {pages} {141} (\bibinfo {year} {1999})}\BibitemShut {NoStop}%
\bibitem [{\citenamefont {Brandão}(2005)}]{brandao_2005}%
  \BibitemOpen
  \bibfield  {author} {\bibinfo {author} {\bibfnamefont {F.~G. S.~L.}\
  \bibnamefont {Brandão}},\ }\href {\doibase 10.1103/PhysRevA.72.022310}
  {\bibfield  {journal} {\bibinfo  {journal} {Phys. Rev. A}\ }\textbf {\bibinfo
  {volume} {72}},\ \bibinfo {pages} {022310} (\bibinfo {year}
  {2005})}\BibitemShut {NoStop}%
\bibitem [{\citenamefont {Yu}\ \emph {et~al.}(2016)\citenamefont {Yu},
  \citenamefont {Zhang}, \citenamefont {Xu},\ and\ \citenamefont
  {Tong}}]{yu_2016}%
  \BibitemOpen
  \bibfield  {author} {\bibinfo {author} {\bibfnamefont {X.-D.}\ \bibnamefont
  {Yu}}, \bibinfo {author} {\bibfnamefont {D.-J.}\ \bibnamefont {Zhang}},
  \bibinfo {author} {\bibfnamefont {G.~F.}\ \bibnamefont {Xu}}, \ and\ \bibinfo
  {author} {\bibfnamefont {D.~M.}\ \bibnamefont {Tong}},\ }\href {\doibase
  10.1103/PhysRevA.94.060302} {\bibfield  {journal} {\bibinfo  {journal} {Phys.
  Rev. A}\ }\textbf {\bibinfo {volume} {94}},\ \bibinfo {pages} {060302}
  (\bibinfo {year} {2016})}\BibitemShut {NoStop}%
\bibitem [{\citenamefont {Qiao}\ \emph {et~al.}(2018)\citenamefont {Qiao},
  \citenamefont {Streltsov}, \citenamefont {Gao}, \citenamefont {Rana},
  \citenamefont {Ren}, \citenamefont {Jiao}, \citenamefont {Hu}, \citenamefont
  {Xu}, \citenamefont {Wang}, \citenamefont {Tang}, \citenamefont {Yang},
  \citenamefont {Ma}, \citenamefont {Lewenstein},\ and\ \citenamefont
  {Jin}}]{qiao_2018}%
  \BibitemOpen
  \bibfield  {author} {\bibinfo {author} {\bibfnamefont {L.-F.}\ \bibnamefont
  {Qiao}}, \bibinfo {author} {\bibfnamefont {A.}~\bibnamefont {Streltsov}},
  \bibinfo {author} {\bibfnamefont {J.}~\bibnamefont {Gao}}, \bibinfo {author}
  {\bibfnamefont {S.}~\bibnamefont {Rana}}, \bibinfo {author} {\bibfnamefont
  {R.-J.}\ \bibnamefont {Ren}}, \bibinfo {author} {\bibfnamefont {Z.-Q.}\
  \bibnamefont {Jiao}}, \bibinfo {author} {\bibfnamefont {C.-Q.}\ \bibnamefont
  {Hu}}, \bibinfo {author} {\bibfnamefont {X.-Y.}\ \bibnamefont {Xu}}, \bibinfo
  {author} {\bibfnamefont {C.-Y.}\ \bibnamefont {Wang}}, \bibinfo {author}
  {\bibfnamefont {H.}~\bibnamefont {Tang}}, \bibinfo {author} {\bibfnamefont
  {A.-L.}\ \bibnamefont {Yang}}, \bibinfo {author} {\bibfnamefont {Z.-H.}\
  \bibnamefont {Ma}}, \bibinfo {author} {\bibfnamefont {M.}~\bibnamefont
  {Lewenstein}}, \ and\ \bibinfo {author} {\bibfnamefont {X.-M.}\ \bibnamefont
  {Jin}},\ }\href {\doibase 10.1103/PhysRevA.98.052351} {\bibfield  {journal}
  {\bibinfo  {journal} {Phys. Rev. A}\ }\textbf {\bibinfo {volume} {98}},\
  \bibinfo {pages} {052351} (\bibinfo {year} {2018})}\BibitemShut {NoStop}%
\bibitem [{\citenamefont {Vidal}(2000)}]{vidal_2000}%
  \BibitemOpen
  \bibfield  {author} {\bibinfo {author} {\bibfnamefont {G.}~\bibnamefont
  {Vidal}},\ }\href {\doibase 10.1080/09500340008244048} {\bibfield  {journal}
  {\bibinfo  {journal} {J. Mod. Opt.}\ }\textbf {\bibinfo {volume} {47}},\
  \bibinfo {pages} {355} (\bibinfo {year} {2000})}\BibitemShut {NoStop}%
\bibitem [{\citenamefont {Johnston}\ and\ \citenamefont
  {Kribs}(2010)}]{johnston_2010}%
  \BibitemOpen
  \bibfield  {author} {\bibinfo {author} {\bibfnamefont {N.}~\bibnamefont
  {Johnston}}\ and\ \bibinfo {author} {\bibfnamefont {D.~W.}\ \bibnamefont
  {Kribs}},\ }\href {\doibase 10.1063/1.3459068} {\bibfield  {journal}
  {\bibinfo  {journal} {Journal of Mathematical Physics}\ }\textbf {\bibinfo
  {volume} {51}},\ \bibinfo {pages} {082202} (\bibinfo {year}
  {2010})}\BibitemShut {NoStop}%
\bibitem [{\citenamefont {Rockafellar}(1970)}]{rockafellar_1970}%
  \BibitemOpen
  \bibfield  {author} {\bibinfo {author} {\bibfnamefont {R.~T.}\ \bibnamefont
  {Rockafellar}},\ }\href@noop {} {\emph {\bibinfo {title} {Convex
  {{Analysis}}}}}\ (\bibinfo  {publisher} {{Princeton University Press}},\
  \bibinfo {address} {Princeton},\ \bibinfo {year} {1970})\BibitemShut
  {NoStop}%
\bibitem [{\citenamefont {Regula}(2018)}]{regula_2018}%
  \BibitemOpen
  \bibfield  {author} {\bibinfo {author} {\bibfnamefont {B.}~\bibnamefont
  {Regula}},\ }\href {\doibase 10.1088/1751-8121/aa9100} {\bibfield  {journal}
  {\bibinfo  {journal} {J. Phys. A: Math. Theor.}\ }\textbf {\bibinfo {volume}
  {51}},\ \bibinfo {pages} {045303} (\bibinfo {year} {2018})}\BibitemShut
  {NoStop}%
\bibitem [{\citenamefont {Buscemi}\ and\ \citenamefont
  {Datta}(2010{\natexlab{b}})}]{buscemi_2010}%
  \BibitemOpen
  \bibfield  {author} {\bibinfo {author} {\bibfnamefont {F.}~\bibnamefont
  {Buscemi}}\ and\ \bibinfo {author} {\bibfnamefont {N.}~\bibnamefont
  {Datta}},\ }\href {\doibase 10.1109/TIT.2009.2039166} {\bibfield  {journal}
  {\bibinfo  {journal} {IEEE Trans. Inf. Theory}\ }\textbf {\bibinfo {volume}
  {56}},\ \bibinfo {pages} {1447} (\bibinfo {year}
  {2010}{\natexlab{b}})}\BibitemShut {NoStop}%
\bibitem [{\citenamefont {Wang}\ and\ \citenamefont
  {Renner}(2012)}]{wang_2012}%
  \BibitemOpen
  \bibfield  {author} {\bibinfo {author} {\bibfnamefont {L.}~\bibnamefont
  {Wang}}\ and\ \bibinfo {author} {\bibfnamefont {R.}~\bibnamefont {Renner}},\
  }\href {\doibase 10.1103/PhysRevLett.108.200501} {\bibfield  {journal}
  {\bibinfo  {journal} {Phys. Rev. Lett.}\ }\textbf {\bibinfo {volume} {108}},\
  \bibinfo {pages} {200501} (\bibinfo {year} {2012})}\BibitemShut {NoStop}%
\bibitem [{\citenamefont {Tomamichel}\ and\ \citenamefont
  {Hayashi}(2013)}]{tomamichel_2013}%
  \BibitemOpen
  \bibfield  {author} {\bibinfo {author} {\bibfnamefont {M.}~\bibnamefont
  {Tomamichel}}\ and\ \bibinfo {author} {\bibfnamefont {M.}~\bibnamefont
  {Hayashi}},\ }\href {\doibase 10.1109/TIT.2013.2276628} {\bibfield  {journal}
  {\bibinfo  {journal} {IEEE Trans. Inf. Theory}\ }\textbf {\bibinfo {volume}
  {59}},\ \bibinfo {pages} {7693} (\bibinfo {year} {2013})}\BibitemShut
  {NoStop}%
\bibitem [{\citenamefont {Hayashi}(2016)}]{hayashi_2016}%
  \BibitemOpen
  \bibfield  {author} {\bibinfo {author} {\bibfnamefont {M.}~\bibnamefont
  {Hayashi}},\ }\href@noop {} {\emph {\bibinfo {title} {Quantum {{Information
  Theory}}: {{Mathematical Foundation}}}}}\ (\bibinfo  {publisher}
  {{Springer}},\ \bibinfo {year} {2016})\BibitemShut {NoStop}%
\bibitem [{\citenamefont {Hayashi}(2017)}]{hayashi_2017}%
  \BibitemOpen
  \bibfield  {author} {\bibinfo {author} {\bibfnamefont {M.}~\bibnamefont
  {Hayashi}},\ }\href {http://arxiv.org/abs/1709.07701} {\  (\bibinfo {year}
  {2017})},\ \Eprint {http://arxiv.org/abs/1709.07701} {arXiv:1709.07701
  [quant-ph]} \BibitemShut {NoStop}%
\bibitem [{\citenamefont {Vedral}\ \emph {et~al.}(1997)\citenamefont {Vedral},
  \citenamefont {Plenio}, \citenamefont {Rippin},\ and\ \citenamefont
  {Knight}}]{vedral_1997}%
  \BibitemOpen
  \bibfield  {author} {\bibinfo {author} {\bibfnamefont {V.}~\bibnamefont
  {Vedral}}, \bibinfo {author} {\bibfnamefont {M.~B.}\ \bibnamefont {Plenio}},
  \bibinfo {author} {\bibfnamefont {M.~A.}\ \bibnamefont {Rippin}}, \ and\
  \bibinfo {author} {\bibfnamefont {P.~L.}\ \bibnamefont {Knight}},\ }\href
  {\doibase 10.1103/PhysRevLett.78.2275} {\bibfield  {journal} {\bibinfo
  {journal} {Phys. Rev. Lett.}\ }\textbf {\bibinfo {volume} {78}},\ \bibinfo
  {pages} {2275} (\bibinfo {year} {1997})}\BibitemShut {NoStop}%
\bibitem [{\citenamefont {Bennett}\ \emph {et~al.}(1999)\citenamefont
  {Bennett}, \citenamefont {DiVincenzo}, \citenamefont {Fuchs}, \citenamefont
  {Mor}, \citenamefont {Rains}, \citenamefont {Shor}, \citenamefont {Smolin},\
  and\ \citenamefont {Wootters}}]{bennett_1999}%
  \BibitemOpen
  \bibfield  {author} {\bibinfo {author} {\bibfnamefont {C.~H.}\ \bibnamefont
  {Bennett}}, \bibinfo {author} {\bibfnamefont {D.~P.}\ \bibnamefont
  {DiVincenzo}}, \bibinfo {author} {\bibfnamefont {C.~A.}\ \bibnamefont
  {Fuchs}}, \bibinfo {author} {\bibfnamefont {T.}~\bibnamefont {Mor}}, \bibinfo
  {author} {\bibfnamefont {E.}~\bibnamefont {Rains}}, \bibinfo {author}
  {\bibfnamefont {P.~W.}\ \bibnamefont {Shor}}, \bibinfo {author}
  {\bibfnamefont {J.~A.}\ \bibnamefont {Smolin}}, \ and\ \bibinfo {author}
  {\bibfnamefont {W.~K.}\ \bibnamefont {Wootters}},\ }\href {\doibase
  10.1103/PhysRevA.59.1070} {\bibfield  {journal} {\bibinfo  {journal} {Phys.
  Rev. A}\ }\textbf {\bibinfo {volume} {59}},\ \bibinfo {pages} {1070}
  (\bibinfo {year} {1999})}\BibitemShut {NoStop}%
\bibitem [{\citenamefont {Audenaert}\ \emph {et~al.}(2003)\citenamefont
  {Audenaert}, \citenamefont {Plenio},\ and\ \citenamefont
  {Eisert}}]{audenaert_2003}%
  \BibitemOpen
  \bibfield  {author} {\bibinfo {author} {\bibfnamefont {K.}~\bibnamefont
  {Audenaert}}, \bibinfo {author} {\bibfnamefont {M.~B.}\ \bibnamefont
  {Plenio}}, \ and\ \bibinfo {author} {\bibfnamefont {J.}~\bibnamefont
  {Eisert}},\ }\href {\doibase 10.1103/PhysRevLett.90.027901} {\bibfield
  {journal} {\bibinfo  {journal} {Phys. Rev. Lett.}\ }\textbf {\bibinfo
  {volume} {90}},\ \bibinfo {pages} {027901} (\bibinfo {year}
  {2003})}\BibitemShut {NoStop}%
\bibitem [{\citenamefont {Chitambar}\ \emph {et~al.}(2017)\citenamefont
  {Chitambar}, \citenamefont {{de Vicente}}, \citenamefont {Girard},\ and\
  \citenamefont {Gour}}]{chitambar_2017}%
  \BibitemOpen
  \bibfield  {author} {\bibinfo {author} {\bibfnamefont {E.}~\bibnamefont
  {Chitambar}}, \bibinfo {author} {\bibfnamefont {J.~I.}\ \bibnamefont {{de
  Vicente}}}, \bibinfo {author} {\bibfnamefont {M.~W.}\ \bibnamefont {Girard}},
  \ and\ \bibinfo {author} {\bibfnamefont {G.}~\bibnamefont {Gour}},\ }\href
  {http://arxiv.org/abs/1711.03835} {\  (\bibinfo {year} {2017})},\ \Eprint
  {http://arxiv.org/abs/1711.03835} {arXiv:1711.03835} \BibitemShut {NoStop}%
\bibitem [{\citenamefont {Horodecki}(2001)}]{horodecki_2001}%
  \BibitemOpen
  \bibfield  {author} {\bibinfo {author} {\bibfnamefont {M.}~\bibnamefont
  {Horodecki}},\ }\href {http://dl.acm.org/citation.cfm?id=2011326.2011328}
  {\bibfield  {journal} {\bibinfo  {journal} {Quant. Inf. Comput.}\ }\textbf
  {\bibinfo {volume} {1}},\ \bibinfo {pages} {3} (\bibinfo {year}
  {2001})}\BibitemShut {NoStop}%
\bibitem [{\citenamefont {Matthews}\ and\ \citenamefont
  {Winter}(2008)}]{matthews_2008}%
  \BibitemOpen
  \bibfield  {author} {\bibinfo {author} {\bibfnamefont {W.}~\bibnamefont
  {Matthews}}\ and\ \bibinfo {author} {\bibfnamefont {A.}~\bibnamefont
  {Winter}},\ }\href {\doibase 10.1103/PhysRevA.78.012317} {\bibfield
  {journal} {\bibinfo  {journal} {Phys. Rev. A}\ }\textbf {\bibinfo {volume}
  {78}},\ \bibinfo {pages} {012317} (\bibinfo {year} {2008})}\BibitemShut
  {NoStop}%
\bibitem [{\citenamefont {Horodecki}\ and\ \citenamefont
  {Horodecki}(1999)}]{horodecki_1999}%
  \BibitemOpen
  \bibfield  {author} {\bibinfo {author} {\bibfnamefont {M.}~\bibnamefont
  {Horodecki}}\ and\ \bibinfo {author} {\bibfnamefont {P.}~\bibnamefont
  {Horodecki}},\ }\href {\doibase 10.1103/PhysRevA.59.4206} {\bibfield
  {journal} {\bibinfo  {journal} {Phys. Rev. A}\ }\textbf {\bibinfo {volume}
  {59}},\ \bibinfo {pages} {4206} (\bibinfo {year} {1999})}\BibitemShut
  {NoStop}%
\bibitem [{\citenamefont {Vidal}\ and\ \citenamefont
  {Werner}(2002)}]{vidal_2002}%
  \BibitemOpen
  \bibfield  {author} {\bibinfo {author} {\bibfnamefont {G.}~\bibnamefont
  {Vidal}}\ and\ \bibinfo {author} {\bibfnamefont {R.~F.}\ \bibnamefont
  {Werner}},\ }\href {\doibase 10.1103/PhysRevA.65.032314} {\bibfield
  {journal} {\bibinfo  {journal} {Phys. Rev. A}\ }\textbf {\bibinfo {volume}
  {65}},\ \bibinfo {pages} {032314} (\bibinfo {year} {2002})}\BibitemShut
  {NoStop}%
\bibitem [{\citenamefont {Brand{\~a}o}\ and\ \citenamefont
  {Gour}(2015)}]{brandao_2015}%
  \BibitemOpen
  \bibfield  {author} {\bibinfo {author} {\bibfnamefont {F.~G. S.~L.}\
  \bibnamefont {Brand{\~a}o}}\ and\ \bibinfo {author} {\bibfnamefont
  {G.}~\bibnamefont {Gour}},\ }\href {\doibase 10.1103/PhysRevLett.115.070503}
  {\bibfield  {journal} {\bibinfo  {journal} {Phys. Rev. Lett.}\ }\textbf
  {\bibinfo {volume} {115}},\ \bibinfo {pages} {070503} (\bibinfo {year}
  {2015})}\BibitemShut {NoStop}%
\bibitem [{\citenamefont {Brandão}\ and\ \citenamefont
  {Plenio}(2010{\natexlab{b}})}]{brandao_2010-1}%
  \BibitemOpen
  \bibfield  {author} {\bibinfo {author} {\bibfnamefont {F.~G. S.~L.}\
  \bibnamefont {Brandão}}\ and\ \bibinfo {author} {\bibfnamefont {M.~B.}\
  \bibnamefont {Plenio}},\ }\href {\doibase 10.1007/s00220-010-1005-z}
  {\bibfield  {journal} {\bibinfo  {journal} {Commun. Math. Phys.}\ }\textbf
  {\bibinfo {volume} {295}},\ \bibinfo {pages} {791} (\bibinfo {year}
  {2010}{\natexlab{b}})}\BibitemShut {NoStop}%
\bibitem [{\citenamefont {Audenaert}\ \emph {et~al.}(2002)\citenamefont
  {Audenaert}, \citenamefont {De~Moor}, \citenamefont {Vollbrecht},\ and\
  \citenamefont {Werner}}]{audenaert_2002}%
  \BibitemOpen
  \bibfield  {author} {\bibinfo {author} {\bibfnamefont {K.}~\bibnamefont
  {Audenaert}}, \bibinfo {author} {\bibfnamefont {B.}~\bibnamefont {De~Moor}},
  \bibinfo {author} {\bibfnamefont {K.~G.~H.}\ \bibnamefont {Vollbrecht}}, \
  and\ \bibinfo {author} {\bibfnamefont {R.~F.}\ \bibnamefont {Werner}},\
  }\href {\doibase 10.1103/PhysRevA.66.032310} {\bibfield  {journal} {\bibinfo
  {journal} {Phys. Rev. A}\ }\textbf {\bibinfo {volume} {66}},\ \bibinfo
  {pages} {032310} (\bibinfo {year} {2002})}\BibitemShut {NoStop}%
\bibitem [{\citenamefont {Wang}\ and\ \citenamefont
  {Duan}(2017{\natexlab{b}})}]{wang_2017-1}%
  \BibitemOpen
  \bibfield  {author} {\bibinfo {author} {\bibfnamefont {X.}~\bibnamefont
  {Wang}}\ and\ \bibinfo {author} {\bibfnamefont {R.}~\bibnamefont {Duan}},\
  }\href {\doibase 10.1103/PhysRevLett.119.180506} {\bibfield  {journal}
  {\bibinfo  {journal} {Phys. Rev. Lett.}\ }\textbf {\bibinfo {volume} {119}},\
  \bibinfo {pages} {180506} (\bibinfo {year} {2017}{\natexlab{b}})}\BibitemShut
  {NoStop}%
\bibitem [{\citenamefont {Paulsen}(2002)}]{paulsen_2002}%
  \BibitemOpen
  \bibfield  {author} {\bibinfo {author} {\bibfnamefont {V.}~\bibnamefont
  {Paulsen}},\ }\href@noop {} {\emph {\bibinfo {title} {Completely {{Bounded
  Maps}} and {{Operator Algebras}}}}}\ (\bibinfo  {publisher} {{Cambridge
  University Press}},\ \bibinfo {year} {2002})\BibitemShut {NoStop}%
\bibitem [{\citenamefont {Watrous}(2004)}]{watrous_2004}%
  \BibitemOpen
  \bibfield  {author} {\bibinfo {author} {\bibfnamefont {J.}~\bibnamefont
  {Watrous}},\ }\href {http://arxiv.org/abs/quant-ph/0411077} {\  (\bibinfo
  {year} {2004})},\ \Eprint {http://arxiv.org/abs/quant-ph/0411077}
  {arXiv:quant-ph/0411077} \BibitemShut {NoStop}%
\bibitem [{\citenamefont {Regula}\ \emph
  {et~al.}(2018{\natexlab{a}})\citenamefont {Regula}, \citenamefont {Fang},
  \citenamefont {Wang},\ and\ \citenamefont {Adesso}}]{regula_2017}%
  \BibitemOpen
  \bibfield  {author} {\bibinfo {author} {\bibfnamefont {B.}~\bibnamefont
  {Regula}}, \bibinfo {author} {\bibfnamefont {K.}~\bibnamefont {Fang}},
  \bibinfo {author} {\bibfnamefont {X.}~\bibnamefont {Wang}}, \ and\ \bibinfo
  {author} {\bibfnamefont {G.}~\bibnamefont {Adesso}},\ }\href {\doibase
  10.1103/PhysRevLett.121.010401} {\bibfield  {journal} {\bibinfo  {journal}
  {Phys. Rev. Lett.}\ }\textbf {\bibinfo {volume} {121}},\ \bibinfo {pages}
  {010401} (\bibinfo {year} {2018}{\natexlab{a}})}\BibitemShut {NoStop}%
\bibitem [{Note1()}]{Note1}%
  \BibitemOpen
  \bibinfo {note} {Strictly speaking,~\cite [Thm. 10]{regula_2018} is obtained
  for sets of \protect \textit {normalised} operators; it is easy to notice,
  however, that the proof does not rely on normalisation and the Theorem
  applies in full generality also for unnormalised sets of operators, such as
  $\protect \mathcal {Q}_m$ in our proof.}\BibitemShut {Stop}%
\bibitem [{\citenamefont {Rudolph}(2005)}]{rudolph_2005}%
  \BibitemOpen
  \bibfield  {author} {\bibinfo {author} {\bibfnamefont {O.}~\bibnamefont
  {Rudolph}},\ }\href {\doibase 10.1007/s11128-005-5664-1} {\bibfield
  {journal} {\bibinfo  {journal} {Quant. Inf. Proc.}\ }\textbf {\bibinfo
  {volume} {4}},\ \bibinfo {pages} {219} (\bibinfo {year} {2005})}\BibitemShut
  {NoStop}%
\bibitem [{\citenamefont {Johnston}\ and\ \citenamefont
  {Kribs}(2015)}]{johnston_2015}%
  \BibitemOpen
  \bibfield  {author} {\bibinfo {author} {\bibfnamefont {N.}~\bibnamefont
  {Johnston}}\ and\ \bibinfo {author} {\bibfnamefont {D.~W.}\ \bibnamefont
  {Kribs}},\ }\href {http://arxiv.org/abs/1304.2328} {\bibfield  {journal}
  {\bibinfo  {journal} {Houst. J. Math.}\ ,\ \bibinfo {pages} {831}} (\bibinfo
  {year} {2015})},\ \Eprint {http://arxiv.org/abs/1304.2328} {arXiv:1304.2328}
  \BibitemShut {NoStop}%
\bibitem [{\citenamefont {Shimony}(1995)}]{shimony_1995}%
  \BibitemOpen
  \bibfield  {author} {\bibinfo {author} {\bibfnamefont {A.}~\bibnamefont
  {Shimony}},\ }\href {\doibase 10.1111/j.1749-6632.1995.tb39008.x} {\bibfield
  {journal} {\bibinfo  {journal} {Ann. NY Ac.}\ }\textbf {\bibinfo {volume}
  {755}},\ \bibinfo {pages} {675} (\bibinfo {year} {1995})}\BibitemShut
  {NoStop}%
\bibitem [{\citenamefont {Vidal}\ \emph {et~al.}(2000)\citenamefont {Vidal},
  \citenamefont {Jonathan},\ and\ \citenamefont {Nielsen}}]{vidal_2000-1}%
  \BibitemOpen
  \bibfield  {author} {\bibinfo {author} {\bibfnamefont {G.}~\bibnamefont
  {Vidal}}, \bibinfo {author} {\bibfnamefont {D.}~\bibnamefont {Jonathan}}, \
  and\ \bibinfo {author} {\bibfnamefont {M.~A.}\ \bibnamefont {Nielsen}},\
  }\href {\doibase 10.1103/PhysRevA.62.012304} {\bibfield  {journal} {\bibinfo
  {journal} {Phys. Rev. A}\ }\textbf {\bibinfo {volume} {62}},\ \bibinfo
  {pages} {012304} (\bibinfo {year} {2000})}\BibitemShut {NoStop}%
\bibitem [{\citenamefont {Nielsen}(1999)}]{nielsen_1999}%
  \BibitemOpen
  \bibfield  {author} {\bibinfo {author} {\bibfnamefont {M.~A.}\ \bibnamefont
  {Nielsen}},\ }\href {\doibase 10.1103/PhysRevLett.83.436} {\bibfield
  {journal} {\bibinfo  {journal} {Phys. Rev. Lett.}\ }\textbf {\bibinfo
  {volume} {83}},\ \bibinfo {pages} {436} (\bibinfo {year} {1999})}\BibitemShut
  {NoStop}%
\bibitem [{\citenamefont {Ishizaka}\ and\ \citenamefont
  {Plenio}(2005)}]{ishizaka_2005}%
  \BibitemOpen
  \bibfield  {author} {\bibinfo {author} {\bibfnamefont {S.}~\bibnamefont
  {Ishizaka}}\ and\ \bibinfo {author} {\bibfnamefont {M.~B.}\ \bibnamefont
  {Plenio}},\ }\href {\doibase 10.1103/PhysRevA.71.052303} {\bibfield
  {journal} {\bibinfo  {journal} {Phys. Rev. A}\ }\textbf {\bibinfo {volume}
  {71}},\ \bibinfo {pages} {052303} (\bibinfo {year} {2005})}\BibitemShut
  {NoStop}%
\bibitem [{\citenamefont {Lobo}\ \emph {et~al.}(1998)\citenamefont {Lobo},
  \citenamefont {Vandenberghe}, \citenamefont {Boyd},\ and\ \citenamefont
  {Lebret}}]{lobo_1998}%
  \BibitemOpen
  \bibfield  {author} {\bibinfo {author} {\bibfnamefont {M.~S.}\ \bibnamefont
  {Lobo}}, \bibinfo {author} {\bibfnamefont {L.}~\bibnamefont {Vandenberghe}},
  \bibinfo {author} {\bibfnamefont {S.}~\bibnamefont {Boyd}}, \ and\ \bibinfo
  {author} {\bibfnamefont {H.}~\bibnamefont {Lebret}},\ }\href {\doibase
  10.1016/S0024-3795(98)10032-0} {\bibfield  {journal} {\bibinfo  {journal}
  {Linear Algebra and its Applications}\ }\textbf {\bibinfo {volume} {284}},\
  \bibinfo {pages} {193} (\bibinfo {year} {1998})}\BibitemShut {NoStop}%
\bibitem [{\citenamefont {Takagi}\ \emph {et~al.}(2019)\citenamefont {Takagi},
  \citenamefont {Regula}, \citenamefont {Bu}, \citenamefont {Liu},\ and\
  \citenamefont {Adesso}}]{takagi_2019-2}%
  \BibitemOpen
  \bibfield  {author} {\bibinfo {author} {\bibfnamefont {R.}~\bibnamefont
  {Takagi}}, \bibinfo {author} {\bibfnamefont {B.}~\bibnamefont {Regula}},
  \bibinfo {author} {\bibfnamefont {K.}~\bibnamefont {Bu}}, \bibinfo {author}
  {\bibfnamefont {Z.-W.}\ \bibnamefont {Liu}}, \ and\ \bibinfo {author}
  {\bibfnamefont {G.}~\bibnamefont {Adesso}},\ }\href {\doibase
  10.1103/PhysRevLett.122.140402} {\bibfield  {journal} {\bibinfo  {journal}
  {Phys. Rev. Lett.}\ }\textbf {\bibinfo {volume} {122}},\ \bibinfo {pages}
  {140402} (\bibinfo {year} {2019})}\BibitemShut {NoStop}%
\bibitem [{\citenamefont {Bae}\ \emph {et~al.}(2019)\citenamefont {Bae},
  \citenamefont {Chruściński},\ and\ \citenamefont {Piani}}]{bae_2019}%
  \BibitemOpen
  \bibfield  {author} {\bibinfo {author} {\bibfnamefont {J.}~\bibnamefont
  {Bae}}, \bibinfo {author} {\bibfnamefont {D.}~\bibnamefont {Chruściński}},
  \ and\ \bibinfo {author} {\bibfnamefont {M.}~\bibnamefont {Piani}},\ }\href
  {\doibase 10.1103/PhysRevLett.122.140404} {\bibfield  {journal} {\bibinfo
  {journal} {Phys. Rev. Lett.}\ }\textbf {\bibinfo {volume} {122}},\ \bibinfo
  {pages} {140404} (\bibinfo {year} {2019})}\BibitemShut {NoStop}%
\bibitem [{\citenamefont {Horodecki}\ \emph {et~al.}(1999)\citenamefont
  {Horodecki}, \citenamefont {Horodecki},\ and\ \citenamefont
  {Horodecki}}]{horodecki_1999-1}%
  \BibitemOpen
  \bibfield  {author} {\bibinfo {author} {\bibfnamefont {M.}~\bibnamefont
  {Horodecki}}, \bibinfo {author} {\bibfnamefont {P.}~\bibnamefont
  {Horodecki}}, \ and\ \bibinfo {author} {\bibfnamefont {R.}~\bibnamefont
  {Horodecki}},\ }\href {\doibase 10.1103/PhysRevA.60.1888} {\bibfield
  {journal} {\bibinfo  {journal} {Phys. Rev. A}\ }\textbf {\bibinfo {volume}
  {60}},\ \bibinfo {pages} {1888} (\bibinfo {year} {1999})}\BibitemShut
  {NoStop}%
\bibitem [{\citenamefont {Verstraete}\ and\ \citenamefont
  {Verschelde}(2003)}]{verstraete_2003-2}%
  \BibitemOpen
  \bibfield  {author} {\bibinfo {author} {\bibfnamefont {F.}~\bibnamefont
  {Verstraete}}\ and\ \bibinfo {author} {\bibfnamefont {H.}~\bibnamefont
  {Verschelde}},\ }\href {\doibase 10.1103/PhysRevLett.90.097901} {\bibfield
  {journal} {\bibinfo  {journal} {Phys. Rev. Lett.}\ }\textbf {\bibinfo
  {volume} {90}},\ \bibinfo {pages} {097901} (\bibinfo {year}
  {2003})}\BibitemShut {NoStop}%
\bibitem [{\citenamefont {Liu}\ \emph {et~al.}(2019)\citenamefont {Liu},
  \citenamefont {Bu},\ and\ \citenamefont {Takagi}}]{liu_2019}%
  \BibitemOpen
  \bibfield  {author} {\bibinfo {author} {\bibfnamefont {Z.-W.}\ \bibnamefont
  {Liu}}, \bibinfo {author} {\bibfnamefont {K.}~\bibnamefont {Bu}}, \ and\
  \bibinfo {author} {\bibfnamefont {R.}~\bibnamefont {Takagi}},\ }\href
  {\doibase 10.1103/PhysRevLett.123.020401} {\bibfield  {journal} {\bibinfo
  {journal} {Phys. Rev. Lett.}\ }\textbf {\bibinfo {volume} {123}},\ \bibinfo
  {pages} {020401} (\bibinfo {year} {2019})}\BibitemShut {NoStop}%
\bibitem [{\citenamefont {Lami}\ and\ \citenamefont
  {Huber}(2016)}]{lami_2016-2}%
  \BibitemOpen
  \bibfield  {author} {\bibinfo {author} {\bibfnamefont {L.}~\bibnamefont
  {Lami}}\ and\ \bibinfo {author} {\bibfnamefont {M.}~\bibnamefont {Huber}},\
  }\href {\doibase 10.1063/1.4962339} {\bibfield  {journal} {\bibinfo
  {journal} {Journal of Mathematical Physics}\ }\textbf {\bibinfo {volume}
  {57}},\ \bibinfo {pages} {092201} (\bibinfo {year} {2016})}\BibitemShut
  {NoStop}%
\bibitem [{\citenamefont {Baumgratz}\ \emph {et~al.}(2014)\citenamefont
  {Baumgratz}, \citenamefont {Cramer},\ and\ \citenamefont
  {Plenio}}]{baumgratz_2014}%
  \BibitemOpen
  \bibfield  {author} {\bibinfo {author} {\bibfnamefont {T.}~\bibnamefont
  {Baumgratz}}, \bibinfo {author} {\bibfnamefont {M.}~\bibnamefont {Cramer}}, \
  and\ \bibinfo {author} {\bibfnamefont {M.~B.}\ \bibnamefont {Plenio}},\
  }\href {\doibase 10.1103/PhysRevLett.113.140401} {\bibfield  {journal}
  {\bibinfo  {journal} {Phys. Rev. Lett.}\ }\textbf {\bibinfo {volume} {113}},\
  \bibinfo {pages} {140401} (\bibinfo {year} {2014})}\BibitemShut {NoStop}%
\bibitem [{\citenamefont {Streltsov}\ \emph {et~al.}(2017)\citenamefont
  {Streltsov}, \citenamefont {Adesso},\ and\ \citenamefont
  {Plenio}}]{streltsov_2017}%
  \BibitemOpen
  \bibfield  {author} {\bibinfo {author} {\bibfnamefont {A.}~\bibnamefont
  {Streltsov}}, \bibinfo {author} {\bibfnamefont {G.}~\bibnamefont {Adesso}}, \
  and\ \bibinfo {author} {\bibfnamefont {M.~B.}\ \bibnamefont {Plenio}},\
  }\href {\doibase 10.1103/RevModPhys.89.041003} {\bibfield  {journal}
  {\bibinfo  {journal} {Rev. Mod. Phys.}\ }\textbf {\bibinfo {volume} {89}},\
  \bibinfo {pages} {041003} (\bibinfo {year} {2017})}\BibitemShut {NoStop}%
\bibitem [{\citenamefont {Winter}\ and\ \citenamefont
  {Yang}(2016)}]{winter_2016}%
  \BibitemOpen
  \bibfield  {author} {\bibinfo {author} {\bibfnamefont {A.}~\bibnamefont
  {Winter}}\ and\ \bibinfo {author} {\bibfnamefont {D.}~\bibnamefont {Yang}},\
  }\href {\doibase 10.1103/PhysRevLett.116.120404} {\bibfield  {journal}
  {\bibinfo  {journal} {Phys. Rev. Lett.}\ }\textbf {\bibinfo {volume} {116}},\
  \bibinfo {pages} {120404} (\bibinfo {year} {2016})}\BibitemShut {NoStop}%
\bibitem [{\citenamefont {Chitambar}\ and\ \citenamefont
  {Gour}(2016)}]{chitambar_2016}%
  \BibitemOpen
  \bibfield  {author} {\bibinfo {author} {\bibfnamefont {E.}~\bibnamefont
  {Chitambar}}\ and\ \bibinfo {author} {\bibfnamefont {G.}~\bibnamefont
  {Gour}},\ }\href {\doibase 10.1103/PhysRevLett.117.030401} {\bibfield
  {journal} {\bibinfo  {journal} {Phys. Rev. Lett.}\ }\textbf {\bibinfo
  {volume} {117}},\ \bibinfo {pages} {030401} (\bibinfo {year}
  {2016})}\BibitemShut {NoStop}%
\bibitem [{\citenamefont {Zhu}\ \emph {et~al.}(2017)\citenamefont {Zhu},
  \citenamefont {Hayashi},\ and\ \citenamefont {Chen}}]{zhu_2017}%
  \BibitemOpen
  \bibfield  {author} {\bibinfo {author} {\bibfnamefont {H.}~\bibnamefont
  {Zhu}}, \bibinfo {author} {\bibfnamefont {M.}~\bibnamefont {Hayashi}}, \ and\
  \bibinfo {author} {\bibfnamefont {L.}~\bibnamefont {Chen}},\ }\href {\doibase
  10.1088/1751-8121/aa8ffc} {\bibfield  {journal} {\bibinfo  {journal} {J.
  Phys. A: Math. Theor.}\ }\textbf {\bibinfo {volume} {50}},\ \bibinfo {pages}
  {475303} (\bibinfo {year} {2017})}\BibitemShut {NoStop}%
\bibitem [{\citenamefont {Vandenberghe}\ and\ \citenamefont
  {Boyd}(1996)}]{vandenberghe_1996}%
  \BibitemOpen
  \bibfield  {author} {\bibinfo {author} {\bibfnamefont {L.}~\bibnamefont
  {Vandenberghe}}\ and\ \bibinfo {author} {\bibfnamefont {S.}~\bibnamefont
  {Boyd}},\ }\href {\doibase 10.1137/1038003} {\bibfield  {journal} {\bibinfo
  {journal} {SIAM Rev.}\ }\textbf {\bibinfo {volume} {38}},\ \bibinfo {pages}
  {49} (\bibinfo {year} {1996})}\BibitemShut {NoStop}%
\bibitem [{\citenamefont {Hiroshima}\ and\ \citenamefont
  {Hayashi}(2004)}]{hiroshima_2004}%
  \BibitemOpen
  \bibfield  {author} {\bibinfo {author} {\bibfnamefont {T.}~\bibnamefont
  {Hiroshima}}\ and\ \bibinfo {author} {\bibfnamefont {M.}~\bibnamefont
  {Hayashi}},\ }\href {\doibase 10.1103/PhysRevA.70.030302} {\bibfield
  {journal} {\bibinfo  {journal} {Phys. Rev. A}\ }\textbf {\bibinfo {volume}
  {70}},\ \bibinfo {pages} {030302} (\bibinfo {year} {2004})}\BibitemShut
  {NoStop}%
\bibitem [{\citenamefont {Regula}\ \emph
  {et~al.}(2018{\natexlab{b}})\citenamefont {Regula}, \citenamefont {Lami},\
  and\ \citenamefont {Streltsov}}]{regula_2018-1}%
  \BibitemOpen
  \bibfield  {author} {\bibinfo {author} {\bibfnamefont {B.}~\bibnamefont
  {Regula}}, \bibinfo {author} {\bibfnamefont {L.}~\bibnamefont {Lami}}, \ and\
  \bibinfo {author} {\bibfnamefont {A.}~\bibnamefont {Streltsov}},\ }\href
  {\doibase 10.1103/PhysRevA.98.052329} {\bibfield  {journal} {\bibinfo
  {journal} {Phys. Rev. A}\ }\textbf {\bibinfo {volume} {98}},\ \bibinfo
  {pages} {052329} (\bibinfo {year} {2018}{\natexlab{b}})}\BibitemShut
  {NoStop}%
\bibitem [{\citenamefont {Streltsov}\ \emph {et~al.}(2010)\citenamefont
  {Streltsov}, \citenamefont {Kampermann},\ and\ \citenamefont
  {Bruß}}]{streltsov_2010}%
  \BibitemOpen
  \bibfield  {author} {\bibinfo {author} {\bibfnamefont {A.}~\bibnamefont
  {Streltsov}}, \bibinfo {author} {\bibfnamefont {H.}~\bibnamefont
  {Kampermann}}, \ and\ \bibinfo {author} {\bibfnamefont {D.}~\bibnamefont
  {Bruß}},\ }\href {\doibase 10.1088/1367-2630/12/12/123004} {\bibfield
  {journal} {\bibinfo  {journal} {New J. Phys.}\ }\textbf {\bibinfo {volume}
  {12}},\ \bibinfo {pages} {123004} (\bibinfo {year} {2010})}\BibitemShut
  {NoStop}%
\bibitem [{\citenamefont {Wang}\ and\ \citenamefont {Duan}(2016)}]{wang_2016}%
  \BibitemOpen
  \bibfield  {author} {\bibinfo {author} {\bibfnamefont {X.}~\bibnamefont
  {Wang}}\ and\ \bibinfo {author} {\bibfnamefont {R.}~\bibnamefont {Duan}},\
  }\href {\doibase 10.1103/PhysRevA.94.050301} {\bibfield  {journal} {\bibinfo
  {journal} {Phys. Rev. A}\ }\textbf {\bibinfo {volume} {94}},\ \bibinfo
  {pages} {050301} (\bibinfo {year} {2016})}\BibitemShut {NoStop}%
\bibitem [{\citenamefont {Werner}\ and\ \citenamefont
  {Holevo}(2002)}]{werner_2002}%
  \BibitemOpen
  \bibfield  {author} {\bibinfo {author} {\bibfnamefont {R.~F.}\ \bibnamefont
  {Werner}}\ and\ \bibinfo {author} {\bibfnamefont {A.~S.}\ \bibnamefont
  {Holevo}},\ }\href {\doibase 10.1063/1.1498491} {\bibfield  {journal}
  {\bibinfo  {journal} {Journal of Mathematical Physics}\ }\textbf {\bibinfo
  {volume} {43}},\ \bibinfo {pages} {4353} (\bibinfo {year}
  {2002})}\BibitemShut {NoStop}%
\bibitem [{\citenamefont {Grudka}\ \emph {et~al.}(2010)\citenamefont {Grudka},
  \citenamefont {Horodecki},\ and\ \citenamefont {Pankowski}}]{grudka_2010}%
  \BibitemOpen
  \bibfield  {author} {\bibinfo {author} {\bibfnamefont {A.}~\bibnamefont
  {Grudka}}, \bibinfo {author} {\bibfnamefont {M.}~\bibnamefont {Horodecki}}, \
  and\ \bibinfo {author} {\bibfnamefont {{\L}.}~\bibnamefont {Pankowski}},\
  }\href {\doibase 10.1088/1751-8113/43/42/425304} {\bibfield  {journal}
  {\bibinfo  {journal} {J. Phys. A: Math. Theor.}\ }\textbf {\bibinfo {volume}
  {43}},\ \bibinfo {pages} {425304} (\bibinfo {year} {2010})}\BibitemShut
  {NoStop}%
\bibitem [{\citenamefont {Gurvits}(2003)}]{gurvits_2003}%
  \BibitemOpen
  \bibfield  {author} {\bibinfo {author} {\bibfnamefont {L.}~\bibnamefont
  {Gurvits}},\ }in\ \href {\doibase 10.1145/780542.780545} {\emph {\bibinfo
  {booktitle} {Proceedings of the {{Thirty}}-Fifth {{Annual ACM Symposium}} on
  {{Theory}} of {{Computing}}}}},\ \bibinfo {series and number} {{{STOC}} '03}\
  (\bibinfo  {publisher} {{ACM}},\ \bibinfo {address} {New York, NY, USA},\
  \bibinfo {year} {2003})\ pp.\ \bibinfo {pages} {10--19}\BibitemShut {NoStop}%
\bibitem [{\citenamefont {Harrow}\ and\ \citenamefont
  {Montanaro}(2010)}]{harrow_2010}%
  \BibitemOpen
  \bibfield  {author} {\bibinfo {author} {\bibfnamefont {A.~W.}\ \bibnamefont
  {Harrow}}\ and\ \bibinfo {author} {\bibfnamefont {A.}~\bibnamefont
  {Montanaro}},\ }in\ \href {\doibase 10.1109/FOCS.2010.66} {\emph {\bibinfo
  {booktitle} {2010 {{IEEE}} 51st {{Annual Symposium}} on {{Foundations}} of
  {{Computer Science}}}}}\ (\bibinfo {year} {2010})\ pp.\ \bibinfo {pages}
  {633--642}\BibitemShut {NoStop}%
\bibitem [{\citenamefont {Acin}\ \emph {et~al.}(2003)\citenamefont {Acin},
  \citenamefont {Vidal},\ and\ \citenamefont {Cirac}}]{acin_2003}%
  \BibitemOpen
  \bibfield  {author} {\bibinfo {author} {\bibfnamefont {A.}~\bibnamefont
  {Acin}}, \bibinfo {author} {\bibfnamefont {G.}~\bibnamefont {Vidal}}, \ and\
  \bibinfo {author} {\bibfnamefont {J.~I.}\ \bibnamefont {Cirac}},\ }\href
  {http://dl.acm.org/citation.cfm?id=2011508.2011513} {\bibfield  {journal}
  {\bibinfo  {journal} {Quantum Info Comput}\ }\textbf {\bibinfo {volume}
  {3}},\ \bibinfo {pages} {55} (\bibinfo {year} {2003})}\BibitemShut {NoStop}%
\bibitem [{\citenamefont {Parthasarathy}(2004)}]{parthasarathy_2004}%
  \BibitemOpen
  \bibfield  {author} {\bibinfo {author} {\bibfnamefont {K.~R.}\ \bibnamefont
  {Parthasarathy}},\ }\href {\doibase 10.1007/BF02829441} {\bibfield  {journal}
  {\bibinfo  {journal} {Proc Math Sci}\ }\textbf {\bibinfo {volume} {114}},\
  \bibinfo {pages} {365} (\bibinfo {year} {2004})}\BibitemShut {NoStop}%
\bibitem [{\citenamefont {Johnston}(2013)}]{johnston_2013}%
  \BibitemOpen
  \bibfield  {author} {\bibinfo {author} {\bibfnamefont {N.}~\bibnamefont
  {Johnston}},\ }\href {\doibase 10.1103/PhysRevA.87.064302} {\bibfield
  {journal} {\bibinfo  {journal} {Phys. Rev. A}\ }\textbf {\bibinfo {volume}
  {87}},\ \bibinfo {pages} {064302} (\bibinfo {year} {2013})}\BibitemShut
  {NoStop}%
\bibitem [{\citenamefont {Harrow}\ and\ \citenamefont
  {Montanaro}(2013)}]{harrow_2013}%
  \BibitemOpen
  \bibfield  {author} {\bibinfo {author} {\bibfnamefont {A.~W.}\ \bibnamefont
  {Harrow}}\ and\ \bibinfo {author} {\bibfnamefont {A.}~\bibnamefont
  {Montanaro}},\ }\href {\doibase 10.1145/2432622.2432625} {\bibfield
  {journal} {\bibinfo  {journal} {J ACM}\ }\textbf {\bibinfo {volume} {60}},\
  \bibinfo {pages} {3:1} (\bibinfo {year} {2013})}\BibitemShut {NoStop}%
\bibitem [{\citenamefont {Montanaro}(2013)}]{montanaro_2013}%
  \BibitemOpen
  \bibfield  {author} {\bibinfo {author} {\bibfnamefont {A.}~\bibnamefont
  {Montanaro}},\ }\href {\doibase 10.1007/s00220-013-1680-7} {\bibfield
  {journal} {\bibinfo  {journal} {Commun. Math. Phys.}\ }\textbf {\bibinfo
  {volume} {319}},\ \bibinfo {pages} {535} (\bibinfo {year}
  {2013})}\BibitemShut {NoStop}%
\bibitem [{\citenamefont {Lancien}\ and\ \citenamefont
  {Winter}(2017)}]{lancien_2017}%
  \BibitemOpen
  \bibfield  {author} {\bibinfo {author} {\bibfnamefont {C.}~\bibnamefont
  {Lancien}}\ and\ \bibinfo {author} {\bibfnamefont {A.}~\bibnamefont
  {Winter}},\ }\href {\doibase 10.1063/1.5003633} {\bibfield  {journal}
  {\bibinfo  {journal} {Journal of Mathematical Physics}\ }\textbf {\bibinfo
  {volume} {58}},\ \bibinfo {pages} {092203} (\bibinfo {year}
  {2017})}\BibitemShut {NoStop}%
\bibitem [{\citenamefont {Zhao}\ \emph {et~al.}(2019)\citenamefont {Zhao},
  \citenamefont {Liu}, \citenamefont {Yuan}, \citenamefont {Chitambar},\ and\
  \citenamefont {Winter}}]{zhao_2019}%
  \BibitemOpen
  \bibfield  {author} {\bibinfo {author} {\bibfnamefont {Q.}~\bibnamefont
  {Zhao}}, \bibinfo {author} {\bibfnamefont {Y.}~\bibnamefont {Liu}}, \bibinfo
  {author} {\bibfnamefont {X.}~\bibnamefont {Yuan}}, \bibinfo {author}
  {\bibfnamefont {E.}~\bibnamefont {Chitambar}}, \ and\ \bibinfo {author}
  {\bibfnamefont {A.}~\bibnamefont {Winter}},\ }\href
  {http://doi.org/10.1109/TIT.2019.2911102} {\bibfield  {journal} {\bibinfo
  {journal} {IEEE Trans. Inf. Theory}\ }\textbf {\bibinfo {volume} {1}},\
  \bibinfo {pages} {1} (\bibinfo {year} {2019})}\BibitemShut {NoStop}%
\bibitem [{\citenamefont {Lami}\ \emph {et~al.}(2019)\citenamefont {Lami},
  \citenamefont {Regula},\ and\ \citenamefont {Adesso}}]{lami_2019-1}%
  \BibitemOpen
  \bibfield  {author} {\bibinfo {author} {\bibfnamefont {L.}~\bibnamefont
  {Lami}}, \bibinfo {author} {\bibfnamefont {B.}~\bibnamefont {Regula}}, \ and\
  \bibinfo {author} {\bibfnamefont {G.}~\bibnamefont {Adesso}},\ }\href
  {\doibase 10.1103/PhysRevLett.122.150402} {\bibfield  {journal} {\bibinfo
  {journal} {Phys. Rev. Lett.}\ }\textbf {\bibinfo {volume} {122}},\ \bibinfo
  {pages} {150402} (\bibinfo {year} {2019})}\BibitemShut {NoStop}%
\bibitem [{\citenamefont {Åberg}(2013)}]{aberg_2013}%
  \BibitemOpen
  \bibfield  {author} {\bibinfo {author} {\bibfnamefont {J.}~\bibnamefont
  {Åberg}},\ }\href {https://doi.org/10.1038/ncomms2712} {\bibfield  {journal}
  {\bibinfo  {journal} {Nature Communications}\ }\textbf {\bibinfo {volume}
  {4}},\ \bibinfo {pages} {1925} (\bibinfo {year} {2013})}\BibitemShut
  {NoStop}%
\bibitem [{\citenamefont {Horodecki}\ and\ \citenamefont
  {Oppenheim}(2013)}]{horodecki_2013}%
  \BibitemOpen
  \bibfield  {author} {\bibinfo {author} {\bibfnamefont {M.}~\bibnamefont
  {Horodecki}}\ and\ \bibinfo {author} {\bibfnamefont {J.}~\bibnamefont
  {Oppenheim}},\ }\href {\doibase 10.1038/ncomms3059} {\bibfield  {journal}
  {\bibinfo  {journal} {Nat. Commun.}\ }\textbf {\bibinfo {volume} {4}},\
  \bibinfo {pages} {2059} (\bibinfo {year} {2013})}\BibitemShut {NoStop}%
\bibitem [{\citenamefont {Yunger~Halpern}\ and\ \citenamefont
  {Renes}(2016)}]{yungerhalpern_2016}%
  \BibitemOpen
  \bibfield  {author} {\bibinfo {author} {\bibfnamefont {N.}~\bibnamefont
  {Yunger~Halpern}}\ and\ \bibinfo {author} {\bibfnamefont {J.~M.}\
  \bibnamefont {Renes}},\ }\href {\doibase 10.1103/PhysRevE.93.022126}
  {\bibfield  {journal} {\bibinfo  {journal} {Phys. Rev. E}\ }\textbf {\bibinfo
  {volume} {93}},\ \bibinfo {pages} {022126} (\bibinfo {year}
  {2016})}\BibitemShut {NoStop}%
\end{thebibliography}%

%%%%%%%%%%%%%%%%%%%%%%%%%%%%%%%%%%%%%%%%%%%%%%%%%%%%%%%%%%%%%%%%%%%%%%%%%%%%%%%%%%%
%%%%%%%%%%%%%%%%%%%%%%%%%%%%%%%%%%%%%%%%%%%%%%%%%%%%%%%%%%%%%%%%%%%%%%%%%%%%%%%%%%%
%%%%%%%%%%%%%%%%%%%%%%%%%%%%%%%%%%%%%%%%%%%%%%%%%%%%%%%%%%%%%%%%%%%%%%%%%%%%%%%%%%%
%%%%%%%%%%%%%%%%%%%%%%%%%%%%%%%%%%%%%%%%%%%%%%%%%%%%%%%%%%%%%%%%%%%%%%%%%%%%%%%%%%%

\appendix

\section{Properties of the monotones $\TT{m}$}

First of all, we establish that the considered quantities are valid measures of entanglement. A common set of requirements that an entanglement monotone $M$ should obey is~\cite{vidal_2000}: faithfulness (i.e. $M(\rho) = 0$ iff $\rho \in \SEP$), convexity, and strong monotonicity (i.e. the requirement that $M(\rho) \geq \sum_i p_i M\left(\Lambda_i(\rho)\right)$ for any probabilistic protocol which applies an LOCC transformation $\Lambda_i$ to $\rho$ with probability $p_i$). By a direct application of Thm. 20 in~\cite{regula_2018}, we have the following.
\begin{proposition}\label{prop:monotone}
Let $\S \in \{\PPT, \PPTP, \SEP\}$, and consider the class of CPTP operations $\O$ such that $X \in \S \Rightarrow \Lambda(X) \in \S$. Then, for each $m\geq 1$,  $\TT{m}$ is faithful with respect to the set $\S$, convex, and strongly monotonic under the operations $\O$.\end{proposition}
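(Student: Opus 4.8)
The plan is to check that $\TTT{m}_\S$ meets the hypotheses of Thm.~20 of~\cite{regula_2018}, which bundles faithfulness, convexity, and strong monotonicity together for measures of this robustness/gauge type; the content is entirely in verifying these hypotheses for the cone $\S\*\*$ and the operations $\O$. Two of the three properties are immediate. \emph{Convexity} reads off the dual form in Eq.~\eqref{eq:family_Tm}: since $\TTT{m}_\S(\rho) = \max \{ \<\rho, W\> \mid W \in K_m \}$ for the fixed closed convex set $K_m \coloneqq \{ W \mid -\id \mleq W \mleq m\id,\ W \in -\S\* \}$ which does not depend on $\rho$, the measure is a pointwise supremum of linear functionals and hence convex. \emph{Faithfulness} reads off the primal form of Prop.~\ref{prop:proj_pos_neg}: for $m \geq 1$ the objective $m\Tr(\rho-X)_+ + \Tr(\rho-X)_-$ is a sum of non-negative terms, so it vanishes exactly when $\rho = X$ for some $X \in \S\*\*$; since $\S$ is compact, convex, and consists of unit-trace operators (so $0 \notin \S$), one has $\S\*\* \cap \HH_1 = \S$ by comparing traces, and therefore $\TTT{m}_\S(\rho) = 0 \iff \rho \in \S$.

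The crux is \emph{strong monotonicity}, which I would establish directly by exhibiting a good dual feasible point. Realise the protocol by completely positive maps $\Lambda_i$ with $\Lambda \coloneqq \sum_i \Lambda_i$ trace-preserving, so that outcome $i$ occurs with probability $p_i = \Tr\Lambda_i(\rho)$ and the normalised output is $\sigma_i = \Lambda_i(\rho)/p_i$. Let $W_i$ attain the maximum defining $\TTT{m}_\S(\sigma_i)$, i.e. $-\id \mleq W_i \mleq m\id$, $W_i \in -\S\*$, and $\<\sigma_i, W_i\> = \TTT{m}_\S(\sigma_i)$. Set $W \coloneqq \sum_i \Lambda_i^\dagger(W_i)$; then
\begin{equation}\begin{aligned}
  \<\rho, W\> &= \sum_i \<\Lambda_i(\rho), W_i\> = \sum_i p_i \<\sigma_i, W_i\>\\
  &= \sum_i p_i \TTT{m}_\S(\sigma_i),
\end{aligned}\end{equation}
so it suffices to show $W$ is feasible for $\TTT{m}_\S(\rho)$, which yields $\TTT{m}_\S(\rho) \geq \<\rho, W\>$ and hence the claim.

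Feasibility decomposes into two checks. The spectral bounds follow from positivity of each adjoint $\Lambda_i^\dagger$ together with $-\id \mleq W_i \mleq m\id$: applying $\Lambda_i^\dagger$ and summing gives $-\sum_i \Lambda_i^\dagger(\id) \mleq W \mleq m\sum_i \Lambda_i^\dagger(\id)$, and trace-preservation of $\Lambda$ means $\sum_i \Lambda_i^\dagger(\id) = \id$, so $-\id \mleq W \mleq m\id$. The cone membership $W \in -\S\*$ amounts to $\<W, Q\> \leq 0$ for every $Q \in \S$; computing $\<W, Q\> = \sum_i \<W_i, \Lambda_i(Q)\>$, each summand is non-positive as soon as $\Lambda_i(Q) \in \S\*\*$, because $W_i \in -\S\*$ pairs non-positively with the whole cone $\S\*\*$.

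The main obstacle is precisely this last implication: it is \emph{not} enough that the full channel $\Lambda$ preserves $\S$; one needs each selective branch to satisfy $\Lambda_i(\S) \subseteq \S\*\*$, i.e. to map free states to subnormalised free states. I would dispatch this by recording that for $\S \in \{\PPT, \PPTP, \SEP\}$ the free operations are, under their standard instrument-level definition, exactly those whose individual branches preserve the relevant cone (a completely positive submap preserving separability, or positivity of the partial transpose, up to normalisation sends $\S$ into $\S\*\*$). This branchwise preservation is the structural hypothesis packaged into Thm.~20 of~\cite{regula_2018}, so that the proposition follows as the direct application asserted.
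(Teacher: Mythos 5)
Your proof is correct, but it takes a genuinely more self-contained route than the paper: the paper's entire proof of this proposition is the single sentence ``By a direct application of Thm.~20 in~\cite{regula_2018}'', whereas you reconstruct the substance of that theorem for this particular family of measures. Your three verifications are sound: convexity as a pointwise supremum of linear functionals over the fixed feasible set in Eq.~\eqref{eq:family_Tm}; faithfulness from the primal form of Prop.~\ref{prop:proj_pos_neg} together with $\S\*\* \cap \HH_1 = \S$, which holds because each of $\PPT$, $\PPTP$, $\SEP$ is compact, convex, and consists of unit-trace operators (so the conic hull is closed and meets the unit-trace hyperplane exactly in $\S$); and strong monotonicity via the witness pullback $W = \sum_i \Lambda_i^\dagger(W_i)$, where positivity of the adjoints plus $\sum_i \Lambda_i^\dagger(\id) = \id$ yield $-\id \mleq W \mleq m\id$, and branchwise cone preservation yields $W \in -\S\*$. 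Your identification of the crux --- that the argument needs each selective branch to satisfy $\Lambda_i(\S) \subseteq \S\*\*$, not merely that the overall channel preserves $\S$ --- is exactly right, and it is precisely the hypothesis packaged into the instrument-level notion of free probabilistic operations assumed by the cited theorem; the proposition's statement, which defines $\O$ at the CPTP level, silently relies on this reading, since strong monotonicity is an instrument-level property and an $\S$-preserving channel can admit decompositions whose individual branches are not cone-preserving. For the LOCC case emphasised in the main text this hypothesis is automatic, since LOCC branches are separable trace-non-increasing CP maps and these preserve all three cones (via $[(A\otimes B)X(A\otimes B)^\dagger]^{T_B} = (A\otimes \bar{B})X^{T_B}(A\otimes \bar{B})^\dagger$). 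What your route buys is a fully explicit argument and a clear record of the hidden branchwise hypothesis; what the paper's citation buys is brevity and the generality of the external theorem, which covers arbitrary closed convex free sets and gauge-based quantifiers in one stroke.
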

The above establishes in particular that all of the measures are strong monotones under LOCC. Note, however, that $\TTT{m}_{\PPT}$ and $\TTT{m}_{\PPTP}$ are not faithful as entanglement measures, since they are zero for all PPT states.

The following result establishes a dual form for the measures.
\begingroup
\renewcommand{\thetheorem}{\ref{prop:proj_pos_neg}}
\begin{proposition}The measures $\TTT{m}_{\S}$ can be equivalently expressed as
\begin{equation}\begin{aligned}\label{eq:Tm_primal}
  \TTT{m}_{\S} (\rho) = \min \lset m \Tr\left(\rho-X\right)_+ + \Tr\left(\rho-X\right)_- \bar X \in \S\*\* \rset,
\end{aligned}\end{equation}
where $(\rho-X)_+$ (respectively, $(\rho-X)_-$) denotes the positive (negative) part of the Hermitian operator $\rho-X$.
\end{proposition}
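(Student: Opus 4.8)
The plan is to start from the claimed minimisation and collapse it onto the defining maximisation~\eqref{eq:family_Tm} through a single minimax exchange. The one ingredient from matrix analysis I would isolate first is the variational form of the weighted positive/negative part: for any Hermitian operator $Y$,
\[
  m\Tr(Y)_+ + \Tr(Y)_- = \max_{-\id \mleq V \mleq m\id}\ \langle Y, V\rangle,
\]
which I would justify by diagonalising $Y$, noting that the unitarily invariant constraint $-\id \mleq V \mleq m\id$ lets one take $V$ to commute with $Y$, and then choosing the eigenvalue $m$ on the positive eigenspace of $Y$ and $-1$ on the negative eigenspace.

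Applying this identity with $Y=\rho-X$ rewrites the right-hand side of the proposition as the minimax
\[
  \min_{X\in\S\*\*}\ \max_{-\id \mleq V \mleq m\id}\ \langle \rho-X, V\rangle ,
\]
and the second step is to exchange the two optimisations via Sion's minimax theorem. This is applicable because the objective $(X,V)\mapsto\langle\rho-X,V\rangle$ is bilinear, hence convex in $X$ and concave (indeed affine) in $V$, and continuous, the feasible set for $V$ is convex and compact, and $\S\*\*$ is convex. After the swap the inner problem is $\min_{X\in\S\*\*}\langle\rho,V\rangle-\langle X,V\rangle = \langle\rho,V\rangle-\sup_{X\in\S\*\*}\langle X,V\rangle$; since $\S\*\*$ is a cone containing $0$, this supremum equals $0$ precisely when $\langle X,V\rangle\le 0$ for every $X\in\S\*\*$ --- equivalently, since $\S\*\*$ is the closed conic hull of $\S$, when $\langle S,V\rangle\le 0$ for all $S\in\S$, i.e. $V\in-\S\*$ --- and is $+\infty$ otherwise. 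Thus the outer maximisation retains exactly the $V$ with $-\id \mleq V \mleq m\id$ and $V\in-\S\*$, and its value is $\TTT{m}_\S(\rho)$ as defined in~\eqref{eq:family_Tm}, completing the identification.

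The step needing the most care is the minimax exchange: because $\S\*\*$ is an unbounded cone, I would lean on the compactness of the $V$-side alone (which is what Sion's theorem permits) rather than on joint compactness, and I would record that the inner value function is allowed to take the value $-\infty$ on infeasible directions $V\notin-\S\*$, which is harmless since the outer maximisation simply discards them. The remaining pieces --- the variational identity for $(\cdot)_\pm$ and the collapse of the conic supremum to $\{0,+\infty\}$ --- are routine, which is presumably why the statement is relegated to the Appendix.
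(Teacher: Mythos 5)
Your proof is correct, and it shares with the paper's proof the key matrix-analytic ingredient: the variational identity $m \Tr(Y)_+ + \Tr(Y)_- = \max\lset \< Y, V\> \bar -\id \mleq V \mleq m\id \rset$, which the paper also uses for fixed $X$ (phrased as an inner Lagrange duality, with the primal feasible pair $A=(\rho-X)_+$, $B=(\rho-X)_-$ and the dual feasible point $W = m\{\rho\mgeq X\} - \{\rho\mleq X\}$). Where you genuinely differ is the mechanism and direction of the outer exchange. The paper starts from the maximisation \eqref{eq:family_Tm} and invokes strong Lagrange duality once to pass directly to $\min \lset m\Tr A + \Tr B \bar \rho - X = A - B,\; A,B \mgeq 0,\; X \in \S\*\* \rset$, then collapses the inner $(A,B)$-minimisation; you instead start from the claimed minimisation, rewrite it as $\min_{X \in \S\*\*} \max_{-\id \mleq V \mleq m\id} \<\rho - X, V\>$, swap via Sion's theorem (leaning, correctly, on compactness of the operator interval alone), and recover the constraint $V \in -\S\*$ as the support-function collapse of the cone $\S\*\*$. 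Your route is more self-contained at the exchange step --- the paper's appeal to strong Lagrange duality tacitly requires a constraint qualification, which does hold here but is never checked, whereas Sion's hypotheses are verified explicitly in your write-up, and your argument makes transparent exactly where the dual-cone constraint in \eqref{eq:family_Tm} comes from. What the paper's route buys is brevity and, packaged with strong duality, attainment of the minimum. Your argument as written identifies only the \emph{infimum} over $X$ with $\TTT{m}_\S(\rho)$, so to justify the word ``min'' you should add one line: the objective is coercive on $\S\*\*$ (indeed $m\Tr(\rho-X)_+ + \Tr(\rho-X)_- \geq \Tr X - 1$, and $\Tr X$ diverges along any unbounded sequence in each of the cones considered, since on them the trace is equivalent to the trace norm) and $\S\*\*$ is closed, so the infimum is attained. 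This is a cosmetic addendum, not a gap.
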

\begin{proof}
For any self-adjoint operator $X$, let $\{X\mgeq0\}$ (respectively, $\{X\mleq 0\}$) denote the orthogonal projection operator onto the span of the eigenvectors corresponding to non-negative (non-positive) eigenvalues of $X$. The positive and negative parts of $X$ are then given by $X_+ = \{X\mgeq0\} X \{X\mgeq0\}$ and $X_- = -\{X\mleq0\} X \{X\mleq0\}$, such that $X = X_+ - X_-$.

By strong Lagrange duality we have
\begin{equation}\begin{aligned}
  &\TTT{m}_{\S} (\rho) =\\ &\min \lset m \Tr A + \Tr B \bar \rho - X = A - B,\; A,B \mgeq 0,\; X \in \S\*\* \rset.
\end{aligned}\end{equation}
We will now show that for each feasible $X$, the optimal value of the optimisation problem
\begin{equation}\begin{aligned}
  \min \lset m \Tr A + \Tr B \bar \rho - X = A - B,\; A,B \mgeq 0 \rset
\end{aligned}\end{equation}
is given by $m \Tr\left(\rho-X\right)_+ + \Tr\left(\rho-X\right)_-$. To see this, note that on the one hand we can take $A = (\rho - X)_+$ and $B = (\rho - X)_-$, and on the other hand by strong Lagrange duality we have
\begin{equation}\begin{aligned}
&\min \lset m \Tr A + \Tr B \bar \rho - X = A - B,\; A,B \mgeq 0 \rset\\
=&\max \lset \<\rho - X, W\> \bar -\id \mleq W \mleq m\id \rset
\end{aligned}\end{equation}
for which $W = m\{\rho \mgeq X\} - \{\rho \mleq X\}$ is a feasible solution.
\end{proof}
\endgroup

We additionally establish an equality between the two $\PPT$-based monotones in the case $m=d-1$.
\begin{proposition}\label{prop:rob_ppt_equal}
For any state we have
\begin{equation}\begin{aligned}
  R^\DD_\ppt(\rho) = R^\DD_{\PPTP}(\rho).
\end{aligned}\end{equation}
\end{proposition}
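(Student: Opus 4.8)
The plan is to show that the two optimisation problems defining $R^\DD_\ppt(\rho)$ and $R^\DD_\PPTP(\rho)$ have \emph{identical feasible sets}, whence their optimal values coincide. One inequality is immediate from the set inclusion already noted in the excerpt: since $\PPTP$ is obtained from $\PPT$ by imposing the extra condition $X \mgeq 0$, we have $\PPTP \subseteq \PPT$, so the feasible region of the minimisation defining $R^\DD_\PPTP$ is contained in that of $R^\DD_\ppt$. Minimising the same objective $\lambda$ over a smaller set can only increase the optimum, giving $R^\DD_\ppt(\rho) \leq R^\DD_\PPTP(\rho)$.

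The reverse inequality is where the single idea of the proof lies: the positive semidefiniteness of the feasible witness $X$ is \emph{automatically enforced} by the robustness constraint itself, so the defining condition of $\PPTP$ is redundant here. Concretely, suppose $X \in \PPT$ and $\lambda \geq 0$ satisfy $\rho \mleq (\lambda+1)X$. Because $\rho$ is a density matrix, $\rho \mgeq 0$, and therefore $(\lambda+1)X \mgeq \rho \mgeq 0$; since $\lambda + 1 \geq 1 > 0$ this forces $X \mgeq 0$. Hence $X \in \PPT \cap \HH_+ = \PPTP$, so every pair $(\lambda, X)$ feasible for $R^\DD_\ppt(\rho)$ is already feasible for $R^\DD_\PPTP(\rho)$ with the \emph{same} value of $\lambda$. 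This yields $R^\DD_\PPTP(\rho) \leq R^\DD_\ppt(\rho)$, and combining the two bounds completes the argument. (Finiteness of both quantities is never in doubt, as the maximally mixed state is PPT and has full rank, so a sufficiently large $\lambda$ is always admissible; in any case the feasible-set equality handles the infeasible case uniformly.)

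I expect there to be no genuine obstacle: the entire content is the observation that $\rho \mleq (\lambda+1)X$ together with $\rho \mgeq 0$ makes the positivity constraint $X \mgeq 0$ redundant, so $R^\DD_\ppt$ and $R^\DD_\PPTP$ are literally optimising the same objective over the same set. The one point worth flagging is \emph{why this collapse is special to the robustness} (i.e.\ to $m = d-1$, via Prop.~\ref{prop:rob_for_d-1}) and does not extend to the full family $\TTT{m}_\ppt$ versus $\TTT{m}_\PPTP$: in the general $\TTT{m}$ problem the dual witness $W$ ranges over a two-sided operator interval rather than entering a one-sided domination constraint against the positive operator $\rho$, so no analogous free positivity is available and the two measures need not agree.
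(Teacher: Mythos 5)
Your proof is correct, and it takes a genuinely different --- and more elementary --- route than the paper's. You argue entirely on the primal side: since $\rho \mgeq 0$, the domination constraint $\rho \mleq (\lambda+1)X$ forces $(\lambda+1)X \mgeq 0$ and hence $X \mgeq 0$, so every $\PPT$-feasible pair $(\lambda, X)$ is automatically $\PPTP$-feasible; combined with the trivial inclusion $\PPTP \subseteq \PPT$, the two minimisations have literally identical feasible sets and thus equal values (no attainment issues arise). The paper instead works in the dual picture: it takes an optimal dual witness $W \in -\PPTPd$ for $R^\DD_{\PPTP}$, invokes the structure of the dual cone of $\PPTP$ to write $W = N + Q^{T_B}$ with $N, Q \mleq 0$, and observes that discarding the $N$ part gives $W' = W - N \in -\PPT\*$ which is still feasible ($W' \mgeq W \mgeq -\id$) and satisfies $\< \rho, W' \> \geq \< \rho, W \>$, yielding $R^\DD_\ppt(\rho) \geq R^\DD_{\PPTP}(\rho)$. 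Your version needs nothing beyond the primal definition and positivity of $\rho$, avoiding both the dual-cone decomposition and the implicit appeal to dual optimality/attainment; the paper's version, in exchange, produces witness-level information --- any optimal $\PPTP$ witness may be replaced by one of the form $Q^{T_B}$ with $Q \mleq 0$ --- which plugs directly into the dual formulations used throughout the rest of the paper. Your closing remark also correctly identifies why the collapse is special to the robustness: it is the one-sided domination against the positive operator $\rho$ that makes positivity of $X$ free, and no such mechanism exists in the primal form of $\TTT{m}_{\S}$ in Prop.~\ref{prop:proj_pos_neg} (though note the paper never exhibits an actual gap between $\TTT{m}_{\PPT}$ and $\TTT{m}_{\PPTP}$ for $m < d-1$; indeed they coincide on pure, isotropic, and maximally correlated states, so "need not agree" remains a conjecture rather than a established fact).
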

\begin{proof}
Let $W \in -\PPTPd$ be the optimal dual solution for $R^\DD_{\pptp}$, which means it satisfies $-\id \mleq W$ and $W = N + Q^{T_B}$ for $N, Q \mleq 0$. But then notice that $W' \coloneqq W - N \in -\PPT\*$ is also feasible, and we have $\< \rho, W' \> = \< \rho, W \> - \< \rho, N \> \geq \< \rho, W \>$, so in fact it suffices to optimise over $W \in -\PPT\*$.
\end{proof}

\begin{remark}
In~\cite{brandao_2005}, it was claimed that $\TTT{m-1}_\PPTP (\rho_f) = m f - 1$ for all $1 \leq m\leq d-1$ and $f \geq \frac1d$. One can see that this is incorrect by comparing it with the result of Thm.~\ref{thm:isotropic} which shows that $\TTT{m-1}_\PPTP (\rho_f) = \frac{(m-1)(df - 1)}{d-1}$.
\end{remark}

\begin{remark}A formula was given for $R^\DD_\PPTP$ in~\cite{brandao_2005} as
\begin{equation}\begin{aligned}
  R^\DD_\PPTP(\rho) = N(\rho) / \lambda_{\max}\left(\{\rho^{T_B}\mleq 0\}^{T_B}\right),
\end{aligned}\end{equation}
with $N(\rho)$ denoting the negativity. However, numerical counterexamples to this result can be readily constructed, and we find that this only provides a lower bound for the value of $R^\DD_\PPTP$ in general. For completeness, we give an explicit counterexample.
\end{remark}
Consider the symmetric and antisymmetric subspaces in $d=d_A=d_B=3$, spanned by the sets of mutually orthogonal vectors
\begin{equation}\begin{aligned}
  & \begin{cases} \ket{\psi_1} = \frac{1}{\sqrt{2}}\left(\ket{01} + \ket{10}\right)\\\ket{\psi_2} = \frac{1}{\sqrt{2}}\left(\ket{02} + \ket{20}\right)\\\ket{\psi_3} = \frac{1}{\sqrt{2}}\left(\ket{12} + \ket{21}\right)\end{cases}
  & \begin{cases} \ket{\alpha_1} = \frac{1}{\sqrt{2}}\left(\ket{01} - \ket{10}\right)\\\ket{\alpha_2} = \frac{1}{\sqrt{2}}\left(\ket{02} - \ket{20}\right)\\\ket{\alpha_3} = \frac{1}{\sqrt{2}}\left(\ket{12} - \ket{21}\right) \end{cases}
\end{aligned}\end{equation}
respectively.

Take the ansatz $\rho = \frac12\proj{\psi_1} + \frac12\proj{\psi_2}$. An explicit calculation gives $N(\rho) =  \frac{1}{2\sqrt{2}}$ and $\lambda_{\max}\left(\{\rho^{T_B}\mleq 0\}^{T_B}\right) = \frac{1}{2}$. Now, take the operator given by
\begin{equation}\begin{aligned}
  W = &\proj{\psi_1} + \proj{\psi_2} - \proj{\psi_3}\\ -& \proj{\alpha_1} - \proj{\alpha_2} + \proj{\alpha_3}\\ -& \proj{00} - \proj{11} - \proj{22},
\end{aligned}\end{equation}
whose partial transpose can be computed as $W^{T_B} = -3 \proj{w}$ with $\ket{w} = \frac{1}{\sqrt{3}}\left(\ket{00} - \ket{11} - \ket{22}\right)$. We then clearly have $W\mgeq-\id$ and $W^{T_B} \mleq 0$, which gives
\begin{equation}\begin{aligned}
  R_\PPTP(\rho) \geq \< \rho, W \> = 1 > \frac{1}{\sqrt{2}} = \frac{N(\rho)}{\lambda_{\max}\left(\{\rho^{T_B}\mleq 0\}^{T_B}\right)}.
\end{aligned}\end{equation}

We remark that, despite the lack of an exact analytical expression, the quantity $R^\DD_\PPTP$ can nevertheless be evaluated efficiently as a semidefinite program.

\begin{remark}
For completeness, we collect the results obtained in the manuscript which simplify the computation of $\TT{m}$ in several cases:
\begin{itemize}[leftmargin=*]
\item For any pure state, any $\S \in \{\SEP,\PPT,\PPTP\}$, and any integer $1 \leq m \leq d-1$, it holds that 
\begin{equation}\begin{aligned}
  \TTT{m-1}_\S(\psi) = \mnorm{\ket{\xi_\psi}}^2 - 1.
\end{aligned}\end{equation}
(See Thms. \ref{thm:Tm_pure_states} and \ref{thm:all_fid_equal_LOCC}.)

  \item For any isotropic state, any $\S \in \{\SEP,\PPT,\PPTP\}$, and any $1 \leq m \leq d-1$, it holds that
  \begin{equation}\begin{aligned}
     \TTT{m}_\S (\rho_f) = \begin{cases} 0 & f \leq \frac1d,\\ \frac{m(df - 1)}{d-1} & f \geq \frac1d.\end{cases} 
  \end{aligned}\end{equation}
  (See Thm.~\ref{thm:isotropic}.)

  \item For any maximally correlated state, any $\S \in \{\SEP,\PPT,\PPTP\}$, and any $1 \leq m \leq d-1$, it holds that
  \begin{equation}\begin{aligned}
    \TTT{m}_\S (\rho_\mc) = \TTT{m}_\I (\widetilde{\rho_\mc})
  \end{aligned}\end{equation}
  where $\rho_\mc = \sum_{i,j} \rho_{ij} \ketbra{ii}{jj}$, $\widetilde{\rho_\mc} = \sum_{i,j} \rho_{ij} \ketbra{i}{j}$, and $\I$ is the set of states diagonal in the given basis $\{\ket{i}\}$. (See Thm.~\ref{thm:max_corr_ppt})

  \item For any $\S \in \{\SEP,\PPT,\PPTP\}$, it holds that
  \begin{equation}\begin{aligned}
    \TTT{d-1}_\S (\rho) = R^\DD_\S (\rho).
  \end{aligned}\end{equation}
  (See Prop.~\ref{prop:rob_for_d-1}.)
\end{itemize}
\end{remark} 
\end{document}